      \newtheorem{assumption}{Assumption}[section]
      \newtheorem{theorem}{Theorem}[section]
    \newtheorem{lemma}{Lemma}[section]
    \newtheorem{remark}{Remark}[section]
\let\oldproofname=\proofname
\renewcommand{\proofname}{\rm\bf{\oldproofname}}
\renewenvironment{proof}[1][\proofname]{%
   \par\pushQED{\qed}\normalfont%
   \topsep6\p@\@plus6\p@\relax
   \trivlist\item[\hskip\labelsep\bfseries#1\@addpunct{.}]%
   \ignorespaces
}{%
   \popQED\endtrivlist\@endpefalse
}
\DeclareMathOperator{\Tr}{Tr}
\DeclareMathOperator{\col}{col}
\newcommand{\mbs}{\mathbf{s}}
\newcommand{\mba}{\mathbf{a}}
\newcommand{\mbf}{\mathbf{f}}
\newcommand{\mbe}{\mathbf{e}}
\newcommand{\mbF}{\mathbf{F}}
\newcommand{\mbA}{\mathbf{A}}
\newcommand{\mbSigma}{\pmb{\Sigma}}
\newcommand{\mbLambda}{\pmb{\Lambda}}
\newcommand{\mbsigma}{\pmb{\sigma}}
\newcommand{\mbmu}{\pmb{\mu}}
\newcommand{\mbepsilon}{\pmb{\epsilon}}
\newcommand{\mbzeta}{\pmb{\zeta}}
\newcommand{\mbz}{\mathbf{z}}
\newcommand{\BoxD}{\textsf{Box2D}}
\let\oldmb\mathbold
\protected\def\mathbold{\oldmb}
\newcommand{\bluecolor}{\color{blue}}
\newcommand*{\rom}[1]{\expandafter\@slowromancap\romannumeral #1@}
\begin{document}
\newcounter{keep_kde_plots}
\setcounter{keep_kde_plots}{1}
%


\title{Stochastic Optimal Control for Multivariable
Dynamical Systems Using Expectation Maximization}

\author{Prakash Mallick and Zhiyong Chen
\thanks{The authors are with the School of Electrical Engineering and Computing,
        University of Newcastle, Callaghan, NSW 2308, Australia. Emails: {\tt\small Prakash.Mallick@uon.edu.au, zhiyong.chen@newcastle.edu.au}}%
}

\markboth{{\textbf{SUBMITTED} TO IEEE TNNLS : SPECIAL ISSUE-Theory, Algorithms and Applications for Hybrid Intelligent Dynamic Optimization}}%
{Schaich \MakeLowercase{\textit{et al.}}: \thetitle}
%




\maketitle

\begin{abstract}                          

Trajectory optimization  is a fundamental stochastic optimal control problem. This paper deals with a  trajectory optimization approach for {dynamical systems subject to measurement noise that can be {fitted} into linear time-varying stochastic models. Exact/complete solutions to these kind of control problems have been deemed analytically intractable in literature because they come under the category of {Partially Observable {Markov} Decision Processes (POMDPs)}. Therefore, effective solutions with reasonable approximations are widely sought for.} {We propose a reformulation of stochastic control} 
in a reinforcement learning setting. This type of formulation assimilates the benefits of conventional optimal control procedure, with the advantages of maximum likelihood approaches. Finally, an iterative trajectory optimization paradigm {called as Stochastic Optimal} Control - Expectation Maximization (SOC-EM) is put-forth. This trajectory optimization procedure exhibits better performance in terms of reduction of cumulative cost-to-go which is proved both theoretically and empirically. Furthermore, we also provide novel theoretical work which is related to uniqueness of control parameter estimates. {Analysis of the control covariance matrix is presented, which handles stochasticity through
	efficiently balancing exploration and exploitation.}

\end{abstract}

\begin{IEEEkeywords}
Stochastic systems, optimal control, reinforcement learning, trajectory optimization, 
maximum likelihood,  expectation maximization 
\end{IEEEkeywords}

%
\IEEEpeerreviewmaketitle

\section{Introduction}
 
 In recent years, there has been a surge in the research activities related to inference and control of dynamical systems 
in not only the systems and control but also the artificial intelligence communities. The dynamical systems that make intelligent and optimal decisions 
under uncertainty have been formulated in the category of Markov decision process (MDP) \cite{rawlik2013stochastic}. 
{The trajectory optimization problem aims at designing control policies 
to generate trajectories for an MDP that minimizes some measure of performance. 
More applications have been seen in a wide variety of industrial  processes and robotics with the development of computers.} Researchers utilized stochastic optimal control (SOC) methodologies (see e.g., \cite{stengel1994optimal,kappen2012optimal})
to present a solution to an MDP. A specific type of SOC,  reinforcement learning, has exhibited great performance in handling control related tasks in noisy environment, as well as
generalizing the learnt policies to new behaviors through
experience  \cite{levine2016end,montgomery2016guided}. 

{Reinforcement learning is widely used for solving a MDP by optimizing an objective function
through dynamic programming that involves value iteration and policy iteration; see e.g., \cite{bertsekas1995dynamic}, \cite{puterman2014markov}.   
It can be broadly classified  into model-free and model-based categories. 
For instance, in a model-based setup, reward weighted regression was used to learn complex robot-motor motions in  \cite{kober2011reinforcement} and a variant of differential dynamic programming  to learn advanced robotic manipulation policies \cite{levine2016end}. 
	Model-based policy search has been used in trajectory optimization \cite{levine2014motor}, analytical policy gradients \cite{deisenroth2011pilco} and  information-theoretic approaches \cite{deisenroth2013survey}, etc. The typical methods include
iterative linear quadratic Gaussian (iLQG) approach \cite{tassa2012synthesis}, model predictive control (MPC) \cite{zhang2016learning}, path integral linear quadratic regulator (PI-LQR) \cite{chebotar2017combining}, and 
Bregman alternating direction method of multipliers (BADMM)  \cite{wang2014bregman}. These methods are well known in 
searching optimal parameters for a stochastic control policy by utilizing a quadratic cost-to-go  function and a linearized dynamic model in a closed form (after certain approximation).
 	
Especially for the trajectory optimization problem,  sophisticated  model-based techniques are easy to implement 
without suffering from slow convergence. One can refer to \cite{li2004iterative,levine2014learning} for more 
results in this line of research. On the contrary, model-free methods may suffer from slower trajectory optimization  because of reduced sampling efficiency.   Additionally, the research in \cite{levine2016end,chebotar2017combining} exploited adaptability of model-based methods to rapid changes in the environment, which 
is beneficial in dealing with uncertainty. These advantages motivate the research on new model-based 
reinforcement learning policies in this paper.  }

Due to the similarity between policy search and inference problems using a reinforcement learning objective, increased interests have been seen among statistical researchers who treat optimal control as maximum likelihood   inference. 
{A powerful tool known as expectation maximization (EM) has been  widely utilized for solving maximum likelihood problems
in two steps, i.e., guess of the missing data called latent variables and estimation of parameters that best describes the guess. 
It is an iterative process with the probability of guess increased in each iteration.}
The maximum likelihood technique has gained wide popularity in a broad variety of fields of applied statistics such as signal processing 
and dairy science  \cite{borran2002based,shumway2000time}. 
The EM approach has been utilized for robust estimation of linear dynamical systems in \cite{gibson2005robust} and identification of nonlinear state space models in \cite{schon2011system}. {However, there are rare EM based results available in the field of control, which brings another motivation of this paper to study an EM algorithm for SOC problems.}

{It is worth mentioning that a few early attempts at leveraging the concepts of maximum likelihood for solving a MDP can be found in model-free reinforcement learning. For example,  the early work in \cite{cooper2013method}   provides} an evidence of utilizing likelihoods and cost for solving inference problems. 
Probabilistic control and decision has been studied in \cite{neumann2011variational,ziebart2010modeling(a),ziebart2010modeling(b)}  to tackle SOC problems using the maximum entropy principle. The EM technique has been used in inference for optimal policies to maximize cumulative sum of cost for model-based and model-free learning in \cite{hoffman2009expectation} and \cite{toussaint2006probabilistic}, respectively.
However, these works did not substantially analyze the theoretical nature of the solutions.
Other related results include the concept of likelihood used for SOC design in a binary reward model-free setting \cite{toussaint2006probabilistic,dayan1997using} and the   exploitation of EM to weight the reward factors 
in robot control trajectories  \cite{vlassis2009learning,kober2009policy}.
{Nevertheless, EM has attracted widespread attention in model-free domain but not specifically in the model-based domain.}

{The aforementioned discussion has opened up curtain for the main technical scope of this paper, that is, 
the development of a novel EM based SOC algorithm in a model-based domain. 
The main feature lies in its powerful capacity of exploitation of searched state space and 
 attenuation of measurement and/or environmental noise. 
The aforementioned approaches, e.g., iLQG, MPC, BADMM etc., can be problematic in the presence of noise, which propagates through the state equations to generate a highly stochastic policy. 
On top of this, these existing approaches carry out exploration which immensely aggravates this issue.
It demands an effective exploitation step.  Some other relevant maximum likelihood strategies, e.g., \cite{hoffman2009expectation,toussaint2009robot}, may shed light on model-based EM optimization but also suffer from similar disadvantage of handling noise.

Measurement noise in an MDP results in a  partially observable Markov decision process (POMDP).
For instance, the approach in \cite{porta2006point} addressed the optimal control problem for POMDPs with a linear-Gaussian transition model and a mixture of Gaussians reward model, but it requires the action space to be discretized. 
The EM based optimal control proposed in \cite{toussaint2006probabilistic} considers estimation of control covariance matrix, but it does not dig deep into the analysis of covariance matrix that quantifies the trade-off between exploration and exploitation of state space in a reinforcement learning environment. 
Furthermore, the technique of belief space planning by \cite{platt2010belief} aims to transform the partially observable problem into a belief space problem and then it provides an optimal belief-LQR deterministic policy by taking a major step towards effectively handling uncertainty in the system. However, belief-LQR does not deal with stochastic policies, which
restricts the exploration mechanism of a reinforcement learning framework. 



Based on the above discussions about the state-of-the-art SOC methodologies for MDPs or POMDPs, 
it is a promising target of this paper to utilize the advantages of model-based trajectory-centric optimization paradigms together with probabilistic inference based techniques, specifically, EM, to establish an optimal policy in the presence of measurement noise.
For this purpose, the main contributions of this paper are summarized as follows.


}

 {\begin{itemize}
\item  A complete architecture of EM based probabilistic inference algorithm is developed for obtaining stochastic optimal policy parameters.

%
 
\item The algorithm shows the benefits of integrating a model-based optimal control procedure with the advantage of maximum likelihood to deliver an iterative trajectory optimization paradigm, called SOC-EM.

\item It is theoretically proved that  update of policy parameters in an EM iteration leads to reduction of cumulative cost-to-go
for an SOC problem, resulting in (approximate) optimal policy parameters. 

\item The uniqueness property of the maximizer of a surrogate likelihood function is theoretically laid out 
which offers a practically feasible lower dimensional approximation for each EM iteration. 

\item It is exhibited that EM-SOC offers efficient exploitation  of the highly uncertain exploration state space, 
which is analytically quantified by the convergence of control policy covariance matrices to 0
and numerically verified by improved state trajectories with reduced stochasticity in control actions.

\item The effectiveness of EM-SOC in handling measurement noise is explicitly demonstrated.  

\end{itemize}
}

%

%

\section{Preliminaries and Problem Formulation} \label{sec:preliminaries}

This section introduces the dynamic model under investigation, 
the problem formulation, and the proposed solvability procedure. 
It also elaborates the mathematical notations involved with addressing the problem that will be put forth in this paper.
Readers can refer to the symbols summarized in Table~\ref{summary}.

 
 \begin{table}[t]
\caption{Summary of symbols}
\centering
\begin{tabular}{ |p{2.5cm}|p{5.5cm} | } 
 \hline
\textbf{Symbol}& \textbf{Definition} \\
 $k$ & Time instant \\
 $T$ & Length of episode    \\
 $\mathbf{s}_k $   &  Measured state  { (implementation) } \\
   & or { latent state (optimization) at time instant $k$ }\\
 $\mathbf{a}_k$   & Control action  at time instant $k $  \\
  $\mathbf{x}_k $   & Real state  at time instant $k $ \\
 $Y_k(\mathbf{s}_k,\mathbf{a}_k)$   or 
  $Y_k(\mathbf{s}_k,\phi_k)$   
 & Instantaneous cost at time instant $k$ \\
$y_k$ & Observed cost  $p(Y_k)$ \\
{${\mathbb{S}_{T+1}}$} &  {Measured or latent variable $\{\mathbf{s}_1,\mathbf{s}_2,\cdots, \mathbf{s}_{T+1}\}$}\\
 ${\mathbb{Y}_{T}}$   & Reward observation   $\{y_1, y_2, y_3,\cdots, y_{T}\}$ \\
$\phi$ & Controller parameter\\
$\hat{\phi}^{i}$ & Estimation of controller parameter $\phi$ at the  $i$-th iteration\\
$\mathbb{E}$ & Expectation of a random variable \\ 
 $V_\phi ( {\mathbb{S}_{T+1}})$ & Cumulative sum of expected costs \\
  $L_\phi ({\mathbb{Y}_T})$  & Observation log-likelihood \\
 $\mathcal{L}(\phi,\hat{\phi}^i)$&  Mixture likelihood \\
$\text{vec}(\cdot )$ & Column vector stacked by columns of its matrix argument \\
$\col (\cdots)$ & Column vector stacked by its vector arguments \\
$\Tr (\cdot)$ & Trace of its matrix argument\\
$\nabla$ &  Gradient vector field of a scalar function \\
$\nabla^2$  & Hessian matrix; second-order partial derivative of a scalar function\\
 $^\top$ & Transpose operator \\
 $\otimes$ & Kronecker product operator \\
 $\mathbb{R}$ /  $\mathbb{R}^+$ & Set of real numbers / positive numbers \\
$ \mathbf{I}_{(s)} $ & Identity matrix (of dimension $s$)  \\
 \hline
\end{tabular}\label{summary}
\end{table}

\subsection{Mathematical notation and modeling}

The paper takes into account a stochastic dynamics that does not have a known model
from first principles,  in the presence of uncertainties such as parameter variation, external disturbance, 
sensor noise, etc.
The completed system is considered to be a global model, $O$, that is composed of multiple local models 
$o^l,\;  l = \{1, 2, \cdots\}$, and each of which follows an MDP, 
called a local model. 
We are   interested in a finite-horizon optimal control for a particular initial state, rather than for all possible initial states.

The POMDP has a {\it latent state} $\mathbf{s}_{k} \in \mathbb{R}^{n_s}$ 
and a {\it control action} $\mathbf{a}_k \in \mathbb{R}^{n_a}$, at  time instant $k = 1, 2, \cdots$, and the 
local state transition dynamic model is represented by a conditional probability density function (p.d.f.), i.e., 
\begin{align} \label{modelpdf}
 p (\mathbf{s}_{k+1} | \mathbf{s}_k, \mathbf{a}_k).
 \end{align}
In particular, for $k=1$,  $\mathbf{s}_1 \in \mathbb{R}^{n_s}$ is called the {\it initial state}, obeying a specified distribution. Variables $n_s$ and $n_a$ are integers which are the dimensions of state
and action space.
 We specifically consider a finite-horizon MDP in this paper  for $k = 1, 2, \cdots, T$, called an 
{\it episode}, with the time instant $T$ being the end of episode. It is  worth mentioning that the 
p.d.f. in \eqref{modelpdf} varies with time $k$ and 
the time-varying nature is capable of characterizing more complicated dynamical behaviors but also brings more challenges 
in control design. It will be elaborated in Section~\ref{learning_section}. 
 
 The entity $Y_k(\mathbf{s}_k,\mathbf{a}_k) \in \mathbb{R}^+$ denotes the instantaneous real valued {\it cost} for executing action $\mathbf{a}_k$ at state $\mathbf{s}_k$. 
It has a more specific expression as follows,
\begin{align} \label{quad_reward}
Y_k(\mathbf{s}_k,\mathbf{a}_k) = (\mathbf{s}_k-\mathbf{s}^*)^\top \mathbf{Q_s} (\mathbf{s}_k-\mathbf{s}^*) + (\mathbf{a}_k-\mathbf{a}^*)^\top \mathbf{Q_a} (\mathbf{a}_k-\mathbf{a}^*),
\end{align} where $\mathbf{s}^*$ and $\mathbf{a}^*$ are the target state and 
control action, respectively, and  $\mathbf{Q_s} > 0$ and $\mathbf{Q_a}> 0$ are some specified matrices. 
As $\mathbf{s}_k$ and $\mathbf{a}_k$ are random variables,  
$ Y_k(\mathbf{s}_k,\mathbf{a}_k)$  (with $Y_k$ a continuous and deterministic function) is also a random variable, shorted 
as $Y_k$.  We develop another variable, i.e., $y_k=p(Y_k) \in \mathbb{R}^+$ (known as \textit{observed cost}) which is the exponential transformation of the immediate  cost $Y_k(\mathbf{s}_k,\mathbf{a}_k)$ following a p.d.f. 
$ p (y_k | \mathbf{s}_k, \mathbf{a}_k)$, which  will be later elaborated.

Overall, the MPD consists of the transition dynamics $p(\mathbf{s}_{k+1}|\mathbf{s}_k,\mathbf{a}_k)$ and the cost observation p.d.f. $p(y_k|\mathbf{s}_k,\mathbf{a}_k)$ in an augmented form, i.e.,{
 \begin{align} \label{ltv_eq}
     { p \Big(  \begin{bmatrix}
    \mathbf{s}_{k+1}       \\
    y_k      
\end{bmatrix} | \mathbf{s}_k, \mathbf{a}_k \Big) }   = \mathcal{N} \Big( { \mathbf{A}^o_k }
{ \begin{bmatrix}
    \mathbf{s}_k       \\
    \mathbf{a}_k       
\end{bmatrix} }  ,{ \pmb{\Sigma}^o_k  \Big)},
\end{align} 
which is referred to as the {\it dynamic model} in the subsequent parts of the paper.}
A time-varying linear Gaussian p.d.f. is used in  \eqref{ltv_eq} as 
an approximation of a real model which is in general nonlinear,  where 
the matrices $ \mathbf{A}^o_k$ and $\pmb{\Sigma}^o_k$ in  to be determined in Section~\ref{learning_section}
using the dynamic model fitting technique.

\begin{figure}[t]
  \centering
\includegraphics[scale=0.52, bb=150 470 360 800]{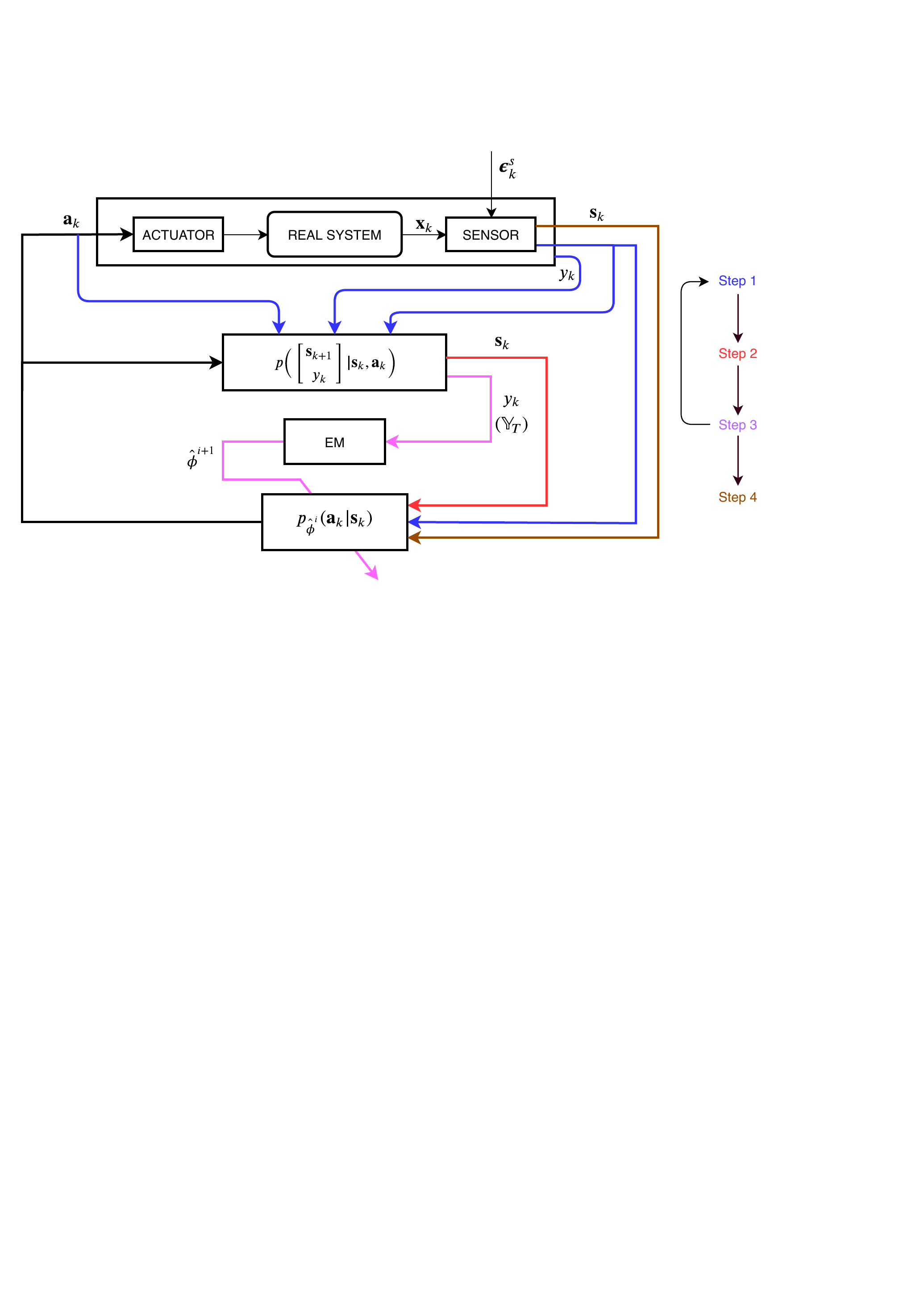}
   \caption{Schematic diagram of the overall design procedure in four steps.}
    \label{fig:overall}
\end{figure}

\subsection{Controller parameter space} 
This subsection presents the definition of parameter space of a controller that is utilized in the paper. The control action is sampled from a  linear Gaussian p.d.f. that describes the policy as shown below,
\begin{align} 
        p_{{\phi_k}}({\mathbf{a}}_k|\mathbf{s}_k) = \mathcal{N} ( \mbF_k \mathbf{s}_k + \mbe_k ,{\mbSigma_k} )  ,  \label{control}
\end{align}
for some matrices $\mbF_k, \mbSigma_k$ and a vector $\mbe_k$, representing state feedback control. {\bluecolor }
The matrix $\mbSigma_k$ is symmetric positive definite, and
$\mbSigma_k^{\frac{1}{2}}$ is the square root of ${\mbSigma_k}$ satisfying 
  ${\mbSigma_k}  =(\mbSigma_k^{\frac{1}{2}}) ^\top \mbSigma_k^{\frac{1}{2}} $. 
Let $\mbf_k =  \text{vec} (\mbF_k)$ and
$\mbsigma_k =  \text{vec} (\mbSigma_k^{\frac{1}{2}} )$.
Then,  the vector 
\begin{align*} 
{\phi}_k =\col (\mbf_k, \mbe_k, \mbsigma_k),
\end{align*} is called the  controller parameter vector. 
Over the episode under consideration, the controller parameters are lumped as follows, 
\begin{align}
\phi  =\col (\phi_1, \phi_2, \cdots, \phi_T) \in \Phi,
\end{align}
{where ${{\Phi}}$ is a non-empty convex compact subset of $\mathbb{R}^{({n_a n_sT+n_aT} + {n_a n_aT} ) }$.
The time-varying feature of the controller is represented by the variation of $\phi_k$ with $k$,
 which aims to account for the complexity of the dynamical system.
A stochastic policy adopted in this paper lays down the exploration mechanism  in a reinforcement learning setting. 
 }

\subsection{Problem formulation} \label{section:DP}

{This section starts with a definition of the conventional SOC problem (see e.g., \cite{stengel1994optimal})
and then moves to the specific formulation of the problem studied in this paper.}
For the stochastic dynamic model \eqref{ltv_eq},  the conventional SOC problem is formulated as follows,  
\begin{align} \label{cumsum_obj}
 & \min_{\mathbf{a}_1, \mathbf{a}_{2}, . . ,\mathbf{a}_T} \mathbb{E} \sum_{k=1}^T Y_k(\mathbf{s}_k,\mathbf{a}_k),
\end{align}
where $\mathbf{s}_k$ and $\mathbf{a}_k$ are the
variables of the dynamic model p.d.f.  \eqref{ltv_eq}
and  the control action p.d.f. \eqref{control}.
The expectation is taken over the measurement states which are a result of instantiations 
of the noise in the dynamical equation with an initial state $\mbs_1$. To express the cost penalty to be explicitly dependent on 
$\phi_k$, we rewrite 
$Y_k(\mathbf{s}_k,\mathbf{a}_k)$ as $Y_k(\mathbf{s}_k,\phi_k)$ with slight abuse of notation. 
 Also, we can rewrite \eqref{cumsum_obj} in terms of the controller parameter $\phi$, i.e., 
  \begin{align} \label{cumsum_obj1}
 &  \min_{\phi}  \mathbb{E}  V_\phi ( {\mathbb{S}_{T+1}})  \;
 \text{for} \;
 V_\phi ( {\mathbb{S}_{T+1}}) \triangleq   \sum_{k=1}^T Y_k(\mathbf{s}_k,\phi_k).
\end{align}
with ${\mathbb{S}_{T+1}}  =   \{\mathbf{s}_1,\mathbf{s}_2,\cdots, \mathbf{s}_{T+1}\}$.

{A complete solution to the optimization problem \eqref{cumsum_obj} is hardly analytically tractable
 \cite{platt2010belief,10.2307/3689975}. 
 It is more realistic to pursue effective solutions with reasonable approximations
 as seen in numerous references including \cite{levine2014motor,tassa2012synthesis,li2004iterative,ziebart2010modeling(a)}.  
  It is worth mentioning that,
 existence of a stationary policy for a POMDP is NP-complete \cite{littman1994memoryless,lusena2001nonapproximability}. Even a solution to a finite horizon POMDP is shown to be PSPACE-complete for discrete states, actions and observations \cite{10.2307/3689975}. Therefore, 
in this paper we propose a new problem formulation with a procedure of solution 
that can be regarded as a decent alternative to the problem \eqref{cumsum_obj}. 
The procedure is elaborated below in a four step architecture and also illustrated in Fig.~\ref{fig:overall}. 
  }

{\it Step 1: Dynamic model fitting:}
From an initial state $\mathbf{s}_1$ sampled from a specified distribution, 
 the real system is operated with the controller \eqref{control}
for a pre-selected controller parameter $\phi = \hat\phi^0=\col (\hat\phi^0_1,\hat \phi^0_2, \cdots, \hat\phi^0_T)$
and the control actions $\{\mathbf{a}_1,\mathbf{a}_2,\cdots, \mathbf{a}_{T}\}$ 
and the states
$\{\mathbf{s}_1,\mathbf{s}_2,\cdots, \mathbf{s}_{T+1}\}$ are recorded.
Calculate $Y_k(\mathbf{s}_k,\hat\phi^0_k)$ and hence $y_k = e^{ - Y_k }$. 
{Then,  the dynamic model \eqref{ltv_eq} is identified by fitting it to 
to the collected tuples of data $\{\mathbf{s}_k,\;\mathbf{a}_k,\;\mathbf{s}_{k+1},\;y_k\}$, $k=1,\cdots, T$.}

{\it Step 2: Generation of cost observation:}
From an initial state $\mathbf{s}_1$ sampled from a specified distribution, 
the   cost observations ${\mathbb{Y}_{T}}= \{y_1, y_2, y_3,\cdots, y_{T}\}$ are generated using
the dynamic model \eqref{ltv_eq}  (obtained from Step 1)
and the controller  \eqref{control} with the controller parameter $\phi = \hat\phi^0$.  
 
{\it Step 3: Optimization of control action:}  Let  ${ p_{\phi} (\mathbb{S}_{T+1} | \mathbb{Y}_T)}$ be the probability of 
the {\it latent states}
${\mathbb{S}_{T+1}}  = \{\mathbf{s}_1,\mathbf{s}_2,\cdots, \mathbf{s}_{T+1}\}$
given the observation ${\mathbb{Y}_{T}}$ (obtained from Step 2), obeying the 
closed-loop system composed of the dynamic model \eqref{ltv_eq}  (obtained from Step 1)
and the controller  \eqref{control} with a 
controller parameter $\phi$. 
The optimization of a local control policy is formulated as follows
\begin{align} \label{valuefn_ltv}
	\phi^{*} = \arg\min_{\phi}\mathbb{E}_{ p_{\hat\phi^0} (\mathbb{S}_{T+1} | \mathbb{Y}_T)} V_\phi ( {\mathbb{S}_{T+1}}).
\end{align}

{\it Step 4: Implementation and evaluation:} Run the real system with the  controller \eqref{control} for the optimal parameter 
$\phi ={ \phi^*}$ and evaluate the performance.

A practical approach to solve the optimization problem \eqref{valuefn_ltv} is to use the following strategy,
 \begin{align} \label{valuefn_ltvi}
{ \hat \phi^{i*}}  = \arg\min_{\phi}\mathbb{E}_{ p_{\hat\phi^i} (\mathbb{S}_{T+1} | \mathbb{Y}_T)} V_\phi ( {\mathbb{S}_{T+1}}) ,
\end{align}
recursively with $\hat \phi^{i+1} = { \hat \phi^{i*}}$,  for $i=0,1, \cdots$.
It is expected that $ \hat \phi^{i} $ approaches $\phi^*$ as $i$ goes to $\infty$.

Throughout the paper, we use the simplified notation
\begin{align}
   \mathbb{E}_{\phi} ( * | \mathbb{Y}_T ) 
   \triangleq   {\mathbb{E}_{p_{\phi} (\mathbb{S}_{T+1}|\mathbb{Y}_T)} (  *) } \label{Ltheta_kheta1}
\end{align}
and  \eqref{valuefn_ltvi} can rewritten as 
 \begin{align} \label{valuefn_ltv2}
{ \hat \phi^{i*}}  = \arg\min_{\phi}\mathbb{E}_{ {\hat\phi^i}} ( V_\phi ( {\mathbb{S}_{T+1}}) | \mathbb{Y}_T ).
\end{align}
After each iteration $i$, one has an updated controller parameter $ \hat \phi^{i+1}$ and Steps 1 and 2
are repeated with $\phi =\hat \phi^{i+1}$ for an updated dynamic model and updated  cost observation.

 {\begin{remark} 
In Steps 1 and 4, the real dynamical system is operated for data generation and performance evaluation, respectively. 
The state $\mathbf{s}_k$ is physically measured, which represents the observed system state carrying measurement noise. 
 However, in  Steps 2 and 3, only theoretical computation is conducted without operating the real system, thus leveraging the latency nature of states.
  Here,  $\mathbf{s}_k$ represents the explored state
obeying a  joint probability  ${ p_{\phi} (\mathbb{S}_{T+1} | \mathbb{Y}_T)}$ 
 conditioned on the observation ${\mathbb{Y}_{T}}$ and parameterized with a given $\phi =\hat \phi^i$ at each iteration.
 It is thus called a latent state. In both cases, either the states  carrying noise or the 
 explored state samples are adopted. In other words, the real/true system states are not observed, 
 with which the model \eqref{ltv_eq} is treated as a POMDP. 
 \end{remark}

 \begin{remark} 
The subsequent sections are concerned about the optimization problem \eqref{valuefn_ltv2}
which is regarded as the approximation of the original optimization problem \eqref{cumsum_obj}. 
It is easy to see that \eqref{valuefn_ltv2}
is equivalent to   \begin{align}  \label{valuefn_ltv3}
 \min_{\mathbf{a}_1, \mathbf{a}_{2}, . . ,\mathbf{a}_T} \mathbb{E}_{p_{\hat\phi^i} (\mathbb{S}_{T+1}|\mathbb{Y}_T)}
   \sum_{k=1}^T Y_k(\mathbf{s}_k, \mathbf{a}_k)   .
\end{align}
It is expected that recursively solving the problem  \eqref{valuefn_ltv2} or \eqref{valuefn_ltv3} 
will approach a solution to \eqref{cumsum_obj}.
However, the global convergence of the recursion is of great challenge and the effectiveness can only be 
numerically verified in this paper. 
The gap between \eqref{cumsum_obj} and   \eqref{valuefn_ltv3} is further discussed as follows. 
The  optimization in  \eqref{valuefn_ltv3} can be intuitively interpreted as finding a probability distribution of state trajectories whose  samples contain lowest expected cost-to-go. One can refer to Section-3.3 of \cite{levine2014motor} which describes a similar type of objective function for achieving trajectories of lowest cost. 
 It can be a reasonably approximation of the real time cost-to-go in  \eqref{cumsum_obj}.
 \end{remark}

%
 

}

In the remaining sections,  we first elaborate Step 1, the dynamic model fitting procedure,
in Section~\ref{learning_section}.
The main technical challenges in optimization of the control action \eqref{valuefn_ltv2}, accounting for Steps~2 and 3, 
are addressed in Sections~\ref{sec:trajectory_optimization} and
 \ref{section:EMsolution}, in a novel systematic framework. 
Step~4 is discussed in Section~\ref{sec:results}.

\section{Dynamic Model Fitting} \label{learning_section}

%
%

In this section, we elaborate the procedure formulated in Step 1 to attain linear time-varying parameter estimates of the dynamic model
\eqref{ltv_eq}. Technically, we merge the procedure adopted in \cite{levine2016end} with the existing variational Bayesian (VB) strategies for a finite mixture model that can be referred to in \cite{bishop2006pattern}.

We first give a specific definition of $y_k$ as follows,
 \begin{align} \label{expo_transformation}
   y_k(\mbs_k,\mba_k)  = e^{-  Y_k (\mbs_k,\mba_k)}.
 \end{align}
{Intuitively,   $y_k(\mbs_k,\mba_k) \in (0,1]$ characterizes the likelihood of $(\mbs_k,\mba_k)$ being near the optimal trajectory. 
When an action results in a less cost $Y_k (\mbs_k,\mba_k)$, it implies a larger $y_k(\mbs_k,\mba_k)$ representing 
a higher likelihood of being near the optimal trajectory.
Such an exponential transformation has been proved successful in 
determining the probability of occurrence of an optimal event in optimal control; see, e.g., \cite{cooper2013method,toussaint2009robot}.
}

%
%

 As described in the aforementioned Step 1, we can run one experiment and collect the tuples 
$\{\mathbf{s}_k,\;\mathbf{a}_k,\;\mathbf{s}_{k+1},\;y_k\}$ for every episode $k=1,\cdots, T$. In practice, 
the experiments can be repeated for $M$ times from the same initial conditions with a random seed value to gather sufficiently many samples, each of which is denoted by,
\begin{align*}
\mathcal{D}_k^m=\{\mathbf{s}_k,\;\mathbf{a}_k,\;\mathbf{s}_{k+1},\;y_k\}_{\text{$m$-th experiment}},
\end{align*}
 for $m=1,\cdots,M$.  Let 
$\mathcal{D}_k = \{\mathcal{D}_k^1,\cdots, \mathcal{D}_k^M \}$ 
and $\mathcal{D} = \{\mathcal{D}_1,\cdots, \mathcal{D}_T \}$.
 
  {Research in \cite{levine2016end} suggests that 
utilizing simple linear regression to fit
the data set $\mathcal{D}$ requires a large amount of samples
and may become problematic in high dimensional scenarios. 
However, the linear Gaussian fitting approach has been proven to be effective in reducing sample complexity 
noting that the samples from a dynamical system in adjacent time steps are correlated. 
More specifically,  it is assumed that the data set $\mathcal{D}$ is generated from a mixture of a finite number of Gaussian distributions with unknown parameters, to which one one can fit a Gaussian mixture model (GMM). 
The procedure involves constructing  normal-inverse Wishart distributions to act as prior for means and covariances of Gaussian distributions involved in mixture model. In addition to it, Dirichlet distributions are defined to be the prior on the weights of the Gaussian distributions which would explain the mixing proportions of Gaussians. 
Then, the iterative VB strategy is adopted to increase the likelihood of a joint variational distribution (see e.g., \cite{bishop2006pattern}-Section 10.2)} to determine the parameters of the GMM, i.e., the means, covariances and weights of the Gaussians for a particular time instant $k$.
More specifically, the Gaussian distribution is of the form
\begin{align} \label{NIWp}
p (\mathbf{s}_k,\mathbf{a}_k,\mathbf{s}_{k+1},y_k) = \mathcal{N} (\mbmu_k, \mbLambda_k),
\end{align}
for the mean $\mbmu_k$ and the covariance ${\mbLambda}_k$.  The parameters $\mbmu_k$ and ${\mbLambda}_k$ are the a-posteriori estimates which are evaluated by a Bayesian update rule with the information of the dataset $\mathcal{D}$ and normal-inverse Wishart prior.

The Gaussian distribution  \eqref{NIWp} can then be conditioned on states and action, i.e., $(\mathbf{s}_k,\mathbf{a}_k)$, using standard identities of multivariate Gaussians, which results in \eqref{ltv_eq}
for the following parameters
 
 \begin{align*}
 \mathbf{A}^{o}_k = \begin{bmatrix}
    \mathbf{A}^d_k      & {\mathbf{B}^d_k} \\
    {\mathbf{A}^r_k}       & {\mathbf{B}^r_k} 
\end{bmatrix} ,  \pmb{\Sigma}^{o}_k = {\begin{bmatrix}
    {\pmb{\Sigma}^d_k} & {{\pmb{\Sigma}^{rd}_k}} \\
    {\pmb{\Sigma}^{rd}_k}^\top      & { {\pmb{\Sigma}^r_k}}
\end{bmatrix}  } .
\end{align*}
The dimensions of the matrices are  $\mathbf{A}^d_k \in \mathbb{R}^{n_s \times n_s}$, ${\mathbf{B}^d_k} \in \mathbb{R}^{n_s \times n_a}$, ${\pmb{\Sigma}^d_k} \in \mathbb{R}^{n_s \times n_s}$, ${{\mathbf{A}^r_k}} \in \mathbb{R}^{1 \times n_s}$, ${\mathbf{B}^r_k} \in \mathbb{R}^{1 \times n_a}$, ${ {\pmb{\Sigma}^r_k}} \in \mathbb{R}$, $ \mathbf{A}^o_k \in \mathbb{R}^{({n_s + 1}) \times (n_a +n_s) }$ and ${ {\pmb{\Sigma}^o_k} }  \in \mathbb{R}^{(n_s+1) \times (n_s+1)}$.

In the dynamic model \eqref{ltv_eq}, the term $ {\pmb{\Sigma}^{rd}_k}$ denotes the correlation between $\mathbf{s}_{k+1}$ and $y_k$. Without loss of generality, we assume that  ${\pmb{\Sigma}^{rd}_k}=0$. Note that one can also consider ${\pmb{\Sigma}^{rd}_k} \neq 0$ and utilize methods of de-correlation to carry out the entire procedure in a similar way. It is assumed that the covariance matrices are symmetric positive definite, that is, $ {\pmb{\Sigma}^d_k} > 0$,  ${ {\pmb{\Sigma}^r_k}}> 0 $, and  ${\pmb{\Sigma}^o_k}> 0$, throughout the paper.

We consider the dynamic model \eqref{ltv_eq} for the episode $k=1,\cdots,T$, assuming the 
initial time $k=1$.  This kind of modeling resembles with   pre-existing studies in, e.g., \cite{levine2016end,montgomery2016guided,tassa2012synthesis,zhang2016learning,li2004iterative,levine2014learning}.  It is noted that shifting the model \eqref{ltv_eq} by $k_0 \geq 0$ gives a model as follows,  in the new episode $k= k_0+1,\cdots,k_0+T$,  \begin{align}
     { p \Big(  \begin{bmatrix}
    \mathbf{s}_{k+1}       \\
    y_k      
\end{bmatrix} | \mathbf{s}_k, \mathbf{a}_k \Big) }  = \mathcal{N} \Big( { \mbA^o_{k-k_0} }
{ \begin{bmatrix}
    \mathbf{s}_k       \\
    {a}_k       
\end{bmatrix} }  ,{  \pmb{\Sigma}^o_{k-k_0}  \Big)}. \label{shiftmodel}
\end{align}
Therefore, the time-varying feature of the linear Gaussian model \eqref{ltv_eq} is not absolute but relative. By relatively time-varying we mean that the dynamical system parameters $ \mbA^o_{k-k_0}$  and  $\mbSigma^o_{k-k_0}$  in \eqref{shiftmodel} 
do not depend on the absolute time $k$, but on the relative time interval $k-k_0$. In other words, 
the model is independent of the initial time $k_0$. 
The time-varying nature of the model \eqref{ltv_eq} is capable of characterizing the complicated dynamical behaviors studied in this paper by more accurately capturing the nonlinearity in a piecewise-linear Gaussian manner. On the contrary, a time-invariant model with a unique Gaussian distribution in \eqref{ltv_eq} for all $k$ could be oversimplified, inaccurate and would definitely not describe a complicated model. Nevertheless, it is possible to fit only a relatively time-varying model to the collected data   by running multiple experiments  at different time instants. 
  
 
\section{Optimization of Control Action via EM} \label{sec:trajectory_optimization}

This section starts with some concepts used in the well acknowledged EM algorithm.  Basically, EM computes the maximum likelihood estimate of some parameter vector $\phi$ (whose design is at the discretion of the user), say $\hat{\phi}_{EM}$ based on an observed data set $\mathbb{Y}_T$. In particular, the likelihood of observing the data $\mathbb{Y}_{T}$ written as $p_\phi(\mathbb{Y}_T)$ 
does not decrease in an iterative manner, i.e.,
\begin{equation} \label{theta_ml}
   \hat{\phi}_{EM} \in \{  \phi \in \Phi : p_{\phi} (\mathbb{Y}_{T}) \geq p_{\hat{\phi^i}} (\mathbb{Y}_{T}) \},
\end{equation}
where   
$\hat{\phi^i}$ is a (known) considerably good parameter estimate with which the EM approach is initialized (at the
iteration labeled $i$). 


The  EM algorithm involves 
the \textit{observation log-likelihood,}
 \begin{align}
  L_\phi ({\mathbb{Y}_T}) & {\triangleq \log p_\phi(  {\mathbb{Y}_{T}} )} \label{Lphi} 
\end{align}
and an essential approximation of  log of \textit{mixture likelihood} of some latent variables ($\mathbb{S}_{T+1}$) and the observations ($\mathbb{Y}_{T}$) with a surrogate function $ \mathcal{L}(\phi,\hat{\phi}^{i})$ defined  in the following equation, 
\begin{align}
 \mathcal{L}(\phi,\hat{\phi}^{i})  &  \triangleq \mathbb{E}_{\hat{\phi}^i}   (  \log  p_\phi (\mathbb{S}_{T+1}, \mathbb{Y}_T) |\mathbb{Y}_T)  
    .  \label{Ltheta_kheta}
\end{align}
 {It is assumed that both $L_\phi ({\mathbb{Y}_T}) $ and $\mathcal{L}(\phi,\hat{\phi}^{i})$ are 
 differentiable in $\phi\in\Phi$.}
Some lemmas used for the EM algorithm are given in Appendix.

 Next, we aim to propose an  EM based method for solving the optimal control problem 
\eqref{valuefn_ltv2} associated with the dynamic model \eqref{ltv_eq} and the controller  \eqref{control}, as formulated in the aforementioned Step~3. 
To bridge the relationship between the 
optimal control problem and the EM algorithm that is originally used for maximizing 
the likelihood of observed data, we first recall the observation $\mathbb{Y}_{T}$ in Step~2.
Let $\mathbb{S}_{T+1}$ be the latent states whose probability is denoted as 
 ${ p_{\phi} (\mathbb{S}_{T+1} | \mathbb{Y}_T)}$, given the observation ${\mathbb{Y}_{T}}$, obeying the 
closed-loop system composed of the dynamic model \eqref{ltv_eq}
and the  policy  \eqref{control} with a parameter $\phi$. 
More specifically, one has
 \begin{align} 
 \label{pphi_sys}
       p_\phi ({\mathbb{S}_{T+1}}, {\mathbb{Y}_{T}})    = p(\mathbf{s}_1) \prod_{k=1}^{T} p_{\phi_k} (\mathbf{s}_{k+1},y_k|\mathbf{s}_k).
 \end{align}
Hence,  we can define $L_\phi ({\mathbb{Y}_T})$  and $\mathcal{L}(\phi,\hat{\phi}^{i})$ as
in \eqref{Lphi} and \eqref{Ltheta_kheta}.

In the conventional EM, it has been revealed (see Lemma~\ref{lemma:EM_proof})
that, in a recursive procedure, 
a new parameter $\phi=\hat{\phi}^{i+1}$ that increases $\mathcal{L}(\phi, \hat{\phi}^i)$
from $\phi=\hat{\phi}^{i}$, also increases $L_\phi(\mathbb{Y}_T)$.
We aim to further prove that, the new parameter $\phi=\hat{\phi}^{i+1}$ also decreases 
$\mathbb{E}_{ {\hat\phi^i}} ( V_\phi ( {\mathbb{S}_{T+1}}) | \mathbb{Y}_T )$ in \eqref{valuefn_ltv2}, thus bridging 
the EM algorithm 
and the optimal control objective. It can be simply stated that the  EM algorithm for finding
$\hat\phi^{i*}$  in \eqref{maxlmaxcalL} with $\hat\phi^{i+1}  = \hat\phi^{i*}$ also
works for   \eqref{valuefn_ltv2}.


{The theorem  to be established in this section is based on the following assumption for the distribution of $Y_k$.

 \begin{assumption} \label{assumption:pdf_Yk}
The p.d.f. of $Y_k$  follows an exponential distribution with parameter $\lambda$, i.e.,
\begin{align}\label{pYk}
 p(Y_k) = \lambda e^{-\lambda Y_k} \text{ where }  \lambda>1.
\end{align}

\begin{remark} The above assumption is practically reasonable for the following two reasons.
First,  both $\mbs_k$ and $\mba_k$ follow a Gaussian distribution in Section~\ref{learning_section}, 
therefore $Y_k$ follows a linear combination of independent non-central chi-squared variables with some degrees of freedom. Solving for a p.d.f. of $Y_k$ is complicated (see e.g. Appendix~A.1 of \cite{paolella2018linear}). As all these distributions are related to a general exponential family, it is reasonable to assume that  $Y_k$ also follows an exponential distribution. 
Second, the justification for using an exponential distribution can also be found in 
relevant work. For example, it is assumed that rewards (negative costs) are drawn from an exponential distribution
in \cite{dayan1997using} and  a so-called exponentiated payoff distribution is used in \cite{norouzi2016reward} as a link between maximum likelihood and an optimal control objective. 
\end{remark}
 \end{assumption}
}
Then,  we can give the following lemma regarding the distribution property of $y_k$
defined in \eqref{expo_transformation}, which is of sole importance for establishing a theoretical relationship between 
the mixture likelihood function and the SOC objective. 


\begin{lemma} \label{lemma_prob2}
For $Y_k$ of  the p.d.f. \eqref{pYk},
the random variable $y_k$ in \eqref{expo_transformation} has a p.d.f. of the form 
\begin{align} \label{pdf_y_k}
   p(y_k) = {\lambda} { y_k  }^{{\lambda} -1} . 
\end{align}
\end{lemma}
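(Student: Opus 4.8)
The plan is to treat Lemma~\ref{lemma_prob2} as a routine monotone change of variables applied to the single random variable $Y_k$. Under Assumption~\ref{assumption:pdf_Yk} the variable $Y_k$ is supported on $[0,\infty)$, and the transformation $y_k = e^{-Y_k}$ in \eqref{expo_transformation} is a strictly decreasing, smooth bijection from $[0,\infty)$ onto $(0,1]$, with inverse $Y_k = -\ln y_k$. The first step I would carry out is to record this support: since $e^{-(\cdot)}$ maps $[0,\infty)$ onto $(0,1]$, the density $p(y_k)$ is supported on $(0,1]$, which is exactly where the claimed expression $\lambda y_k^{\lambda-1}$ is a valid and (as one checks, $\int_0^1 \lambda y^{\lambda-1}\,dy = 1$) normalized density.

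Second, I would apply the standard one-dimensional transformation formula $p(y_k) = p(Y_k)\big|_{Y_k = -\ln y_k}\,\bigl|\tfrac{dY_k}{dy_k}\bigr|$. Here $\tfrac{dY_k}{dy_k} = -\tfrac{1}{y_k}$, so its absolute value is $1/y_k$, and substituting $Y_k = -\ln y_k$ into the exponential density $\lambda e^{-\lambda Y_k}$ gives $\lambda e^{\lambda \ln y_k} = \lambda y_k^{\lambda}$. Multiplying by the Jacobian factor $1/y_k$ then yields $p(y_k) = \lambda y_k^{\lambda-1}$, as claimed. An equivalent route, which I would use as a sanity check, is the cumulative-distribution method: for $t \in (0,1]$ one computes $P(y_k \le t) = P(e^{-Y_k} \le t) = P(Y_k \ge -\ln t) = e^{\lambda \ln t} = t^{\lambda}$, and differentiating in $t$ recovers the same density.

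There is no substantial obstacle in this lemma; it is a direct computation. The only points demanding care are that the transformation is \emph{decreasing}, so that the cumulative-distribution argument involves the complementary probability $P(Y_k \ge -\ln t)$ and the Jacobian must be taken in absolute value, and that the support must be tracked so as to confine the density to $(0,1]$. I would also note that the hypothesis $\lambda > 1$ from Assumption~\ref{assumption:pdf_Yk} plays no role in the derivation itself; it is presumably reserved for later arguments where integrability or the positivity of $\lambda - 1$ is needed.
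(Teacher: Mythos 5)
Your proposal is correct and, in substance, the same as the paper's: the paper proves the lemma by exactly the cumulative-distribution computation you list as your ``sanity check'' ($P(y_k \le t) = P(Y_k \ge -\ln t) = t^{\lambda}$, then differentiate), and your primary route via the Jacobian formula $p(y_k) = p(Y_k)\,\lvert dY_k/dy_k\rvert$ is just the differentiated form of that same calculation. Your added remarks on the support $(0,1]$ and on the fact that $\lambda>1$ is not needed for this lemma are accurate but not part of the paper's argument.
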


\begin{proof}
 The random variable of $y_k$ has the following cumulative distribution function
\begin{align*}
    \mathbb{F}_{y_k} (x) &= p( y_k < x) = p (e^{- Y_k} <x) = p(  Y_k > {-} \log  (x) ).
    \end{align*}
Further calculation implies
\begin{align*}
   \mathbb{F} (x) & = \int_{{-} \log  (x)  }^\infty p(Y_k) dY_k=
   \int_{{-} \log  (x)  }^\infty  \lambda e^{-\lambda Y_k}  dY_k \\& =
-  e^{-\lambda  \infty} + e^{ \lambda \log x} 
    =  {x  }^{ {\lambda}  } .
\end{align*}
Thus differentiating $\mathbb{F} (x)$ with respect to $x$ gives the  p.d.f of $y_k$ as
$p(x) = d \mathbb{F} (x)  /dx =  {\lambda} { x }^{{\lambda} -1}$,  which is simply denoted as \eqref{pdf_y_k}.
\end{proof}

Now,  the main result is stated in the following theorem. 
Recall that $Y_k$ can be explicitly expressed by $Y_k(\mathbf{s}_k,\phi_k)$, 
and accordingly, $y_k$ by $y_k(\mathbf{s}_k,\phi_k)$, which is used in the proof of the theorem.

\begin{theorem} \label{thm:main_theorem}
Suppose the parameter $\hat{\phi}^{i+1}$ is produced such that
\begin{align}  \label{calLi}
 \mathcal{L}( \hat{\phi}^{i+1}, \hat{\phi^i})  \geq    \mathcal{L} ( \hat{\phi^i} , \hat{\phi^i}).  \end{align}
Then,  the cumulative sum of expected costs defined in \eqref{valuefn_ltv2} satisfies
\begin{align} \label{EVi}
  \mathbb{E}_{ {\hat\phi^i}} ( V_{\hat\phi^{i+1}} ( {\mathbb{S}_{T+1}}) | \mathbb{Y}_T )  \leq 
\mathbb{E}_{ {\hat\phi^i}} ( V_{\hat\phi^i} ( {\mathbb{S}_{T+1}}) | \mathbb{Y}_T ).
\end{align}
 \end{theorem}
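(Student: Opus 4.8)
The plan is to reduce the statement to the standard monotonicity of the surrogate $\mathcal{L}$ by exhibiting an explicit affine relationship between $\mathcal{L}(\phi,\hat\phi^i)$ and the conditional expected cost. Concretely, I would aim to show that
\begin{align}
\mathcal{L}(\phi,\hat\phi^i) = C(\hat\phi^i) - (\lambda-1)\,\mathbb{E}_{\hat\phi^i}\big(V_\phi(\mathbb{S}_{T+1})\,\big|\,\mathbb{Y}_T\big), \label{plan:key}
\end{align}
where $C(\hat\phi^i)$ collects everything that does not depend on the first argument $\phi$. Once \eqref{plan:key} is in hand the theorem is immediate: substituting $\phi=\hat\phi^{i+1}$ and $\phi=\hat\phi^i$ and subtracting, hypothesis \eqref{calLi} gives $-(\lambda-1)[\mathbb{E}_{\hat\phi^i}(V_{\hat\phi^{i+1}}|\mathbb{Y}_T)-\mathbb{E}_{\hat\phi^i}(V_{\hat\phi^i}|\mathbb{Y}_T)]\geq 0$, and since Assumption~\ref{assumption:pdf_Yk} guarantees $\lambda>1$, hence $\lambda-1>0$, dividing by $-(\lambda-1)$ reverses the inequality and yields exactly \eqref{EVi}. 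This is where the seemingly technical requirement $\lambda>1$ does the real work: it fixes the sign that converts \emph{increase the likelihood} into \emph{decrease the cost}.

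To derive \eqref{plan:key} I would expand the complete-data log-likelihood using the Markov factorization \eqref{pphi_sys},
\begin{align}
\log p_\phi(\mathbb{S}_{T+1},\mathbb{Y}_T) = \log p(\mathbf{s}_1) + \sum_{k=1}^{T}\log p_{\phi_k}(\mathbf{s}_{k+1}\mid \mathbf{s}_k) + \sum_{k=1}^{T}\log p(y_k), \label{plan:split}
\end{align}
separating the initial-state and state-transition factors from the cost-observation factors. For the latter I would invoke Lemma~\ref{lemma_prob2}: from $p(y_k)=\lambda y_k^{\lambda-1}$ together with $y_k=e^{-Y_k(\mathbf{s}_k,\phi_k)}$ in \eqref{expo_transformation}, taking logarithms gives $\log p(y_k)=\log\lambda-(\lambda-1)Y_k(\mathbf{s}_k,\phi_k)$, so that $\sum_k \log p(y_k)=T\log\lambda-(\lambda-1)V_\phi(\mathbb{S}_{T+1})$ by the definition of $V_\phi$ in \eqref{cumsum_obj1}. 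Applying the conditional expectation $\mathbb{E}_{\hat\phi^i}(\,\cdot\,|\mathbb{Y}_T)$ from \eqref{Ltheta_kheta} to \eqref{plan:split} then produces the $-(\lambda-1)\mathbb{E}_{\hat\phi^i}(V_\phi|\mathbb{Y}_T)$ term of \eqref{plan:key} directly out of the cost channel.

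The main obstacle is the treatment of the remaining factors, namely $\mathbb{E}_{\hat\phi^i}(\log p(\mathbf{s}_1)|\mathbb{Y}_T)$ and $\sum_k \mathbb{E}_{\hat\phi^i}(\log p_{\phi_k}(\mathbf{s}_{k+1}|\mathbf{s}_k)|\mathbb{Y}_T)$, which must be shown to be absorbable into the $\phi$-independent constant $C(\hat\phi^i)$. I would argue this from the model structure: the initial-state density carries no controller parameter, and the transition factor is governed by the fitted dynamic model \eqref{ltv_eq}, whose parameters $\mathbf{A}^o_k,\pmb{\Sigma}^o_k$ are frozen during the optimization of \eqref{valuefn_ltv2}, while the expectation is taken under the fixed measure $p_{\hat\phi^i}(\mathbb{S}_{T+1}|\mathbb{Y}_T)$ so that the latent-trajectory law does not move with the maximization variable $\phi$. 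The nontrivial point I would need to verify carefully is that no residual $\phi$-dependence leaks from the transition factor into the objective, i.e.\ that the controller parameter influences the surrogate \emph{only} through the cost term of \eqref{plan:split}; the exponential-family justification in the remark following Assumption~\ref{assumption:pdf_Yk} is precisely what is expected to make this channel exact rather than merely approximate, and I would isolate it as the single modeling step on which the whole argument rests.

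Finally, I would assemble the pieces: identity \eqref{plan:key} together with \eqref{calLi} and $\lambda-1>0$ delivers \eqref{EVi}. I would also note in passing that the same decomposition, fed into Lemma~\ref{lemma:EM_proof}, re-proves that the observation log-likelihood $L_\phi(\mathbb{Y}_T)$ is non-decreasing along the iterates, so that the cost-reduction guarantee \eqref{EVi} and the likelihood-ascent guarantee are two faces of the single relation \eqref{plan:key}.
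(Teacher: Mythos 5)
Your cost-channel computation and the final sign argument are correct and match the paper's: $\log p(y_k)=\log\lambda-(\lambda-1)Y_k(\mathbf{s}_k,\phi_k)$ via Lemma~\ref{lemma_prob2}, summation gives $T\log\lambda-(\lambda-1)V_\phi(\mathbb{S}_{T+1})$, and $\lambda>1$ converts a likelihood increase into a cost decrease. The gap is in the identity on which you build the whole argument: the claim that $\mathcal{L}(\phi,\hat{\phi}^i)$ equals a $\phi$-independent constant minus $(\lambda-1)\,\mathbb{E}_{\hat\phi^i}(V_\phi(\mathbb{S}_{T+1})\mid\mathbb{Y}_T)$ is false, and the point you flag as ``to be verified carefully'' does not verify. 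The factor $p_{\phi_k}(\mathbf{s}_{k+1},y_k\mid\mathbf{s}_k)$ in \eqref{pphi_sys} is obtained by marginalizing the action $\mathbf{a}_k$ against the policy \eqref{control}, so $\phi_k$ enters the state-transition channel, not only the cost channel. Lemma~\ref{lemma:LTV} makes this explicit: $\bar{\mathcal{L}}_k(\phi_k,\hat{\phi}^i)$ contains terms such as $\mathbf{B}^d_k\,\mathbb{E}_{\hat\phi^i}(\mathbf{a}_k\mathbf{a}_k^\top\mid\mathbb{Y}_T)\,{\mathbf{B}^d_k}^\top$ weighted by ${\pmb{\Sigma}^d_k}^{-1}$, which depend on $\mbF_k,\mbe_k,\mbSigma_k$ through the dynamics block alone; the entire uniqueness analysis of Theorems~\ref{uniqueness1} and \ref{theorem:global_maximizer} lives in exactly the $\phi$-dependence that your constant $C(\hat{\phi}^i)$ would have to swallow. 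Equivalently, $\mathcal{L}(\phi,\hat{\phi}^i)=L_\phi(\mathbb{Y}_T)+\mathbb{E}_{\hat\phi^i}(\log p_\phi(\mathbb{S}_{T+1}\mid\mathbb{Y}_T)\mid\mathbb{Y}_T)$, and the second (cross-entropy) term moves with $\phi$ because the posterior over the latent states does.

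The repair is the step you relegate to a closing aside: Lemma~\ref{lemma:EM_proof} must come first. The paper's proof applies it to convert hypothesis \eqref{calLi} into $L_{\hat{\phi}^{i+1}}(\mathbb{Y}_T)\geq L_{\hat{\phi}^i}(\mathbb{Y}_T)$ --- the Gibbs inequality hidden in that lemma disposes of exactly the $\phi$-dependent cross-entropy term that breaks your identity --- and only then establishes the affine relation \eqref{eq:maineq_theoremv1} between the observation log-likelihood $L_\phi(\mathbb{Y}_T)$, which involves only the $y_k$ and hence only the cost channel, and $\mathbb{E}_{\hat\phi^i}(V_\phi(\mathbb{S}_{T+1})\mid\mathbb{Y}_T)$. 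With that reordering, your computation of the cost channel and the $\lambda>1$ sign argument go through essentially verbatim.
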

 
\begin{proof}
First, by Lemma~\ref{lemma:EM_proof}, \eqref{calLi} implies 
\begin{align}
L_{\hat{\phi}^{i+1}} (\mathbb{Y}_T)  - L_{\hat{\phi}^i} (\mathbb{Y}_T) \geq 0.
\end{align}
Denote $\hat{\phi}^{\imath}=[(\hat{\phi}^{\imath}_1)^\top,\cdots, (\hat{\phi}^{\imath}_T)^\top]^{\top}$ for $\imath =i, i+1$.
One has
\begin{align*}
 L_{\hat{\phi}^\imath} (\mathbb{Y}_T)
  = &  \log  p_{\hat{\phi}^\imath} (\mathbb{Y}_T)  \\
 = &  \mathbb{E}_{ p_{\hat{\phi}^{i}} (\mathbb{S}_{T+1} | \mathbb{Y}_T)}   [\log  p_{\hat{\phi}^\imath} (\mathbb{Y}_T) ]  \\
      =&   \mathbb{E}_{ p_{\hat{\phi}^{i}} (\mathbb{S}_{T+1} | \mathbb{Y}_T)}   [ \sum_{k=1}^{T} \log  p_{\hat{\phi}_k^\imath} ( y_k ) ].\end{align*}
With $p_{\hat{\phi}_k^\imath} ( y_k )  =  p  (y_k(\mathbf{s}_k,\hat\phi^\imath_k) )$, the above calculation continues as follows, 
by utilizing the results of Lemma \ref{lemma_prob2},
\begin{align*}
 L_{\hat{\phi^\imath}} (\mathbb{Y}_T)
  = & \mathbb{E}_{ p_{\hat{\phi}^{i}} (\mathbb{S}_{T+1} | \mathbb{Y}_T)}  [ \sum_{k=1}^{T} \log  p  (y_k(\mathbf{s}_k,\hat\phi^\imath_k) )] \\
 =&  \mathbb{E}_{ p_{\hat{\phi}^{i}} (\mathbb{S}_{T+1} | \mathbb{Y}_T)}  [\sum_{k=1}^T \log \lambda (  y_k(\mathbf{s}_k,\hat\phi^\imath_k) )^{\lambda-1}  ] \\
 =& \mathbb{E}_{ p_{\hat{\phi}^{i}} (\mathbb{S}_{T+1} | \mathbb{Y}_T)} [\sum_{k=1}^T (\lambda-1) (-
Y_k( \mathbf{s}_k,\hat\phi^\imath_k )) ] +T\log \lambda\\
 =& - (\lambda-1)  \mathbb{E}_{ p_{\hat{\phi}^{i}} (\mathbb{S}_{T+1} | \mathbb{Y}_T)} [\sum_{k=1}^T Y_k( \mathbf{s}_k,\hat\phi^\imath_k ) ] +T  \log  \lambda\\
 =& - (\lambda-1) \mathbb{E}_{ p_{\hat{\phi}^{i}} (\mathbb{S}_{T+1} | \mathbb{Y}_T)} [\sum_{k=1}^T Y_k( \mathbf{s}_k,\hat\phi^\imath_k ) ] +T  \log  \lambda.
\end{align*}
Next, from \eqref{cumsum_obj1}, i.e., $V_{\hat\phi^\imath} ({\mathbb{S}_{T+1}})  = \sum_{k=1}^TY_k( \mathbf{s}_k,\hat\phi^\imath_k )$, one has
\begin{align}\label{eq:maineq_theoremv1}
 L_{\hat{\phi^\imath}} (\mathbb{Y}_T)
 =& - (\lambda-1)  \mathbb{E}_{\hat{\phi}^{i}}  (V_{\hat\phi^\imath} ({\mathbb{S}_{T+1}})   | \mathbb{Y}_T )  +T  \log  \lambda.
\end{align}
As a result, 
\begin{align*} 
0\leq  &L_{\hat{\phi}^{i+1}} (\mathbb{Y}_T) -L_{\hat{\phi^i}} (\mathbb{Y}_T) \\
 =& - (\lambda-1)    \big[ \mathbb{E}_{\hat{\phi}^{i} } (V_{\hat\phi^{i+1}} ({\mathbb{S}_{T+1}})  | \mathbb{Y}_T)   - 
 \mathbb{E}_{\hat{\phi}^{i} } (V_{\hat\phi^{i}} ({\mathbb{S}_{T+1}})  | \mathbb{Y}_T)    \big].
 \end{align*}
 It implies \eqref{EVi} and completes the proof. 
\end{proof}

{\begin{remark}
Theorem~\ref{thm:main_theorem} takes into account the exponential transformation according to \eqref{expo_transformation} and
\eqref{pYk} to ensure the decrease of the expected cost-to-go with increased likelihood. It suggests an effective approximate approach for the minimization objective in optimal control through pursuing the maximum likelihood objective. 
This approach is intuitively consistent with some results in literature. For example,  the research in \cite{toussaint2009robot} claimed that maximum likelihood based inference is an approximation of the iLQG-based solution to a SOC problem. 
As an approximate class of inference based techniques, a maximum likelihood method was also used in \cite{kappen2012optimal}. 
Some similar approximate relationship between a reward proportional likelihood objective 
and a policy gradient objective function was revealed in inference based policy search \cite{toussaint2006probabilistic}.

\end{remark}
}

 \section{A Practical Solution to SOC-EM}  \label{section:EMsolution}
 
After having established the relationship between EM and optimal control this paper proceeds towards a closed form solution of 
 $\arg\max_{\phi} \mathcal{L} (\phi, \hat{\phi}^i)$. The first step is   to deliver an explicit expression of the mixture likelihood
 associated with the dynamic model  \eqref{ltv_eq} and the controller \eqref{control}.

\subsection{Explicit expression of mixture likelihood}

The explicit expression of the mixture likelihood $\mathcal{L}(\phi, \hat{\phi^i})$ defined in \eqref{Ltheta_kheta1}  
is given in the following lemma.

\begin{lemma} \label{lemma:LTV}
The function  $\mathcal{L}(\phi, \hat{\phi^i})$ for the dynamic model  \eqref{ltv_eq} and the controller \eqref{control} can be expressed 
as follows,   \begin{align} \label{calLbarcalL}
\mathcal{L} (\phi,\hat{\phi^i})    =\log p(\mathbf{s}_1) +  \sum_{k=1}^T  \bar{\mathcal{L}}_k (\phi_k, \hat{\phi}^i),
\end{align}
for
\begin{align} \label{ltv_surrogate}
\log p(\mathbf{s}_1) =& -\frac{1}{2} \log |\mathbf{P}_1| + (\mathbf{s}_1 - \pmb{\mu}_1)^\top \mathbf{P}_1^{-1} (\mathbf{s}_1 - \pmb{\mu}_1) , \nonumber \\
  {\bar{\mathcal{L}} _k(\phi_k,\hat{\phi^i})} =&   -\frac{1}{2} \Tr \{ {{\pmb{\Sigma}}_k^o}^{-1}(\Theta_1 (\phi_k) - \Theta_2 (\phi_k) {\mathbf{A}^o_k}^{\top}  \nonumber\\
     &   - {\mathbf{A}^o_k}  {\Theta_2(\phi_k)}^{\top} +  {\mathbf{A}^o_k} \Theta_3 (\phi_k) {\mathbf{A}^o_k}^{\top}) \} -\frac{1}{2} \log  |{{{\pmb{\Sigma}}_k^o}}| ,
\end{align}
where
$\pmb{\mu}_1$ and $\mathbf{P}_1$ are some known mean and covariance of the initial state $\mathbf{s}_1$
and the other terms are defined by 
\begin{align}
\Theta_1 (\phi_k) & = \mathbb{E}_{\hat\phi^i} ({\mbzeta_k \mbzeta_k ^\top | \mathbb{Y}_{T}) }  \label{equation61}\\
 \Theta_2 (\phi_k)  & =  \mathbb{E}_{\hat\phi^i} ({\mbzeta_k \mbz_k ^\top | \mathbb{Y}_{T}) }\label{equation62}\\
 \Theta_3 (\phi_k) & =   \mathbb{E}_{\hat\phi^i} ({\mbz_k \mbz_k ^\top | \mathbb{Y}_{T}) }  \label{equation63} \end{align}
for $\mbzeta_k = \col (\mbs_{k+1} ,\; y_k)$ and $\mbz_k = \col (\mathbf{s}_k, \mathbf{a}_k )$.

\medskip
N.B. The terms $\Theta_1, \Theta_2, \Theta_3$ depend on  $\phi_k$  due to \eqref{control}.

%
%
 \end{lemma}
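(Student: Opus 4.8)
The plan is to reduce the conditional expectation of a sum of Gaussian log-densities to the stated trace expression, term by term. First I would invoke the factorization \eqref{pphi_sys} to write
\[
\log p_\phi(\mathbb{S}_{T+1}, \mathbb{Y}_T) = \log p(\mathbf{s}_1) + \sum_{k=1}^T \log p_{\phi_k}(\mathbf{s}_{k+1}, y_k \mid \mathbf{s}_k).
\]
Applying the conditional expectation $\mathbb{E}_{\hat{\phi}^i}(\cdot \mid \mathbb{Y}_T)$ from the definition \eqref{Ltheta_kheta} and using its linearity splits $\mathcal{L}(\phi, \hat{\phi}^i)$ into the initial-state contribution and a sum of per-step contributions $\bar{\mathcal{L}}_k(\phi_k, \hat{\phi}^i) = \mathbb{E}_{\hat{\phi}^i}(\log p_{\phi_k}(\mathbf{s}_{k+1}, y_k \mid \mathbf{s}_k) \mid \mathbb{Y}_T)$, which already has the additive shape of \eqref{calLbarcalL}. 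The initial term follows directly by substituting the Gaussian density of $\mathbf{s}_1$ with mean $\pmb{\mu}_1$ and covariance $\mathbf{P}_1$ and discarding the $\phi$-independent normalizing constant, producing the first line of \eqref{ltv_surrogate}.

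The heart of the argument is each per-step term. With $\mbzeta_k = \col(\mathbf{s}_{k+1}, y_k)$ and $\mbz_k = \col(\mathbf{s}_k, \mathbf{a}_k)$, the dynamic model \eqref{ltv_eq} is the Gaussian $\mathcal{N}(\mathbf{A}^o_k \mbz_k, \pmb{\Sigma}^o_k)$, so up to an additive constant independent of $\phi$,
\[
\log p(\mbzeta_k \mid \mbz_k) = -\tfrac{1}{2}(\mbzeta_k - \mathbf{A}^o_k \mbz_k)^\top (\pmb{\Sigma}^o_k)^{-1}(\mbzeta_k - \mathbf{A}^o_k \mbz_k) - \tfrac{1}{2}\log|\pmb{\Sigma}^o_k|.
\]
I would then apply the identity $v^\top M v = \Tr(M v v^\top)$ with $v = \mbzeta_k - \mathbf{A}^o_k \mbz_k$ and expand $v v^\top$ into the four products $\mbzeta_k\mbzeta_k^\top$, $\mbzeta_k\mbz_k^\top(\mathbf{A}^o_k)^\top$, $\mathbf{A}^o_k\mbz_k\mbzeta_k^\top$, and $\mathbf{A}^o_k\mbz_k\mbz_k^\top(\mathbf{A}^o_k)^\top$. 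Pushing $\mathbb{E}_{\hat{\phi}^i}(\cdot \mid \mathbb{Y}_T)$ through the trace by linearity, identifying the second moments with $\Theta_1, \Theta_2, \Theta_3$ from \eqref{equation61}--\eqref{equation63}, and using $\mathbb{E}_{\hat{\phi}^i}(\mbz_k\mbzeta_k^\top \mid \mathbb{Y}_T) = \Theta_2(\phi_k)^\top$ for the cross term, reproduces exactly the bracketed expression in \eqref{ltv_surrogate}; the $-\tfrac{1}{2}\log|\pmb{\Sigma}^o_k|$ term passes through unchanged since it is free of the latent variables. Summing over $k$ and adjoining the initial term gives the claimed identity.

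The step I expect to be the main obstacle — and the one I would treat most carefully — is the precise role of the control action inside $\mbz_k$. The density $p_{\phi_k}(\mathbf{s}_{k+1}, y_k \mid \mathbf{s}_k)$ arises from the dynamic model only after $\mathbf{a}_k$ is drawn from the policy \eqref{control} parameterized by $\phi_k$, so the expectations defining $\Theta_1, \Theta_2, \Theta_3$ must be read as joint expectations over the smoothed state law under $\hat{\phi}^i$ together with the action law under $\phi_k$; this coupling is exactly what renders all three moment matrices functions of $\phi_k$, as flagged in the N.B. I would make this bookkeeping explicit — how $\mathbf{a}_k$ enters $\mbz_k$ and propagates into $\mbzeta_k$ through the dynamics — so that the asserted $\phi_k$-dependence is justified rather than assumed, and I would verify that every $\phi$-independent normalizing constant (including the $2\pi$ factors) is legitimately absorbed, since these do not affect the subsequent maximization of $\mathcal{L}$ over $\phi$.
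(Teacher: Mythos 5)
Your proposal follows essentially the same route as the paper's proof: factorize the joint likelihood via \eqref{pphi_sys}, split the conditional expectation by linearity, substitute the Gaussian log-densities, and convert the quadratic form to a trace of the expected outer product, which is then expanded into the moment matrices $\Theta_1,\Theta_2,\Theta_3$. Your additional care about how $\mathbf{a}_k$ enters $\mbz_k$ and induces the $\phi_k$-dependence of the $\Theta$ terms, and about discarding $\phi$-independent normalizing constants, is a sound elaboration of what the paper leaves implicit, but it does not change the argument.
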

 \begin{proof}
To begin with, the application of Bayes' rule and leveraging the time varying dynamic model   \eqref{ltv_eq}, one can express the mixture likelihood function as follows,
\begin{align}
 \mathcal{L} (\phi, \hat{\phi^i}) &=  \mathbb{E}_{p_{\hat{\phi}^i} (\mathbb{S}_{T+1}|\mathbb{Y}_T)} (  \log  p_\phi (\mathbb{S}_{T+1}, \mathbb{Y}_T) ) \nonumber \\
&=  \mathbb{E}_{p_{\hat{\phi}^i} (\mathbb{S}_{T+1}|\mathbb{Y}_T)} \log( p(\mathbf{s}_1) \prod_{k=1}^{T} p_{\phi_k} (\mathbf{s}_{k+1},y_k|\mathbf{s}_k)) \nonumber \\
 &= \log p(\mathbf{s}_1) + \sum_{k=1}^T \mathbb{E}_{p_{\hat{\phi}^i} (\mathbf{s}_k|\mathbb{Y}_T)  }  \log   p_{\phi_k} (\mathbf{s}_{k+1}, y_k | \mathbf{s}_k) \label{lbar}  
 \end{align}
 which is \eqref{calLbarcalL} with 
 \begin{align}  \label{barcalL1}
 \bar{\mathcal{L}}_k (\phi_k, \hat{\phi}^i) =
 \mathbb{E}_{p_{\hat{\phi}^i} (\mathbf{s}_k |\mathbb{Y}_T)  }  \log   p_{\phi_k} (\mathbf{s}_{k+1}, y_k | \mathbf{s}_k).
  \end{align}
The expression of $\log p(\mathbf{s}_1)$ given in \eqref{ltv_surrogate} is straightforward 
by using the log of Gaussian p.d.f. of the initial state $\mathbf{s}_1$.
Again, using the log of a Gaussian p.d.f. in \eqref{barcalL1} gives 
 \begin{align*}  
 & -2 \bar{\mathcal{L}}_k (\phi_k, \hat{\phi}^i) \\
 = &    \log |\pmb{\Sigma}^o_k|   + \mathbb{E}_{p_{\hat{\phi}^i} (\mathbf{s}_k |\mathbb{Y}_T)  } \nonumber\\
     &    \Big(\begin{bmatrix}
     \mathbf{s}_{k+1} \\
     y_k
     \end{bmatrix} - \mathbf{A}^o_k \begin{bmatrix}
     \mathbf{s}_{k} \\
     \mathbf{a}_k
     \end{bmatrix}
     \Big)^\top {\pmb{\Sigma}^o_k}^{-1} \Big(\begin{bmatrix}
     \mathbf{s}_{k+1} \\
     y_k
     \end{bmatrix} - \mathbf{A}^o_k \begin{bmatrix}
     \mathbf{s}_{k} \\
     \mathbf{a}_k
     \end{bmatrix}
     \Big) \\
     = &   \mathbb{E}_{p_{\hat{\phi}^i} (\mathbf{s}_k |\mathbb{Y}_T)  }  \Tr  [ {\pmb{\Sigma}^o_k}^{-1} (\mbzeta_k- \mathbf{A}^o_k \mbz_k) (\mbzeta_k- \mathbf{A}^o_k \mbz_k)^\top ]   +  \log |\pmb{\Sigma}^o_k| , \end{align*}
 which matches the expression given in \eqref{ltv_surrogate}.
 The lemma is thus proved.   \end{proof} 
  

  More specifically,  the terms $\Theta_1(\phi_k)$, $\Theta_2(\phi_k)$, and $ \Theta_3(\phi_k)$
can be derived in a straightforward manner. One can refer to Appendix \ref{appednixB} for more explicit details. 
It is noted that they
are composed of elements which can be evaluated from 
 \begin{align} \label{smoothed_entities}
 \mathbb{E}_{\hat{\phi}^i} (\mathbf{s}_k|\mathbb{Y}_T), \; \mathbb{E}_{\hat{\phi}^i} (\mathbf{s}_k \mathbf{s}_k^\top|\mathbb{Y}_T),\;
  \mathbb{E}_{ \hat{\phi}^i} (\textbf{s}_{k+1} \textbf{s}_k^\top|\mathbb{Y}_T).
 \end{align}
In order to evaluate the above mentioned terms, one can take advantage of time-varying linear Kalman filter and R.T.S. smoother components that are introduced below.  Readers are referred to \cite{jazwinski2007stochastic} (Pages 201 - 217) for more details about the procedure. However one cannot use the standard version of filtering and smoothing because in our case where the control action has Gaussian noise, therefore one has to augment the state space modeling to incorporate the covariance of the control action as well.    Specifically,  the Kalman filter equations after augmentation are shown below
where the inputs are $y_k$, $\mbF_k$, $\mbe_k$, and $\mbSigma_k$, for $k=1,2,.., T$, noting the definitions of  $\mathbb{Y}_T$ and $\phi$.
At each iteration, it is implemented with $\phi = \hat{\phi}^i$.
 
 For $k=1,2,.., T$, the time-varying Kalman filter equations (with initialization $\check{\mbs}_{1|1}=\mbs_{1}$ and $\check{\mathbf{P}}_{1|1}={\mathbf{P}}_{1}$) are
\begin{align*}
    {\check{\mathbf{s}}}_{k+1|k} &=  {\widetilde{\mathbf{A}}^d_{k}} \check{\mathbf{s}}_{k|k} + {\mathbf{B}_{k}^d}  { \mbe_{k}  } , \\
     \check{\mathbf{P}}_{k+1|k}&={\widetilde{\mathbf{A}}^d_{k}} \check{\mathbf{P}}_{k|k} ({\widetilde{\mathbf{A}}^d_{k}})^{\top} + \widetilde{{\pmb{\Sigma}}}_{k}^d  ,   \\
  \check{\mathbf{K}}_{k+1} &= \check{\mathbf{P}}_{k+1|k} ( {\widetilde{\mathbf{A}}^r_{k}} )^{\top} ({\widetilde{\mathbf{A}}^r_{k}} \check{\mathbf{P}}_{k+1|k} ({\widetilde{\mathbf{A}}^r_{k}})^{\top} + {\widetilde{\pmb{\Sigma}}_{k}^r})^{-1} , \\
     \check{\mathbf{P}}_{k+1|k+1} &={ \check{ \mathbf{P} }}_{k+1|k} - \mathbf{\check{K}}_{k+1} {\widetilde{\mathbf{A}}^r_{k}} {\check{ \mathbf{P} } }_{k+1|k},\\
  \check{\mathbf{s}}_{k+1|k+1}&={\check{\mathbf{s}}_{k+1|k}} + \check{\mathbf{K}}_{k+1} (y_{k} -{\widetilde{\mathbf{A}}^r_{k}} \check{\mathbf{s}}_{k+1|k}- {\mathbf{B}^r_{k}} { \mbe_{k}} ) ,
\end{align*}
where ,
\begin{align*}
    {\widetilde{\mathbf{A}}^r_{k}} &= \mathbf{A}^r_{k} + \mathbf{B}^r_{k} { \mathbf{F}_{k}}, \\
    {\widetilde{\mathbf{A}}^d_{k}}& = \mathbf{A}^d_{k} + \mathbf{B}^d_{k} { \mathbf{F}_{k}} ,\\
    \widetilde{{\pmb{\Sigma}}}_{k}^d& = \mathbf{B}^d_{k} \mbSigma_{k} {\mathbf{B}^d_{k}} ^\top +\mbSigma^d_{k} ,\\
   \widetilde{{\pmb{\Sigma}}}_{k}^r &= \mathbf{B}^r_{k} \mbSigma_{k}  {\mathbf{B}^r_{k}} ^\top +\mbSigma^r_{k}.
\end{align*}
The time-varying recursive smoother equations are given as follows, for { $k=T, T-1,\cdots,1$}, 
with $\hat{\mathbf{s}}_{T+1|T} = \check{\mathbf{s}}_{T+1|T} $ and $\hat{\mathbf{P}}_{T+1|T} =\check{\mathbf{P}}_{T+1|T}$,
\begin{align*}
        \mathbf{J}_k &=\check{\mathbf{P}}_{k|k} {\widetilde{\mathbf{A}}^d_{k}} (\check{\mathbf{P}}_{k+1|k})^{-1} , \\
      \hat{\mathbf{s}}_{k|T} &= \check{\mathbf{s}}_{k|k} + \mathbf{J}_{k} (\hat{\mathbf{s}}_{k+1|T} -       \check{\mbs}_{k+1|k} )  , \\
    \hat{\mathbf{P}}_{k|T} &=\check{\mathbf{P}}_{k|k} + \mathbf{J}_k (\hat{\mathbf{P}}_{k+1|T}- \check{\mathbf{P}}_{k+1|k} ) \mathbf{J}_k^{\top} .
\end{align*}
One can calculate the one-lag smoothed term $ \hat{\textbf{M}}_{k|T} $ backwards
with the initialization and the iteration with $k=T,T-1,\cdots,2$, as follows,  
\begin{align*}
   \hat{\mathbf{M}}_{T+1|T} & = (\mathbf{I} - \mathbf{\check{K}}_{T+1} \widetilde{\mathbf{{A}}}_T^r ) \widetilde{\mathbf{A}}_T^d \check{\mathbf{P}}_{T|T}  \\
\hat{\mathbf{M}}_{k|T} &= \check{\mathbf{P}}_{k|k} \mathbf{J}_{k-1}^{\top}+ \mathbf{J}_k (\hat{\mathbf{M}}_{k+1|T} -  \widetilde{\mathbf{A}}_{k}^d \check{\mathbf{P}}_{k|k})\mathbf{J}_{k-1}^{\top} .\end{align*}

After  evaluating the filtered, smoothed estimates of states, the error covariance matrices and one lag covariance matrices for all time steps, one can evaluate the terms of  \eqref{smoothed_entities} as follows,
\begin{align}
       \mathbb{E}_{\hat{\phi}^i} (\mathbf{s}_k|\mathbb{Y}_T) &= \hat{\mathbf{s}}_{k|T}  \\
\mathbb{E}_{\hat{\phi}^i }  (\mathbf{s}_k \mathbf{s}_k^\top|\mathbb{Y}_T ) & =  \hat{\mathbf{s}}_{k|T} \hat{\mathbf{s}}_{k|T}^\top + \hat{\mathbf{P}}_{k|T} \triangleq \mathbf{G}_k  , \label{eq:G_k_def}\\
 \mathbb{E}_{ \hat{\phi}^i }(\textbf{s}_{k+1} \textbf{s}_k^\top|\mathbb{Y}_T) & = \hat{\mathbf{s}}_{k+1|T} \hat{\mathbf{s}}_{k|T} + \hat{\mathbf{M}}_{k+1 |T} \triangleq \mathbf{M}_{k+1|T} .
\end{align}
It is noted that $\hat{\mathbf{P}}_{k|T} > 0$; see \cite{gibson2005robust}-Lemma C.4.

%

\subsection{A practical algorithm for maximization of mixture likelihood} \label{sec:practical}

The attention is now turned towards maximization of the mixture likelihood 
$\mathcal{L} (\phi,\hat{\phi}^i)$, called the M-step in the EM architecture. 
 The proposed optimization paradigm seeks a better policy parameter $\phi =\hat\phi^{i+1}$ for the next iteration
 than $\phi =\hat\phi^{i}$ in the sense of maximizing (or increasing)
 $\mathcal{L} (\phi,\hat\phi^i)$. 
 From  Lemma~\ref{lemma:LTV}, one has 
 \begin{align} \label{optiphi}
\hat{ \phi}^{i*}    = \arg\max_{\phi} \mathcal{L} (\phi, \hat\phi^i)
= \arg\max_{\phi} \sum_{k=1}^T  \bar{\mathcal{L}}_k (\phi_k, \hat{\phi}^i).
\end{align}
So, it is ideal to select $\hat\phi^{i+1} = \hat{ \phi}^{i*} $.
 
However, it is typically difficult to compute the optimal $\hat{ \phi}^{i*}$
over the entire sequence of control action $\{\mathbf{a}_1, \mathbf{a}_2,.., \mathbf{a}_T\}$.
 The principle of EM as optimal control reduces the maximization of ${\mathcal{L}} (\phi,\hat{\phi^i})$ over the entire sequence of control action for each time step.  In other words, one tends to maximize ${\mathcal{L}} (\phi,\hat{\phi^i})$ for each time step according to the iterative procedure below. For $j=1,..,T$, we solve the local optimization problem recursively,
 \begin{align} \label{optiphij}
\hat{ \phi}_1^{i*}   
& =  \arg\max_{\phi_1  } \sum_{k=1}^T  \bar{\mathcal{L}}_k (\phi_1, \hat{\phi}^i) \nonumber\\
\hat{ \phi}_j^{i*}   
& =  \arg\max_{\phi_j   }   \sum_{k=1}^{T} \bar{\mathcal{L}}_k (\phi_j, \col (\hat{\phi}_1^{i*}, \cdots, \hat{\phi}_{j-1}^{i*}, 
\hat{\phi}_{j}^{i} ,\cdots, \hat{\phi}_{T}^{i}) )  , \nonumber\\
& j=2,\cdots, T
\end{align}
Then, a better policy parameter for the next iteration is selected as 
 $\hat{ \phi}^{i+1} = \col(\hat{ \phi}^{i*}_1, \cdots, \hat{ \phi}^{i*}_T) \approx \hat{ \phi}^{i*} $.
Obviously, the dimension of the optimization problem of $\phi_j$ in \eqref{optiphij}, for $j=1,\cdots, T$,
is significantly lower than that for $\phi$ in \eqref{optiphi}.
 For brevity we  would refer the above optimization problem as \textbf{SOC-EM I} in the subsequent part of the paper.
  
The optimal controller parameters of \cite{levine2016end,tassa2012synthesis} are dependent on time instant and as we utilize a similar framework, therefore  we also exploit the time dependent nature of control law. While implementing our methodology into practice the optimization routine  \eqref{optiphij} suffers from intensive nature of computational costs.  Therefore, we try increase the speed of parameter search by converting the mixture likelihood into a surrogate function which is mathematically cheaper and tractable to evaluate.
The modified optimization for each optimization instance  is
\begin{align} \label{optim2}
  \hat{\phi}^{i*}_j  = \arg\max_{\phi_j}  \sum_{k=1}^T  \bar{\mathcal{L}}_k (\phi_j,\hat\phi^i),\;
  j=1,\cdots,T,
\end{align}
for searching $\phi_j$ in a neighborhood of $\hat\phi_j^i$.
Then, a policy parameter for the next iteration is selected as 
$\hat{ \phi}^{i+1} = \col(\hat{ \phi}^{i*}_1, \cdots, \hat{ \phi}^{i*}_T) \approx \hat{ \phi}^{i*} $.
We would refer to this routine as \textbf{SOC-EM II}  in the sequel.
{\begin{remark}	 
The optimization problem \eqref{optiphi} is in the parameter space of a very high dimension of $T(n_s n_a + n_a^2+ n_a )$,
which motivates the practical approximation  \eqref{optiphij} and  \eqref{optim2} that are carried out with respect to an individual time step rather than all of $T$ steps in one go. After convergence of the optimization routine in one time step, the result is utilized in the immediately next one.  The approximation of SOC-EM I and  SOC-EM II is a heuristic approach to deal with heavy computational expense. There is a trade-off that undoubtedly needs to be made between complexity of the algorithm and the quality of the solutions. 
It has been tested on extensive experiments  that SOC-EM I and SOC-EM II have  similar performance and both of them 
demonstrate satisfactory performance in terms of the metrics described later in Section~\ref{sec:results}.
It is worth mentioning that another advantage of the approximation \eqref{optim2} is that it allows parallel computation with multiple CPUs, which tremendously reduces the computational time.  
\end{remark}
}

 It is worth mentioning that initializing EM with a considerably good parameter vector $\hat{\phi}^0$  is critical. For example, theoretical studies by \cite{baudry2015mixtures,mclachlan2007algorithm} revealed that the convergence of EM algorithm is highly dependent on the parameters with which it is initialized. In order to address the initialization issue, we employ the parameters of the well established  trajectory optimization strategies. In this paper, we specifically utilize  differential dynamic programming   based optimal control methods such as 1) iLQG (\cite{tassa2012synthesis}); 2) MPC (\cite{zhang2016learning}), and 3) BADMM (\cite{montgomery2016guided}) to carry forward the optimization routine. 
These three techniques will be called as the  \enquote{\textbf{baselines}} in the sequel.

{Now, the overall SOC-EM algorithm is summarized in Algorithm~\ref{alg:EM}.
The number of recursion in the algorithm is specified a priori. 
It is noted that every recursion starts with line 2 for model improvement with 
a new controller parameter. In practice, after a certain amount of recursions,
there is no more significant model improvement, therefore ``go to 2"  can be replaced
by ``go to 3" to skip  line 2.}  While carrying out the optimization, 
 $(\pmb{\hat{\Sigma}}^a_j)^{i*},\;  j = 1,2,\cdots, T$, the covariance matrix  component of $\hat{\phi}^{i*}_j$,  
 might (or certainly) loose its positive definiteness property, so in order to preserve it, we adopt {the approach originally} proposed by {\cite{trove.nla.gov.au/work/6003470}}. {In this strategy} instead of propagating $\pmb{\hat{\Sigma}}^a_k$, {one} propagates its square root, ${\pmb{\hat{\Sigma}}^a_k}^{\frac{1}{2}}$, i.e., 
  $ {\pmb{\hat{\Sigma}}^a_k}  = ({\pmb{\hat{\Sigma}}^a_k}^{\frac{1}{2}} ) ^\top {\pmb{\hat{\Sigma}}^a_k}^{\frac{1}{2}}$, 
  by carrying out a Cholesky decomposition before optimization. In order to save computational time, we also propagated the  square-roots of the filtered $\mathbf{\check{P}}_{k+1|k}$, $\mathbf{\check{P}}_{k+1|k+1}$ and smoothed $\mathbf{\hat{P}}_{k|T}$.

\floatname{algorithm}{Algorithm}
{ 
 \begin{algorithm}[t]
\caption{The SOC-EM Algorithm}\label{alg:EM}
\begin{algorithmic}[1]
		\STATE \textit{(Initialization)} Let $i=0$;  initialize $\hat{\phi}^0$ from one of the  baselines (iLQG/MPC/BADMM).
  
\STATE   Run the real system under the controller \eqref{control}  with the controller parameter vector $\hat{\phi}^i$ to collect the data set  $\mathcal{ D}$; 
identify the dynamic model \eqref{ltv_eq} by fitting it to $\mathcal{ D}$ via VB inference.

	\STATE  Generate the cost observations ${\mathbb{Y}_{T}}$  using
the dynamic model \eqref{ltv_eq}   
and the controller  \eqref{control} with the controller parameter vector $\hat\phi^i$.

	\STATE  Perform the Kalman filter and R.T.S. smoother recursions to evaluate \eqref{smoothed_entities}
	and hence  $ {\bar{\mathcal{L}} _k(\phi_k,\hat{\phi^i})}$, $k=1,\cdots, T$.
	 
	 \STATE Find $\hat{ \phi}_j^{i*} $, $j=1,\cdots T$ from SOC-EM I \eqref{optiphij} or SOC-EM II \eqref{optim2}. 
	 
	 \STATE  Let $\hat{ \phi}^{i*} = \col(\hat{ \phi}^{i*}_1, \cdots, \hat{ \phi}^{i*}_T)$ be the approximate solution 
	 to \eqref{valuefn_ltv2}.
	 
	 \STATE Update  $\hat{ \phi}^{i+1} = \hat{ \phi}^{i*} $. 
	 
	 \IF{$i+1 <$ number of recursion}
	 \STATE   Let $i=i+1$; go to 2.
	 \ENDIF
	 
%
%
	\STATE \textbf{Return} The controller parameter vector $\hat{\phi}^{i+1}$.
\end{algorithmic}
\end{algorithm}
}

%

\subsection{Uniqueness of controller parameter estimation}

The two theorems in this subsection exploit the closed form nature of the gradient and  the Hessian of the mixture log likelihood 
to deliver a theoretical proof of the uniqueness of solution to the optimization problem \eqref{optim2}, i.e.,  \textbf{SOC-EM II}. 
It is easy to verify that the theorems still hold  with $\hat\phi_k^i$ in  \eqref{jacobian_set_zero} and 
\eqref{total_matris_krace} replaced by $\hat\phi_k^{i*}$ for $k=1,\cdots, j-1$ and hence guarantee the 
the uniqueness of solution to the optimization problem \eqref{optiphij},  \textbf{SOC-EM I}. 

\begin{theorem} \label{uniqueness1} 
For the function ${\bar{\mathcal{L}}_k (\phi_k,\hat{\phi^i})}$ defined in \eqref{ltv_surrogate},  the following equation,  
\begin{align} \label{jacobian_set_zero}
    \nabla_{\phi_j} \Big\{  \sum_{k=1}^T  \Bar{\mathcal{L}}_k(\phi_j,\hat\phi^i) \Big\}   = \mathbf{0}, \; j=1,\cdots, T,
\end{align}
has a unique solution for any given parameter  $\hat{\phi}^i$.
\end{theorem}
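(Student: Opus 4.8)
The plan is to convert the stationarity condition \eqref{jacobian_set_zero} into an explicit \emph{affine} equation in $\phi_j=\col(\mbf_j,\mbe_j,\mbsigma_j)$ and then argue uniqueness by showing the associated (constant) Hessian is nonsingular. Starting from the closed form of $\bar{\mathcal{L}}_k$ in \eqref{ltv_surrogate}, I would first expose the dependence on the controller parameters by substituting the policy \eqref{control}, i.e. $\mba_k=\mbF_j\mbs_k+\mbe_j+\mbepsilon_k$ with $\mbepsilon_k\sim\mathcal{N}(\mathbf{0},\mbSigma_j)$ and $\mbSigma_j=(\mbSigma_j^{\frac12})^\top\mbSigma_j^{\frac12}$, into $\mbz_k=\col(\mbs_k,\mba_k)$. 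Because the smoothed state moments $\bar{\mbs}_k:=\mathbb{E}_{\hat\phi^i}(\mbs_k|\mathbb{Y}_T)$ and $\mathbf{G}_k$ in \eqref{eq:G_k_def} are evaluated under the fixed estimate $\hat\phi^i$ (hence are constants), and the injected noise $\mbepsilon_k$ is zero mean and independent of the smoothed states and of $\mbzeta_k$, the moment matrices take a transparent form: $\Theta_1$ is independent of $\phi_j$; $\Theta_2$ is affine in $(\mbF_j,\mbe_j)$; and $\Theta_3$ is quadratic in $(\mbF_j,\mbe_j)$ with an additive $+\mbSigma_j$ that appears \emph{only} in its action--action block. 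Thus every summand $\bar{\mathcal{L}}_k(\phi_j,\hat\phi^i)$ is a quadratic polynomial in $\phi_j$, the gradient in \eqref{jacobian_set_zero} is affine, and the equation has a unique root exactly when its Hessian is nonsingular.

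I would then compute the gradient in closed form via the standard matrix-calculus identities for $\nabla\Tr\{{\mbSigma^o_k}^{-1}(\cdots)\}$, vectorizing so that the bilinear trace terms become Kronecker-product forms in $\mbf_j=\text{vec}(\mbF_j)$, $\mbe_j$ and $\mbsigma_j=\text{vec}(\mbSigma_j^{\frac12})$. The key structural fact is that $\mbSigma_j$ enters the objective \emph{only} through the additive block of $\Theta_3$ and never couples with $(\mbF_j,\mbe_j)$; consequently the Hessian is block diagonal and the stationarity condition splits into (i) a linear system in $(\mbF_j,\mbe_j)$ and (ii) a separate linear matrix equation in $\mbSigma_j^{\frac12}$.

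Nonsingularity is then established block by block. Writing $\mathbf{B}_k:=\col(\mathbf{B}^d_k,\mathbf{B}^r_k)$ for the action columns of $\mathbf{A}^o_k$ and $\mathbf{R}_k:=\mathbf{B}_k^\top{\mbSigma^o_k}^{-1}\mathbf{B}_k$, the block-(i) coefficient is a sum over $k$ of Kronecker-type terms built from $\mathbf{R}_k$ and the augmented second-moment matrix $\left[\begin{smallmatrix}\mathbf{G}_k & \bar{\mbs}_k\\ \bar{\mbs}_k^\top & 1\end{smallmatrix}\right]=\left[\begin{smallmatrix}\hat{\mathbf{P}}_{k|T} & \mathbf{0}\\ \mathbf{0} & 0\end{smallmatrix}\right]+\col(\bar{\mbs}_k,1)\col(\bar{\mbs}_k,1)^\top$, which is positive definite because $\hat{\mathbf{P}}_{k|T}>0$ (see \cite{gibson2005robust}). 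With ${\mbSigma^o_k}>0$ (standing assumption) and $\mathbf{B}_k$ of full column rank, each $\mathbf{R}_k\succ0$, so the block-(i) coefficient is negative definite and yields a unique $(\mbF_j,\mbe_j)$. Block~(ii) reduces to $\mbSigma_j^{\frac12}\mathbf{R}=\mathbf{0}$ with $\mathbf{R}:=\sum_k\mathbf{R}_k\succ0$, whose unique solution is $\mbSigma_j^{\frac12}=\mathbf{0}$, consistent with the covariance-shrinkage property analysed later in the paper. Combining the two blocks shows that \eqref{jacobian_set_zero} has a unique solution.

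The step I expect to be most delicate is twofold: carrying out the vectorized differentiation carefully enough that the two stationarity conditions genuinely decouple into the clean linear forms above, and pinning down the requirement that the action block $\mathbf{B}_k=\col(\mathbf{B}^d_k,\mathbf{B}^r_k)$ have full column rank. This rank condition---a controllability-type property of the fitted model---is precisely what upgrades the Hessian from negative semidefinite to negative definite; without it the block-(i) coefficient is only semidefinite and the solution need not be unique, so it ought to be stated as an explicit standing assumption.
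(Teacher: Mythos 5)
Your proposal follows essentially the same route as the paper's proof: expose the quadratic dependence of $\sum_k\bar{\mathcal{L}}_k$ on $\phi_j$ so that \eqref{jacobian_set_zero} becomes an affine equation, note that the Hessian is block diagonal with the $\mbsigma_j$ block decoupled from $(\mbf_j,\mbe_j)$ (whence the stationary covariance component is $\mathbf{0}$, cf.\ Remark~\ref{remark-zerocovariance}), and obtain nonsingularity from positive definiteness of Kronecker-structured coefficients; your direct positivity argument for the augmented second-moment matrix via $\hat{\mathbf{P}}_{k|T}>0$ is the same fact the paper extracts through the congruence \eqref{decomposition}, where the Schur complement $\pmb{\mathcal{Z}}_k^{1}-\pmb{\mathcal{Z}}_k^{3}({\pmb{\mathcal{Z}}_k^{2}})^{-1}({\pmb{\mathcal{Z}}_k^{3}})^\top=\pmb{\mathcal{Z}}_k^{1,1}$ is exactly $2\hat{\mathbf{P}}_{k|T}\otimes({\mathbf{B}^d_k})^\top({\pmb{\Sigma}^d_k})^{-1}\mathbf{B}^d_k$. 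The one substantive divergence is the moment bookkeeping for the reward row. You treat $\Theta_1$ as constant and $y_k$ as fixed given $\mathbb{Y}_T$, so the only surviving quadratic coefficient is $\mathbf{B}_k^\top({\pmb{\Sigma}^o_k})^{-1}\mathbf{B}_k$ with $\mathbf{B}_k=\col(\mathbf{B}^d_k,\mathbf{B}^r_k)$. The paper instead substitutes the fitted model into the reward blocks of $\Theta_1$ and $\Theta_2$ (see Appendix~\ref{appednixB}, where $\gamma_2$ and $\gamma_5$ contain $\mathbf{B}^r_k\gamma_7{\mathbf{B}^r_k}^\top$), after which these contributions cancel in the combination $\Theta_1-\Theta_2{\mathbf{A}^o_k}^{\top}-\mathbf{A}^o_k\Theta_2^{\top}+\mathbf{A}^o_k\Theta_3{\mathbf{A}^o_k}^{\top}$, leaving $\pmb{\lambda}_k=({\pmb{\Sigma}^d_k})^{-1}\mathbf{B}^d_k\,\mathbb{E}_{\hat\phi^i}(\mathbf{a}_k\mathbf{a}_k^\top|\mathbb{Y}_T)\,{\mathbf{B}^d_k}^\top$ only. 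Consequently the paper's argument needs full column rank of $\mathbf{B}^d_k$ itself (forcing $n_a\le n_s$), a strictly stronger hypothesis than full column rank of your stacked $\mathbf{B}_k$; your closing remark is well taken, since this rank condition is invoked in the paper's proof without ever being stated as an explicit assumption. Both bookkeeping conventions yield a valid uniqueness proof under their respective rank conditions, so the proposal is correct in substance, but you should be aware that the Hessian you would obtain differs in this detail from the paper's $\pmb{\mathcal{Z}}_k$.
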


\begin{proof}  Recall that the function ${\bar{\mathcal{L}}_k (\phi_k,\hat{\phi^i})}$ defined in \eqref{ltv_surrogate} is expressed in terms of
$\Theta_1(\phi_k), \Theta_2(\phi_k), \Theta_3(\phi_k)$ in \eqref{equation61}-\eqref{equation63}. The terms are composed of
\begin{align*}
    \mathbb{E}_{{\hat{\phi}^i} } (y_k y_k^{\top}| {\mathbb{Y}_T}) ,\; \mathbb{E}_{{\hat{\phi}^i}} (\mathbf{a}_k \mathbf{a}_k^{\top}| {\mathbb{Y}_T}), \; \mathbb{E}_{{\hat{\phi}^i} } (y_k \mathbf{a}_k^{\top}| {\mathbb{Y}_T}),
\end{align*}  
where $\mathbf{a}_k$ explicitly depends on ${\phi}_k$.
From the detailed expression given in Appendix \ref{Jacs}, one has
\begin{align}
 \nabla_{\phi_j} \Big\{  \sum_{k=1}^T  \Bar{\mathcal{L}}_k(\phi_j,\hat\phi^i) \Big\}   
 =  \sum_{k=1}^T \nabla_{\phi_j } \Tr {\pmb{\lambda}_k}  + {\pmb{\mathcal{O}}}^\top    \label{equation_79}
\end{align}
where 
\begin{align}
 {\pmb{\lambda}_k}  = {\pmb{\Sigma}^d_k}^{-1} \mathbf{B}^d_k \mathbb{E}_{\hat{\phi}^i} \big( \mathbf{a}_k \mathbf{a}_k^\top |\mathbb{Y}_T  \big) {\mathbf{B}^d_k}^\top     
\end{align}
and ${\pmb{\mathcal{O}}}$  represents some constant column vector,  independent of ${\phi}_j =\col (\mbf_j, \mbe_j, \mbsigma_j)$.
In particular,  one has ${\pmb{\lambda}_k}= \pmb{\lambda}_{1,k}+\pmb{\lambda}_{2,k}+\pmb{\lambda}_{3,k} + \pmb{\lambda}_{4,k}+\pmb{\lambda}_{5,k}  $  with
  \begin{align*}
 \pmb{\lambda}_{1,k}&=(\pmb{\Sigma}^d_k)^{-1} \mathbf{B}^d_k \mathbf{F}_j \mathbf{G}_k \mathbf{F}_j^\top  {\mathbf{B}^d_k}^\top \\
 \pmb{\lambda}_{2,k}&=(\pmb{\Sigma}^d_k)^{-1} \mathbf{B}^d_k \mbe_j \mbe_j^\top {\mathbf{B}^d_k}^\top\\
 \pmb{\lambda}_{3,k}&=  (\pmb{\Sigma}^d_k)^{-1} \mathbf{B}^d_k \mbSigma_j {\mathbf{B}^d_k}^\top \\
 \pmb{\lambda}_{4,k}&= ({\pmb{\Sigma}^d_k})^{-1} \mathbf{B}^d_k \mathbf{F}_j \hat{\mathbf{s}}_{k|T} \mbe_j^\top {\mathbf{B}^d_k}^\top  \\
\pmb{\lambda}_{5,k}&= ({\pmb{\Sigma}^d_k})^{-1} \mathbf{B}^d_k \mbe_j  \hat{\mathbf{s}}_{k|T}^\top \mathbf{F}_j^\top {\mathbf{B}^d_k}^\top .
\end{align*}
Also,  ${\pmb{\mathcal{O}}}$ is of the special structure
\begin{align} \label{Ostructure}
 {\pmb{\mathcal{O}}} =   \begin{bmatrix}
   \sum_{k=1}^T \pmb{\mathcal{O}}_{1,k}\\
 \sum_{k=1}^T \pmb{\mathcal{O}}_{2,k}\\
 \mathbf{0}
   \end{bmatrix} , \end{align} 
with the dimensions of $\pmb{\mathcal{O}}_{1,k}$, $ \pmb{\mathcal{O}}_{2,k}$
and $\mathbf{0}$ corresponding to those of ${\mbf_j}$, ${\mbe_j}$ and  
$\mbsigma_j$, respectively. The explicit expression of  
$ \pmb{\mathcal{O}}_{1,k}$ and $ \pmb{\mathcal{O}}_{2,k}$ can be obtained from the equations in Appendix~\ref{Jacs}.

Below, we calculate the derivative of the terms in \eqref{equation_79} with respect to 
$\mbf_j$, $\mbe_j$, and $\mbsigma_j$, respectively. 
 
Firstly, with respect to  $\mbf_j$, one has 
\begin{align*}
      \nabla_{\mbf_j}   \Tr  \pmb{\lambda}_{1,k} =\mbf_j^\top  \pmb{\mathcal{Z}} ^{1}_k ,\;
      \nabla_{\mbf_j}  \Tr  \pmb{\lambda}_{2,k} =  0,\;
      \nabla_{\mbf_j}    \Tr  \pmb{\lambda}_{3,k} = 0
\end{align*}
and hence 
\begin{align*}
 & \nabla_{\mbf_j}  \sum_{k=1}^T    \Tr  \pmb{\lambda}_{4,k} \\
 &=\nabla_{\mbf_j}  \sum_{k=1}^T   \Tr  \Big\{ ({\pmb{\Sigma}^d_k})^{-1} \mathbf{B}^d_k \mathbf{F}_j \hat{\mathbf{s}}_{k|T} \mbe_j^\top {\mathbf{B}^d_k}^\top \Big\} \\ 
 &=\nabla_{\mbf_j}  \sum_{k=1}^T    {\Tr  \Big\{ (\mbe_j^\top {\mathbf{B}^d_k}^\top) ({\pmb{\Sigma}^d_k}^{-1} \mathbf{B}^d_k \mathbf{F}_j \hat{\mathbf{s}}_{k|T}  ) \Big\} } \\
 &=  (\mathbf{B}^d_k \mbe_j)^\top \nabla_{\mbf_j}  (\hat{\mathbf{s}}_{k|T}^\top \otimes { \pmb{\Sigma}_k^d}^{-1} \mathbf{B}^d_k)
 \mbf_j \\
   &  =   [(\hat{\mathbf{s}}_{k|T} \otimes {  {\mathbf{B}^d_k}^\top {\pmb{\Sigma}_k^d}^{-1}}^\top )  (\mathbf{I} \otimes {\mathbf{B}^d_k} )
   \mbe_j ]^\top \nonumber 
\end{align*}
and, similarly, 
\begin{align*}
    \nabla_{\mbf_j}    \Tr  \pmb{\lambda}_{5,k}  = [ (\hat{\mathbf{s}}_{k|T} \otimes {\mathbf{B}^d_k}^\top )  (\mathbf{I} \otimes  {  {\pmb{\Sigma}_k^d}^{-1}}  {{\mathbf{B}^d_k}}  )  \mbe_j]^\top.
\end{align*}
Here,   $\pmb{\mathcal{Z}} ^1_k = \pmb{\mathcal{Z}}_k^{1,0} +  \pmb{\mathcal{Z}} ^{1,1}_k$ with
\begin{align*}
   \pmb{\mathcal{Z}} _k^{1,0}   &= 2  \hat{\mathbf{s}}_{k|T} \hat{\mathbf{s}}_{k|T}^\top  \otimes  {\mathbf{B}^d_k}^\top {\pmb{\Sigma}^d_k}^{-1} \mathbf{B}^d_k   \\
   \pmb{\mathcal{Z}} _k^{1,1} &= 2  \hat{\mathbf{P}}_{k|T}   \otimes  {\mathbf{B}^d_k}^\top {\pmb{\Sigma}^d_k}^{-1} \mathbf{B}^d_k  .
\end{align*}

Secondly, with respect to  $\mbe_j$, one has 
\begin{align*}
\nabla_{\mbe_j}   \Tr  \pmb{\lambda}_{1,k} &= 0\\
\nabla_{\mbe_j}   \Tr  \pmb{\lambda}_{2,k} &= \mbe_j^\top \pmb{\mathcal{Z}} ^{2}_k \\
\nabla_{\mbe_j}   \Tr  \pmb{\lambda}_{3,k} &=   0 \\
\nabla_{\mbe_j}  \Tr  \pmb{\lambda}_{4,k} &= 
 \mbf_j^\top (\mathbf{I} \otimes {\mathbf{B}^d_k}^\top) (\hat{\mathbf{s}}_{k|T} \otimes { \pmb{\Sigma}_k^d}^{-1} \mathbf{B}^d_k ) \\
\nabla_{\mbe_j}   \Tr  \pmb{\lambda}_{5,k}  &= \mbf_j^\top (\mathbf{I} \otimes {\pmb{\Sigma}_k^d}^{-1} {\mathbf{B}^d_k})^\top (\hat{\mathbf{s}}_{k|T} \otimes \mathbf{B}^d_k ) 
\end{align*}
with
\begin{align*}
      \pmb{\mathcal{Z}} ^2_k  &=2    \mathbf{I} \otimes  {\mathbf{B}^d_k}^\top {\pmb{\Sigma}^d_k}^{-1} \mathbf{B}^d_k   . 
 \end{align*}
   
Thirdly, with respect to  $\mbsigma_j$, the only nonzero derivative is
\begin{align}
    \nabla_{\mbsigma_j} \Tr   \pmb{\lambda}_{3,k} =
    \mbsigma_j^\top   \pmb{\mathcal{Z}}^2_k, 
\end{align}
using  the equations in Appendix~\ref{Jacs}.

From above, the equation \eqref{jacobian_set_zero} is equivalent to    
   \begin{align} \label{optimal_estimates}
    \nabla_{\phi_j} \Big\{  \sum_{k=1}^T  \Bar{\mathcal{L}}_k (\phi_j,\hat\phi^i) \Big\}  =
\phi_j^\top   \begin{bmatrix}
   \pmb{\mathcal{Z}}  & \mathbf{0}\\
\mathbf{0} & \pmb{\mathcal{Z}}^2 
   \end{bmatrix}        +   \pmb{\mathcal{O}}^\top  = \mathbf{0}
\end{align}
where   $\pmb{\mathcal{Z}}^2 = \sum_{k=1}^T \pmb{\mathcal{Z}}^2_k$, $\pmb{\mathcal{Z}}  =\sum_{k=1}^T \pmb{\mathcal{Z}}_k$, 
$\pmb{\mathcal{Z}}_k=  \begin{bmatrix}
  \pmb{\mathcal{Z}}_k^1   &  \pmb{\mathcal{Z}}_k^3  \\
 {\pmb{\mathcal{Z}}_k^3}^\top &    {\pmb{\mathcal{Z}}_k^2 }
   \end{bmatrix}$ with \begin{align*}
   \pmb{\mathcal{Z}} ^3_k  =& 2 (  \mathbf{I}_{1} \otimes {\mathbf{B}^d_k}^\top) (\hat{\mathbf{s}}_{k|T}^\top \otimes {\pmb{\Sigma}^d_k}^{-1} {\mathbf{B}^d_k}).
      \end{align*}

 What is left is to prove the existence of a unique solution $\phi_j$ to the equation
 \eqref{optimal_estimates}. It suffices to show that  $\pmb{\mathcal{Z}}  >0$
 and  $ \pmb{\mathcal{Z}}^2>0$, or $\pmb{\mathcal{Z}}_k >0$
 and  $ \pmb{\mathcal{Z}}^2_k>0$.

 Since the matrix $\mathbf{B}^d_k$ has a  full column rank, $\pmb{\Sigma}^d_k >0$ and $\hat{\mathbf{P}}_{k|T} > 0$, one has
 $\pmb{\mathcal{Z}} ^2_k  >0$ and  $\pmb{\mathcal{Z}} _k^{1,1} >0$. 
Next,   the decomposition of the matrix $\pmb{\mathcal{Z}}_k$ gives
\begin{align} \label{decomposition}
     \pmb{\mathcal{Z}}_k  = & \begin{bmatrix}
   \mathbf{I} &  \pmb{\mathcal{Z}} ^3_k  {\pmb{\mathcal{Z}} ^2_k}^{-1} \\
   \mathbf{0} & \mathbf{I}
    \end{bmatrix} 
    \begin{bmatrix}
   \pmb{\mathcal{Z}} _k^{1}    -  \pmb{\mathcal{Z}} _k^3 {\pmb{\mathcal{Z}} _k^2}^{-1}  {\pmb{\mathcal{Z}} _k^3}^\top   &  \mathbf{0} \\
   \mathbf{0} &  \pmb{\mathcal{Z}} _k^2
   \end{bmatrix} \nonumber\\
   &  
    \begin{bmatrix}
   \mathbf{I} & \mathbf{0}  \\
 {\pmb{\mathcal{Z}} ^2_k}^{-1} {\pmb{\mathcal{Z}} ^3_k}^\top   & \mathbf{I}
    \end{bmatrix}.
   \end{align}
It is noted that $\pmb{\mathcal{Z}}_k ^{1,0} = \pmb{\mathcal{Z}}_k^3 {\pmb{\mathcal{Z}}_k^2}^{-1}  {\pmb{\mathcal{Z}}^3_k}^\top $, 
which implies
      \begin{align*}
 \pmb{\mathcal{Z}} _k^{1}    -  \pmb{\mathcal{Z}} _k^3 {\pmb{\mathcal{Z}} _k^2}^{-1}  {\pmb{\mathcal{Z}} _k^3}^\top
 =\pmb{\mathcal{Z}}_k ^{1,1} >0
      \end{align*}
 and hence $\pmb{\mathcal{Z}}_k >0$. The proof is thus completed.
\end{proof}

\begin{remark}  \label{remark-zerocovariance}
The unique solution $\phi_j$ to \eqref{optimal_estimates} is 
   \begin{align} 
    \hat\phi^{i*}_j  =-  \begin{bmatrix}
   \pmb{\mathcal{Z}}  & \mathbf{0}\\
\mathbf{0} & \pmb{\mathcal{Z}}^2 
   \end{bmatrix}  ^{-\top}  \pmb{\mathcal{O}}.    
\end{align}
Denote
$\hat\phi_j^{i*} =\col( \hat\mbf_j^{i*},  \hat\mbe_j^{i*},  \hat\mbsigma_j^{i*})$,
whose covariance matrix component  is
 $\hat\mbsigma_j^{i*}=\mathbf{0}$, due to
 \eqref{Ostructure}. However, in practical scenarios if we employ an optimization routine, then it would tend to decrease towards zero in an iterative manner which can also be validated from the simulation results. \end{remark}

The following theorem  addresses the  positive definiteness property of the negative Hessian of the surrogate function  of mixture log likelihood.

\begin{theorem} \label{theorem:global_maximizer}
For the function ${\bar{\mathcal{L}} (\phi_k,\hat{\phi}^i_k)}$ defined in \eqref{ltv_surrogate},  the following inequality  
\begin{align}
& - \nabla^2_{\phi_j} \Big\{  \sum_{k=1}^T  \bar{\mathcal{L}}_k (\phi_j,\hat{\phi}^i) \Big\}    > 0, \; j=1,\cdots, T  \label{total_matris_krace}
\end{align}
always holds for any given parameter $\hat{\phi^i}$. 
\end{theorem}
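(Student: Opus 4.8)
The plan is to recognize that Theorem~\ref{theorem:global_maximizer} is, in essence, the second-order repackaging of the positive-definiteness already proved inside Theorem~\ref{uniqueness1}. The starting observation is that $\sum_{k=1}^T \bar{\mathcal{L}}_k(\phi_j,\hat{\phi}^i)$ is a \emph{quadratic} function of $\phi_j = \col(\mbf_j,\mbe_j,\mbsigma_j)$. Indeed, through the policy \eqref{control} the action $\mathbf{a}_k$ is affine in $(\mbf_j,\mbe_j)$, while the control covariance $\mbSigma_j = (\mbSigma_j^{1/2})^\top\mbSigma_j^{1/2}$ is quadratic in $\mbsigma_j = \text{vec}(\mbSigma_j^{1/2})$; both enter $\bar{\mathcal{L}}_k$ only through the quadratic trace term carrying the leading factor $-\frac{1}{2}$ in \eqref{ltv_surrogate}. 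Hence the gradient is necessarily affine in $\phi_j$, which is exactly the content of \eqref{optimal_estimates},
\begin{align*}
\nabla_{\phi_j}\Big\{\sum_{k=1}^T \bar{\mathcal{L}}_k(\phi_j,\hat{\phi}^i)\Big\} = \phi_j^\top \begin{bmatrix} \pmb{\mathcal{Z}} & \mathbf{0}\\ \mathbf{0} & \pmb{\mathcal{Z}}^2\end{bmatrix} + \pmb{\mathcal{O}}^\top .
\end{align*}

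Differentiating once more, the Hessian is simply the \emph{constant} coefficient matrix of this affine map, independent both of $\phi_j$ and of the index $j$. Tracking the sign supplied by the $-\frac{1}{2}$ factor and by $\pmb{\Sigma}^o_k>0$ in \eqref{ltv_surrogate}, the negative Hessian coincides (up to a fixed positive scalar) with the block-diagonal matrix appearing above,
\begin{align*}
- \nabla^2_{\phi_j}\Big\{\sum_{k=1}^T \bar{\mathcal{L}}_k(\phi_j,\hat{\phi}^i)\Big\} \;\propto\; \begin{bmatrix} \pmb{\mathcal{Z}} & \mathbf{0}\\ \mathbf{0} & \pmb{\mathcal{Z}}^2\end{bmatrix}.
\end{align*}
The decisive structural point, already exploited in Theorem~\ref{uniqueness1}, is that the covariance direction $\mbsigma_j$ enters only through $\pmb{\lambda}_{3,k}$, which does not couple to the $\mbf_j,\mbe_j$ terms, so all mixed second derivatives between $\mbsigma_j$ and $(\mbf_j,\mbe_j)$ vanish and the matrix is genuinely block-diagonal.

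It then remains only to invoke the positive-definiteness argument of Theorem~\ref{uniqueness1}: since $\mathbf{B}^d_k$ has full column rank, $\pmb{\Sigma}^d_k>0$ and $\hat{\mathbf{P}}_{k|T}>0$, one has $\pmb{\mathcal{Z}}^2_k>0$ and $\pmb{\mathcal{Z}}^{1,1}_k>0$, and the Schur-complement factorization \eqref{decomposition}, together with the identity $\pmb{\mathcal{Z}}^{1,0}_k = \pmb{\mathcal{Z}}^3_k (\pmb{\mathcal{Z}}^2_k)^{-1} (\pmb{\mathcal{Z}}^3_k)^\top$, gives $\pmb{\mathcal{Z}}_k>0$; summing over $k$ preserves strict positive definiteness of both blocks. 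This establishes \eqref{total_matris_krace}.

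I expect the main obstacle to be bookkeeping rather than conceptual. Two points require care: first, confirming that the second derivative in $\mbsigma_j$ yields exactly $\pmb{\mathcal{Z}}^2$, recalling that $\mbSigma_j$ is quadratic (not linear) in $\mbsigma_j$, so the chain rule must be applied correctly to the square-root parametrization; and second, pinning down the overall sign, namely verifying that it is the \emph{negative} Hessian, and not the Hessian itself, that equals the positive definite matrix of \eqref{optimal_estimates} --- a fact guaranteed by the $-\frac{1}{2}$ prefactor and the positive definiteness of $\pmb{\Sigma}^o_k$. Once these are settled, the result is an immediate corollary of the first-order computation already carried out in Theorem~\ref{uniqueness1}.
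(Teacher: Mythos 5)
Your proposal is correct and follows essentially the same route as the paper: the paper likewise obtains the Hessian by differentiating once more the affine gradient computed in the proof of Theorem~\ref{uniqueness1}, observes that the only nonzero off-diagonal coupling is the $\mbf_j$--$\mbe_j$ block $\pmb{\mathcal{Z}}^3_k$ while the $\mbsigma_j$ direction decouples, and concludes positive definiteness of the negative Hessian from $\pmb{\mathcal{Z}}_k>0$ and $\pmb{\mathcal{Z}}^2_k>0$ via the Schur-complement factorization \eqref{decomposition}. The only difference is presentational: you argue abstractly that the Hessian of a quadratic is the constant coefficient matrix in \eqref{optimal_estimates}, whereas the paper grinds out each second partial explicitly, but the substance and the two points of care you flag (the square-root parametrization of $\mbSigma_j$ and the overall sign) match the paper's computation.
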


\begin{proof}
Following the proof of Theorem~\ref{uniqueness1},
one has
\begin{align}
 & - \nabla^2_{\phi_j} \Big\{  \bar{\mathcal{L}}_k (\phi_j,\hat\phi^i) \Big\}  =
        - \nabla^2_{\phi_j} \Big\{  \Tr  {\pmb{\lambda}_k}    \Big\}  \nonumber\\
   = & \begin{bmatrix}
    \frac{-\partial (\cdot) }{ \partial \mbf_j \partial \mbf_j^\top} &    \frac{-\partial (\cdot) }{ \partial \mbf_j \partial \mbe_j^\top} &    \frac{-\partial (\cdot) }{ \partial \mbf_j \partial \mbsigma_j^\top} \\
      \frac{-\partial (\cdot) }{ \partial \mbe_j \partial \mbf_j^\top} &   \frac{-\partial (\cdot) }{ \partial \mbe_j \partial \mbe_j^\top} &   \frac{-\partial (\cdot) }{ \partial \mbe_j \partial \mbsigma_j^\top} \\ 
      \frac{-\partial (\cdot) }{ \partial \mbsigma_j \partial \mbf_j^\top}  &   \frac{-\partial (\cdot) }{ \partial \mbsigma_j \partial \mbe_j^\top}   &   \frac{-\partial (\cdot) }{ \partial \mbsigma_j \partial \mbsigma_j^\top} \\
  \end{bmatrix}  \Tr   {\pmb{\lambda}_k}. \label{Hessianpf1}
\end{align} 

Firstly, we calculate the three diagonal elements of \eqref{Hessianpf1} below, in \eqref{eighteen}, \eqref{diag_2}, and \eqref{diag_3}, respectively. 
The calculation starts from 
\begin{align*}
    & \nabla^{2}_{\mbf_j} \Tr  \pmb{\lambda}_{1,k} \\
    =&  \nabla_{\mbf_j}  \left\{ \frac{\partial  \Tr  \pmb{\lambda}_{1,k}}{\partial \mbf_j^\top} \right\}^\top   \\
   = &   \nabla_{\mbf_j}  \left\{   \nabla_{\text{vec} \{\mathbf{X}\} } \Tr  (\pmb{\lambda}_{1,k}  )  \frac{\partial \{
   \text{vec} (\mathbf{X})\} }{\partial \mbf_j^\top}  \right\}^\top   
   \end{align*} 
for  $\mathbf{X}=\mathbf{B}^d_k \mbF_j$.   By invoking some basic identities,  
the equation continues
with   
   \begin{align*}
   =&  \nabla_{\mbf_j}  \{  ( {\pmb{\Sigma}^d_k}^{-1} \mathbf{X} \mathbf{G}_k + {{\pmb{\Sigma}^d_k}^{-1}}^\top \mathbf{X} \mathbf{G}_k^\top )^\top  (\mathbf{I}_{n_a} \otimes \mathbf{B}^d_k)   \} ^\top \\
    =&   \nabla_{\mbf_j}  \Big\{
  \Big( (  \mathbf{G}_k^\top \otimes {{\pmb{\Sigma}^d_k}^{-1}})
  (\mathbf{I}_{n_a} \otimes \mathbf{B}^d_k )  \mbf_j   \\
  &  +  ( \mathbf{G}_k \otimes {{\pmb{\Sigma}^d_k}^{-1}}^\top)
    (\mathbf{I}_{n_a} \otimes \mathbf{B}^d_k) \mbf_j   \Big)^\top (\mathbf{I}_{n_a} \otimes \mathbf{B}^d_k) \Big\}^\top \\
  =&  \nabla_{\mbf_j}   \{ (\pmb{\varpi}_{k} (\mathbf{I}_{n_a} \otimes \mathbf{B}^d_k ) \mbf_j )^\top (\mathbf{I}_{n_a} \otimes \mathbf{B}^d_k) \}^\top  \\
  =&  \nabla_{\mbf_j}   \{ \mbf_j^\top (\mathbf{I}_{n_a} \otimes {\mathbf{B}^d_k}^\top ) \pmb{\varpi}_{k} {(\mathbf{I}_{n_a} \otimes \mathbf{B}^d_k )} \}^\top  \\
    =& \nabla_{\mbf_j}   \{ \mbf_j^\top     \pmb{\mathcal{Z}} ^1_k     \}^\top 
\end{align*} 
for $ \pmb{\varpi}_{k} = \mathbf{G}_k^\top \otimes {\pmb{\Sigma}^d_k}^{-1}  + \mathbf{G}_k \otimes {\pmb{\Sigma}^d_k}^{-1}$.  
Furthermore, 
    \begin{align}
 \frac{-\partial (\Tr \pmb{\lambda}_{k}) }{ \partial \mbf_j \partial \mbf_j^\top}  =     \frac{-\partial (\Tr \pmb{\lambda}_{1,k}) }{ \partial \mbf_j \partial \mbf_j^\top} =      \nabla^2_{ \mbf_j}   \Tr  \pmb{\lambda}_{1,k}    =  \mathbf{I} \otimes \pmb{\mathcal{Z}} ^1_k .  \label{eighteen}
\end{align} 

Similarly, one has
\begin{align}
 \frac{-\partial (\Tr \pmb{\lambda}_{k}) }{ \partial \mbe_j \partial \mbe_j^\top} &=
 \frac{-\partial (\Tr \pmb{\lambda}_{2,k}) }{ \partial \mbe_j \partial \mbe_j^\top}    = \mathbf{I} \otimes  \pmb{\mathcal{Z}} ^2_k \label{diag_2}\\
    \frac{-\partial (\Tr \pmb{\lambda}_{k}) }{ \partial \mbsigma_j \partial \mbsigma_j^\top} &=
   \frac{-\partial (\Tr \pmb{\lambda}_{3,k}) }{ \partial \mbsigma_j \partial \mbsigma_j^\top}  = \mathbf{I} \otimes  \pmb{\mathcal{Z}} ^2_k. \label{diag_3}
\end{align}

Secondly, the only non-zero off-diagonal element of \eqref{Hessianpf1}  is   
\begin{align}
 &  
  \frac{-\partial ( \Tr  \pmb{\lambda}_{k} ) }{ \partial \mbf_j \partial \mbe_j^\top} =
  \frac{-\partial (\Tr  \pmb{\lambda}_{4,k} + \Tr  \pmb{\lambda}_{5,k} ) }{ \partial \mbf_j \partial \mbe_j^\top} 
  \nonumber \\
  = &   \nabla_{ \mbe_j^\top} \Big\{ [ (\hat{\mathbf{s}}_{k|T} \otimes {\mathbf{B}^d_k}^\top )  (\mathbf{I} \otimes  {  {\pmb{\Sigma}_k^d}^{-1}}  {{\mathbf{B}^d_k}}  )   \mbe_j ]^\top \nonumber \\
 & + 
 [(\hat{\mathbf{s}}_{k|T} \otimes {  {\mathbf{B}^d_k}^\top {\pmb{\Sigma}_k^d}^{-1}}^\top )  (\mathbf{I} \otimes {\mathbf{B}^d_k} ) \mbe_j ]^\top \Big\} = \mathbf{I} \otimes \pmb{\mathcal{Z}} ^3_k. \label{offdiag}
\end{align}

Finally, by combing \eqref{eighteen}, \eqref{diag_2}, \eqref{diag_3}, and \eqref{offdiag},  and noting $\pmb{\mathcal{Z}}_k >0$
 and  $ \pmb{\mathcal{Z}}^2_k>0$ in the proof of Theorem~\ref{uniqueness1},  one can verify 
\begin{align*}
    - \nabla^2_{\phi_j} \Big\{  \bar{\mathcal{L}}_k (\phi_j,\hat\phi^i) \Big\}  
 =  \begin{bmatrix}
       \mathbf{I} \otimes  \pmb{\mathcal{Z}} ^1_k  & \mathbf{I} \otimes \pmb{\mathcal{Z}}^3_k & 0 \\
         (\mathbf{I} \otimes \pmb{\mathcal{Z}}^3_k)^\top  &     \mathbf{I} \otimes  \pmb{\mathcal{Z}} ^2_k   &0  \\
  0 &  0&     \mathbf{I} \otimes  \pmb{\mathcal{Z}} ^2_k    \\
    \end{bmatrix} > 0,
\end{align*}  which implies   \eqref{total_matris_krace}.  
This ends the proof. 
\end{proof}

\subsection{Analysis of control covariance matrix}\label{section:noise}

{Balance between exploration and exploitation is an age-old problem in reinforcement learning. On one hand, exploration means how much state space needs to be explored to gather rich/good data so that one can improve the model and  ultimately the improved model delivers better policies. A covariance matrix in the stochastic control policy is thus indispensable in the exploration stage. 	
This is the reason we do not use a deterministic policy in the first place that does not allow any exploration. 
On the other hand, one needs to probe/search/optimize the explored region and hence reduce  the stochasticity in the control policies caused by exploration,  which is known as exploitation.
In this section, we show some theoretical results which reveal how exploitation takes over exploration in the proposed EM  
algorithm, representing by the convergence of the control covariance matrix. }

Let us first define the following notation
\begin{align} \label{eq_cal_I_def}
 \mathcal{I}  (\hat{\phi}^i, \hat{\phi}^{i+1}) =
  \mathbf{I} -  \left({  \nabla^2_{ \phi} \mathcal{L}(\phi, \hat{\phi}^i) \big|_{ \phi=  \hat{\phi}^{i+1}} }\right)^{-1}   \nabla^2_{ \phi}  L_\phi (\mathbb{Y}_T) \big|_{\phi=\hat{\phi}^i}       
\end{align}  
that can be used to describe the convergence error from $\hat{\phi}^{i} - \hat{\phi}_{EM}$
to  $\hat{\phi}^{i+1} - \hat{\phi}_{EM}$ in the following theorem.

\begin{theorem}  Let 
$\hat{\phi^i}$ be a  known parameter estimate and $\hat{\phi}^{i+1}$ satisfy
\begin{align} \label{parcalLzero}
\frac{\partial \mathcal{L}(\phi, \hat{\phi}^i)}{\partial \phi} \Big|_{\phi= \hat{\phi}^{i+1}} =\mathbf{0}.
\end{align}
 Then, for $\hat{\phi}_{EM}$ defined  in Lemma~\ref{lemma-limit},
 \begin{align} \label{eq:169}
    \hat{\phi}^{i+1} - \hat{\phi}_{EM} =    \mathcal{I}  (\hat{\phi}^i, \hat{\phi}^{i+1}) (\hat{\phi}^i-\hat{\phi}_{EM} )
    +o (\hat{\phi}^i-\hat{\phi}_{EM} )
\end{align}
where the notation $o$ represents higher order smallness. 
\end{theorem}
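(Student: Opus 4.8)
The plan is to treat the recursion $\hat\phi^{i+1}=\arg\max_\phi\mathcal{L}(\phi,\hat\phi^i)$ as a fixed-point iteration with fixed point $\hat\phi_{EM}$ and to linearise the stationarity condition \eqref{parcalLzero} to first order about $\hat\phi_{EM}$. The one structural identity I need beyond Taylor's theorem is the standard EM decomposition, valid for every $\phi$ with the second argument frozen at $\hat\phi^i$,
\begin{align*}
L_\phi(\mathbb{Y}_T)=\mathcal{L}(\phi,\hat\phi^i)-\mathcal{H}(\phi,\hat\phi^i),\qquad \mathcal{H}(\phi,\hat\phi^i):=\mathbb{E}_{\hat\phi^i}(\log p_\phi(\mathbb{S}_{T+1}|\mathbb{Y}_T)\,|\,\mathbb{Y}_T),
\end{align*}
together with the fact that $\phi\mapsto\mathcal{H}(\phi,\hat\phi^i)$ is maximised at $\phi=\hat\phi^i$ (Gibbs' inequality, the property underlying Lemma~\ref{lemma:EM_proof}), so that $\nabla_\phi\mathcal{H}(\phi,\hat\phi^i)|_{\phi=\hat\phi^i}=\mathbf{0}$. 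Differentiating the decomposition in $\phi$ and evaluating at $\phi=\hat\phi^i$ then yields the crucial gradient-matching identity $\nabla_\phi L_\phi(\mathbb{Y}_T)|_{\phi=\hat\phi^i}=\nabla_\phi\mathcal{L}(\phi,\hat\phi^i)|_{\phi=\hat\phi^i}$; it says the surrogate and the true log-likelihood share a gradient at the current iterate.

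Next I would Taylor expand \eqref{parcalLzero} in the first argument about $\hat\phi^i$ (second argument held at $\hat\phi^i$),
\begin{align*}
\mathbf{0}=\nabla_\phi\mathcal{L}(\phi,\hat\phi^i)\big|_{\phi=\hat\phi^i}+\nabla^2_\phi\mathcal{L}(\phi,\hat\phi^i)\big|_{\phi=\hat\phi^i}(\hat\phi^{i+1}-\hat\phi^i)+o(\|\hat\phi^{i+1}-\hat\phi^i\|),
\end{align*}
and replace the first term using the gradient-matching identity by $\nabla_\phi L_\phi(\mathbb{Y}_T)|_{\phi=\hat\phi^i}$. Since $\hat\phi_{EM}$ is a stationary point of $L_\phi(\mathbb{Y}_T)$ (Lemma~\ref{lemma-limit}), expanding this gradient about $\hat\phi_{EM}$ gives $\nabla_\phi L_\phi(\mathbb{Y}_T)|_{\phi=\hat\phi^i}=\nabla^2_\phi L_\phi(\mathbb{Y}_T)|_{\phi=\hat\phi_{EM}}(\hat\phi^i-\hat\phi_{EM})+o(\|\hat\phi^i-\hat\phi_{EM}\|)$. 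Writing $\hat\phi^{i+1}-\hat\phi^i=(\hat\phi^{i+1}-\hat\phi_{EM})-(\hat\phi^i-\hat\phi_{EM})$, inverting $\nabla^2_\phi\mathcal{L}$ (nonsingular because it is negative definite by Theorem~\ref{theorem:global_maximizer}), and collecting terms produces
\begin{align*}
\hat\phi^{i+1}-\hat\phi_{EM}=\Big(\mathbf{I}-[\nabla^2_\phi\mathcal{L}(\phi,\hat\phi^i)\big|_{\phi=\hat\phi^i}]^{-1}\nabla^2_\phi L_\phi(\mathbb{Y}_T)\big|_{\phi=\hat\phi_{EM}}\Big)(\hat\phi^i-\hat\phi_{EM})+o(\|\hat\phi^i-\hat\phi_{EM}\|).
\end{align*}

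It then remains to replace the two Hessians appearing here, evaluated at $(\hat\phi^i,\hat\phi^i)$ and at $\hat\phi_{EM}$, by the ones in $\mathcal{I}(\hat\phi^i,\hat\phi^{i+1})$, namely $\nabla^2_\phi\mathcal{L}(\phi,\hat\phi^i)|_{\phi=\hat\phi^{i+1}}$ and $\nabla^2_\phi L_\phi(\mathbb{Y}_T)|_{\phi=\hat\phi^i}$. For this I would first record the a priori bound $\hat\phi^{i+1}-\hat\phi^i=O(\|\hat\phi^i-\hat\phi_{EM}\|)$, which follows because the EM map is $C^1$ near $\hat\phi_{EM}$ with $\hat\phi_{EM}$ as its fixed point (implicit function theorem applied to \eqref{parcalLzero} using the nonsingularity from Theorem~\ref{theorem:global_maximizer}). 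Each Hessian substitution then perturbs its matrix by $O(\|\hat\phi^i-\hat\phi_{EM}\|)$, by smoothness of $\nabla^2_\phi\mathcal{L}$, $\nabla^2_\phi L_\phi$ and of the matrix inverse near the nonsingular limit; since these matrices multiply $(\hat\phi^i-\hat\phi_{EM})$, each perturbation contributes only $O(\|\hat\phi^i-\hat\phi_{EM}\|^2)=o(\|\hat\phi^i-\hat\phi_{EM}\|)$ and is absorbed into the remainder, giving exactly \eqref{eq:169}. I expect this last step --- the bookkeeping that legitimises evaluating the Hessians at the running iterates rather than at the limit --- to be the main obstacle, since it rests entirely on the linear-in-$\|\hat\phi^i-\hat\phi_{EM}\|$ control of $\hat\phi^{i+1}-\hat\phi^i$ and on the uniform invertibility of $\nabla^2_\phi\mathcal{L}$; everything else is a routine two-point Taylor argument.
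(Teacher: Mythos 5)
Your proof is correct and follows the same skeleton as the paper's: two Taylor expansions of the gradients of $L_\phi(\mathbb{Y}_T)$ and $\mathcal{L}(\phi,\hat{\phi}^i)$, linked by Fisher's identity $\nabla_\phi L_\phi(\mathbb{Y}_T)|_{\phi=\hat{\phi}^i}=\nabla_\phi\mathcal{L}(\phi,\hat{\phi}^i)|_{\phi=\hat{\phi}^i}$, which you re-derive from the decomposition $L_\phi=\mathcal{L}(\phi,\hat{\phi}^i)-\mathcal{H}(\phi,\hat{\phi}^i)$ and Gibbs' inequality whereas the paper simply cites Lemma~\ref{lemma-LcalL-Jac}. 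Where you diverge is in the choice of base points: you expand $\nabla_\phi\mathcal{L}$ about $\hat{\phi}^i$ and $\nabla_\phi L_\phi$ about $\hat{\phi}_{EM}$, which leaves you with Hessians evaluated at $\hat{\phi}^i$ and $\hat{\phi}_{EM}$ and forces your final bookkeeping step (the a priori bound $\hat{\phi}^{i+1}-\hat{\phi}^i=O(\|\hat{\phi}^i-\hat{\phi}_{EM}\|)$, smoothness of the Hessians, uniform invertibility) in order to move them to the points $\hat{\phi}^{i+1}$ and $\hat{\phi}^i$ appearing in the definition of $\mathcal{I}(\hat{\phi}^i,\hat{\phi}^{i+1})$. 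The paper sidesteps this entirely: it expands $\nabla_\phi L_\phi$ about $\hat{\phi}^i$ and evaluates at $\hat{\phi}_{EM}$, so the Hessian of $L_\phi$ lands at $\hat{\phi}^i$ exactly where \eqref{eq_cal_I_def} wants it; and it exploits the fact that $\mathcal{L}(\cdot,\hat{\phi}^i)$ is explicitly quadratic in $\phi$ (its Hessian is the constant matrix computed in the proof of Theorem~\ref{theorem:global_maximizer}), so the relation $\nabla_\phi\mathcal{L}|_{\hat{\phi}^i}=\nabla^2_\phi\mathcal{L}|_{\hat{\phi}^{i+1}}(\hat{\phi}^i-\hat{\phi}^{i+1})$ holds with no remainder at all. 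Your extra perturbation argument is sound but partly unnecessary --- in particular the replacement $\nabla^2_\phi\mathcal{L}|_{\phi=\hat{\phi}^i}\to\nabla^2_\phi\mathcal{L}|_{\phi=\hat{\phi}^{i+1}}$ costs nothing, since that Hessian does not depend on the evaluation point; only the substitution of $\nabla^2_\phi L_\phi|_{\hat{\phi}_{EM}}$ by $\nabla^2_\phi L_\phi|_{\hat{\phi}^i}$ genuinely consumes an $o(\hat{\phi}^i-\hat{\phi}_{EM})$ term, and that is legitimately absorbed into the remainder of \eqref{eq:169}.
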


\begin{proof}
%
One can utilize the Taylor series expansion of $\frac{\partial L_\phi(\mathbb{Y}_T) }{\partial \phi}$ about $\phi=\hat{\phi}^i$ and evaluate it at $\phi = \hat{\phi}_{EM}$ as follows
\begin{align*}  
0= & \frac{\partial L_{\phi} (\mathbb{Y}_T) }{\partial \phi} \Big|_{\phi= \hat{\phi}_{EM}}  =
\frac{\partial L_{\phi} (\mathbb{Y}_T) }{\partial \phi} \Big|_{\phi= \hat{\phi}^i} \\
& + 
\nabla^2_{ \phi}  L_\phi (\mathbb{Y}_T) \big|_{\phi=\hat{\phi}^i}   (\hat{\phi}_{EM}-\hat{\phi}^i )    +o (\hat{\phi}^i-\hat{\phi}_{EM} ).
\end{align*}
The first equation holds because $\hat{\phi}_{EM}$  is a stationary point of $L_\phi(\mathbb{Y}_T)$ by Lemma~\ref{lemma-limit}.
As a result, 
	\begin{align}\label{eq174}  
 \nabla^2_{ \phi}  L_\phi (\mathbb{Y}_T) \big|_{\phi=\hat{\phi}^i}   (\hat{\phi}^i -\hat{\phi}_{EM})  = \frac{\partial L_{\phi} (\mathbb{Y}_T) }{\partial \phi} \Big|_{\phi= \hat{\phi}^i}
  +o (\hat{\phi}^i-\hat{\phi}_{EM} ).
\end{align}

Again one can apply the Taylor series expansion of $\frac{\partial \mathcal{L}(\phi, \hat{\phi}^i) }{\partial \phi}$ about $\phi=\hat{\phi}^i$ and evaluate it at $\phi=\hat{\phi}^{i+1}$ as follows, noting the explicit quadratic expression of $\mathcal{L}$,
\begin{align*} 
    \frac{\partial \mathcal{L}({\phi},\hat{\phi}^i)}{\partial \phi}\big|_{\phi= \hat{\phi}^{i}}  = & \frac{\partial \mathcal{L}(\phi, \hat{\phi}^i)}{\partial \phi} \big|_{\phi= \hat{\phi}^{i+1}} \\& +  {\nabla^2_{\phi}  \mathcal{L}(\phi, \hat{\phi}^i) \big|_{ \phi=  \hat{\phi}^{i+1}}  } (\hat{\phi}^i - \hat{\phi}^{i+1}) 
 \end{align*}
which implies, due to \eqref{parcalLzero},
 \begin{align} \label{eq176}
    \frac{\partial \mathcal{L}({\phi},\hat{\phi}^i)}{\partial \phi}\big|_{\phi= \hat{\phi}^{i}}  =    {\nabla^2_{\phi}  \mathcal{L}(\phi, \hat{\phi}^i) \big|_{ \phi=  \hat{\phi}^{i+1}}  } (\hat{\phi}^i - \hat{\phi}^{i+1}).
 \end{align}
 
Finally, by \eqref{JacobianLcalL} of Lemma~\ref{lemma-LcalL-Jac}, one can equate  \eqref{eq174} and \eqref{eq176} as
\begin{align*}
\nabla^2_{ \phi}  L_\phi (\mathbb{Y}_T) \big|_{\phi=\hat{\phi}^i}   (\hat{\phi}^i -\hat{\phi}_{EM})  =&
 {\nabla^2_{\phi}  \mathcal{L}(\phi, \hat{\phi}^i) \big|_{ \phi=  \hat{\phi}^{i+1}}  } (\hat{\phi}^i - \hat{\phi}^{i+1})  \\ &+o (\hat{\phi}^i-\hat{\phi}_{EM} ).
\end{align*}
and hence \eqref{eq:169}. \end{proof}

Next, we discuss the covariance matrix component $\hat\mbsigma_k^i$ of $\hat\phi_k^i =\col(\hat\mbf_k^i, \hat\mbe_k^i, \hat\mbsigma_k^i)$.  
Denote $\hat\mbsigma^i = \col( \hat\mbsigma_1^i, \cdots,\hat\mbsigma_T^i)$.  
For $\hat{ \phi}^{i+1} = \col(\hat{ \phi}^{i*}_1, \cdots, \hat{ \phi}^{i*}_T)$ calculated using the
   \textbf{SOC-EM II} algorithm \eqref{optim2}  [similar analysis holds for the \textbf{SOC-EM I} algorithm \eqref{optiphij}],  by Theorem~\ref{uniqueness1}, $\hat{ \phi}^{i*}_j$ is the solution to  \eqref{jacobian_set_zero}, that is, 
 \begin{align} \frac{ \partial \sum_{k=1}^T  \Bar{\mathcal{L}}_k(\phi_j,\hat\phi^i)  }{\partial \phi_j}  \Big|_{\phi_j= \hat{\phi}_j^{i*}}  = \mathbf{0}, \; j=1,\cdots, T.
\end{align}
It approximately implies that \eqref{parcalLzero} is satisfied, and hence
  \begin{align} \label{hatphiEM1}
    \hat{\phi}^{i+1} - \hat{\phi}_{EM} =    \mathcal{I}  (\hat{\phi}^i, \hat{\phi}^{i+1}) (\hat{\phi}^i-\hat{\phi}_{EM} )
   \end{align}
holds with the higher order smallness ignored.

By Lemma~\ref{lemma-limit}, one has $\lim_{i \rightarrow \infty} \hat{\phi}^{i} = \hat{\phi}_{EM}$  
if $\hat{\phi}^{i}$   recursively generated by $\hat\phi^{i+1} = \hat{ \phi}^{i*}$ according to \eqref{maxlmaxcalL},
approximated by  \textbf{SOC-EM II} \eqref{optim2}. It is noted that 
$\hat\mbsigma_k^i$ and ${\hat\mbsigma}_{k,EM}$
are the covariance matrix component of $\hat{\phi}^{i}$  and $\hat{\phi}_{EM}$, respectively. 
As shown in Remark~\ref{remark-zerocovariance},  one has  $\hat\mbsigma_k^{i*}=\mathbf{0}$
and hence ${\hat\mbsigma}_{k,EM}=\mathbf{0}$. 
Now, from \eqref{hatphiEM1}, one has approximately, 
  \begin{align} \label{Sigmareduction}
\hat\mbsigma^{i+1}  =    \mathcal{I}_\Sigma  (\hat{\phi}^i, \hat{\phi}^{i+1}) \hat\mbsigma^{i}
   \end{align}
 for some  $\mathcal{I}_\Sigma$. 
 Based on Theorem~\ref{theorem:global_maximizer} and its application to \eqref{eq_cal_I_def}, one can approximately conclude that 
  \begin{align} \label{eq:diag_I_major}
&  \mathcal{I}  (\hat{\phi}^i, \hat{\phi}^{i+1}) = \text{diag}
  \begin{bmatrix} 
 \mathcal{I} _1 (\hat{\phi}^i, \hat{\phi}_1^{i+1})     \\
    \vdots  \\
 \mathcal{I}_T  (\hat{\phi}^i, \hat{\phi}_T^{i+1})    
    \end{bmatrix} 
    \leq \mathbf{I} 
    \end{align}
where the information matrix $ \mathcal{I} _j (\hat{\phi}^i, \hat{\phi}_j^{i+1})  $ corresponds to the counterpart of the component of  $  - \nabla^2_{\phi_j} \{ \sum_{k=1}^T  \bar{\mathcal{L}} (\phi_j,\hat\phi_k^i) \} |_{\phi_j = \hat{\phi}_j^{i+1} }$. Furthermore \textbf{SOC-EM II} is carried out for all time instants separately with no correlation between them. Therefore one can stack them in a matrix with diagonal elements as the individual (for each optimization instant) Hessians to create a higher dimensional Hessian which essentially provides property of the policy.
  Similarly the principal minor of the $ \mathcal{I}  (\hat{\phi}^i, \hat{\phi}^{i+1})$  concerned with the covariance components inherits the property from \eqref{eq:diag_I_major} and follows the following inequality,
    \begin{align} \label{bounds_of_information}
\mathbf{0}    \leq \mathcal{I} _\Sigma (\hat{\phi}^i, \hat{\phi}^{i+1}) \leq \mathbf{I}.
 \end{align}

The equation \eqref{Sigmareduction} is trivially true because  
 $ \hat\mbsigma_k^{i+1}  = \hat\mbsigma_k^{i*}=\mathbf{0}$ recursively  in   \textbf{SOC-EM II}. 
 However, in real scenarios, \textbf{SOC-EM II} cannot be perfectly implemented, but practically in the sense of 
   \begin{align}
  \| \hat{ \phi}^{i+1}  - \col(\hat{ \phi}^{i*}_1, \cdots, \hat{ \phi}^{i*}_T) \| <\Delta,
      \end{align}
for some error tolerance $\Delta$. As a result,  
 $\hat\mbsigma_k^{i+1}  = \mathbf{0}$ does not hold anymore. 
 Nevertheless, \eqref{Sigmareduction} can approximately claim that
 $\hat\mbsigma^{i}$ converges to zero
 as $i$ goes to $\infty$. 
 In particular, the following theorem states the conclusion  in terms of 
 the singular values of the covariance matrices
  ${\hat\mbSigma_k}^i  =({\hat\mbSigma_k} ^{\frac{1}{2} i}) ^\top {\hat\mbSigma_k} ^{\frac{1}{2} i} $ under the condition 
\eqref{bounds_of_information}, where 
$\hat\mbsigma_k^i =  \text{vec} (\hat\mbSigma_k^{\frac{1}{2}i})$.

%
%
%
%

\begin{theorem}
Suppose \eqref{Sigmareduction} holds with \eqref{bounds_of_information}. 
Let $\sigma_{k,1}^\imath,\cdots, \sigma_{k,n_a}^\imath$  be the singular values of 
$ \mbSigma_k^\imath$  for $\imath =i, i+1$,   then
\begin{align} \label{singularvalue}
     \sum_{k=1}^T  \sum_{p=1}^{n_a}  \sigma_{k,p}^{i+1} \leq \sum_{k=1}^T \sum_{p=1}^{n_a} \sigma_{k,p}^{i}.
\end{align}
\end{theorem}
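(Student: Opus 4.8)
The plan is to convert the singular-value inequality \eqref{singularvalue} into a statement about the Euclidean norm of the stacked vector $\hat\mbsigma^\imath$ and then invoke the contraction encoded in \eqref{bounds_of_information}. First I would note that each covariance $\hat\mbSigma_k^\imath = ({\hat\mbSigma_k}^{\frac{1}{2}\imath})^\top {\hat\mbSigma_k}^{\frac{1}{2}\imath}$ is symmetric positive semidefinite, so its singular values coincide with its eigenvalues and their sum is its trace. Hence, using $\hat\mbsigma_k^\imath = \text{vec}({\hat\mbSigma_k}^{\frac{1}{2}\imath})$ and the fact that vectorization preserves the Frobenius norm,
\begin{align*}
\sum_{p=1}^{n_a}\sigma_{k,p}^\imath = \Tr(\hat\mbSigma_k^\imath) = \Tr\big(({\hat\mbSigma_k}^{\frac{1}{2}\imath})^\top {\hat\mbSigma_k}^{\frac{1}{2}\imath}\big) = \|{\hat\mbSigma_k}^{\frac{1}{2}\imath}\|_F^2 = \|\hat\mbsigma_k^\imath\|_2^2 .
\end{align*}
Summing over $k$ and recalling $\hat\mbsigma^\imath = \col(\hat\mbsigma_1^\imath,\cdots,\hat\mbsigma_T^\imath)$ yields the key reformulation
\begin{align*}
\sum_{k=1}^T \sum_{p=1}^{n_a} \sigma_{k,p}^\imath = \|\hat\mbsigma^\imath\|_2^2, \quad \imath \in \{i,\, i+1\}.
\end{align*}

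With this reduction in hand, \eqref{singularvalue} is equivalent to the norm contraction $\|\hat\mbsigma^{i+1}\|_2^2 \leq \|\hat\mbsigma^i\|_2^2$. Substituting the recursion \eqref{Sigmareduction}, namely $\hat\mbsigma^{i+1} = \mathcal{I}_\Sigma(\hat{\phi}^i,\hat{\phi}^{i+1})\,\hat\mbsigma^i$, I would write $\|\hat\mbsigma^{i+1}\|_2^2 = (\hat\mbsigma^i)^\top \mathcal{I}_\Sigma^\top \mathcal{I}_\Sigma\, \hat\mbsigma^i$, so the whole claim hinges on establishing the Loewner inequality $\mathcal{I}_\Sigma^\top \mathcal{I}_\Sigma \leq \mathbf{I}$. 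Since $\mathcal{I}_\Sigma$ is the covariance block of the information matrix $\mathcal{I}$ in \eqref{eq_cal_I_def}, built from the symmetric Hessians of $L_\phi$ and $\mathcal{L}$, it is itself symmetric; the bound \eqref{bounds_of_information}, $\mathbf{0}\leq \mathcal{I}_\Sigma \leq \mathbf{I}$, then places all its eigenvalues in $[0,1]$, and squaring keeps them there, so $\mathcal{I}_\Sigma^\top \mathcal{I}_\Sigma = \mathcal{I}_\Sigma^2 \leq \mathbf{I}$. This gives $(\hat\mbsigma^i)^\top \mathcal{I}_\Sigma^2\, \hat\mbsigma^i \leq (\hat\mbsigma^i)^\top \hat\mbsigma^i = \|\hat\mbsigma^i\|_2^2$, closing the chain and delivering \eqref{singularvalue}.

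The main obstacle I anticipate is precisely the step $\mathcal{I}_\Sigma^\top \mathcal{I}_\Sigma \leq \mathbf{I}$: the Loewner bound $\mathcal{I}_\Sigma \leq \mathbf{I}$ only constrains eigenvalues, whereas the norm contraction requires control of singular values, and the two agree only because $\mathcal{I}_\Sigma$ is symmetric. I would therefore take care to justify the symmetry of the covariance block $\mathcal{I}_\Sigma$ (inherited from the block-diagonal structure \eqref{eq:diag_I_major} and the symmetry of the negative Hessian established in Theorem~\ref{theorem:global_maximizer}), since an asymmetric operator with eigenvalues in $[0,1]$ may still have spectral norm exceeding one. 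A minor secondary point is confirming that each $\hat\mbSigma_k^\imath$ is genuinely positive semidefinite, so that its singular values equal its eigenvalues; this is guaranteed by the Cholesky square-root parametrization adopted in Section~\ref{sec:practical}.
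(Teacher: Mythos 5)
Your proposal is correct and follows essentially the same route as the paper: both reduce $\sum_{p}\sigma_{k,p}^\imath$ to $(\hat\mbsigma_k^{\imath})^\top\hat\mbsigma_k^{\imath}$ via $\Tr(\hat\mbSigma_k^\imath)=\|{\hat\mbSigma_k}^{\frac{1}{2}\imath}\|_F^2$ and then deduce $\|\hat\mbsigma^{i+1}\|_2^2\leq\|\hat\mbsigma^{i}\|_2^2$ from \eqref{Sigmareduction} together with \eqref{bounds_of_information}. Your explicit flagging of the need for symmetry of $\mathcal{I}_\Sigma$ to pass from $\mathbf{0}\leq\mathcal{I}_\Sigma\leq\mathbf{I}$ to $\mathcal{I}_\Sigma^\top\mathcal{I}_\Sigma\leq\mathbf{I}$ is a point the paper glosses over, but it does not change the argument's structure.
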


\begin{proof} The equation  \eqref{Sigmareduction} multiplied by its transpose gives
  \begin{align*}  
 (\hat\mbsigma^{i+1})^\top 
  \hat\mbsigma^{i+1}   =& (\hat\mbsigma^{i})^\top 
     \mathcal{I}_\Sigma  (\hat{\phi}^i, \hat{\phi}^{i+1})^\top    \mathcal{I}_\Sigma  (\hat{\phi}^i, \hat{\phi}^{i+1})  \hat\mbsigma^{i} \\
     \leq &
      (\hat\mbsigma^{i})^\top  \hat\mbsigma^{i},
   \end{align*}
 where the inequality holds due to \eqref{bounds_of_information}. It is equivalent to
   \begin{align*}  
  \sum_{k=1}^T  (\hat\mbsigma_k^{i+1})^\top \hat\mbsigma_k^{i+1}
    \leq  
     \sum_{k=1}^T  (\hat\mbsigma_k^{i})^\top  \hat\mbsigma_k^{i}.   \end{align*}
   Utilizing the following property, for  $\imath =i, i+1$, 
   \begin{align*}  (\hat\mbsigma_k^{\imath})^\top  \hat\mbsigma_k^{\imath} &= \Tr  (({\hat\mbSigma_k} ^{\frac{1}{2} \imath}) ^\top {\hat\mbSigma_k} ^{\frac{1}{2} \imath} )  = \Tr   ({\hat\mbSigma_k}^\imath)      
    =\sum_{p=1}^{n_a}  \sigma_{k,p}^\imath \end{align*} yields \eqref{singularvalue}.
\end{proof}

\section{Experimental Results} \label{sec:results}
In this section, we investigate empirical performance of
the proposed  SOC-EM algorithm that aims to  utilize the parameter estimates of baselines as mentioned in Section~\ref{sec:practical} to deliver better controller parameters. 
The experiments were conducted  on a \BoxD\;framework that is a rigid polygon mass subject to gravity, linear and angular damping  \cite{parberry2013introduction}. 
 We consider sensor noise $\mbepsilon^s_k$ to be Gaussian in our experiments, i.e.,
\begin{align}
\mathbf{s}_k &= \mathbf{x}_k+ \mbepsilon^s_k ,\;   \mbepsilon^s_k \sim \mathcal{N} (0,  \rho^2 \mathbf{I}_{n_s} ), 
\label{cov_a_k}
\end{align}
where $\mathbf{x}_k$ is the real state. 
The sensor noise $\mbepsilon^s_k$  is propagated into the design of control action $\mathbf{a}_k$ that  forms the real  input 
to the system.  The states of the system are $\mathbf{x}= \{x,y,v_x,v_y\}^\top$ representing the position and velocity in a 2D environment.  
 With the control action $\mathbf{a}_k = [a_x, a_y] \in \mathbb{R}^2$, the objective is driven from $[0,5]$ to $[5,20]$
 in the Cartesian coordinate and stay there using the shortest time. 
 Algorithm~\ref{alg:EM} was implemented in the experiments.

\begin{figure}[t]
	\centering
	\includegraphics[scale=0.46]{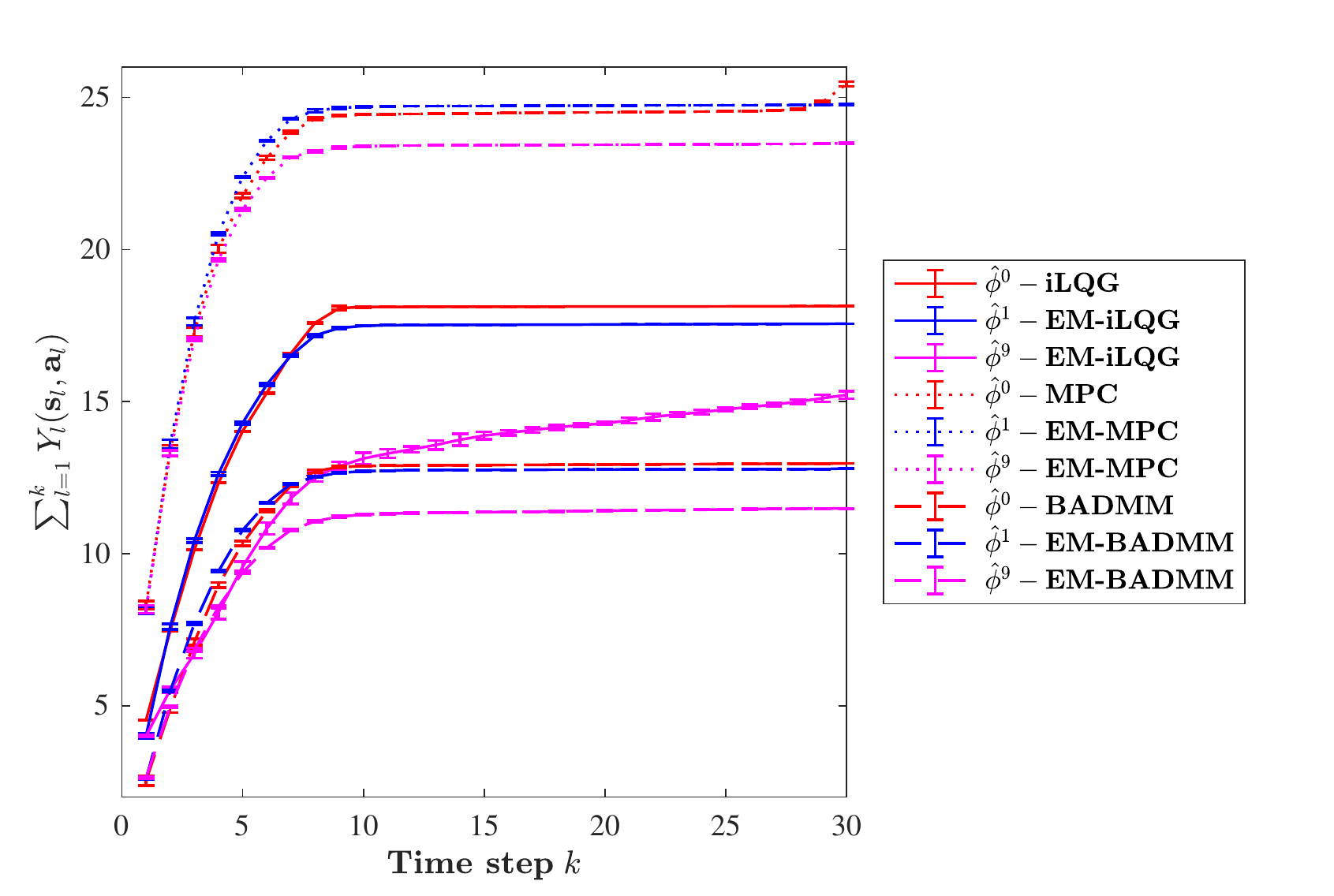}
	\caption{Cumulative sum of costs evaluated for three baselines, 1) iLQG, 2) MPC, and 3) BADMM,  for  
		different controller parameters $\hat{\phi}^0, \hat{\phi}^1$ and $\hat{\phi}^9$.}
	\label{fig:reward_3_cases}
\end{figure}

\begin{figure}[t]
\centering
	\includegraphics[scale=0.265]{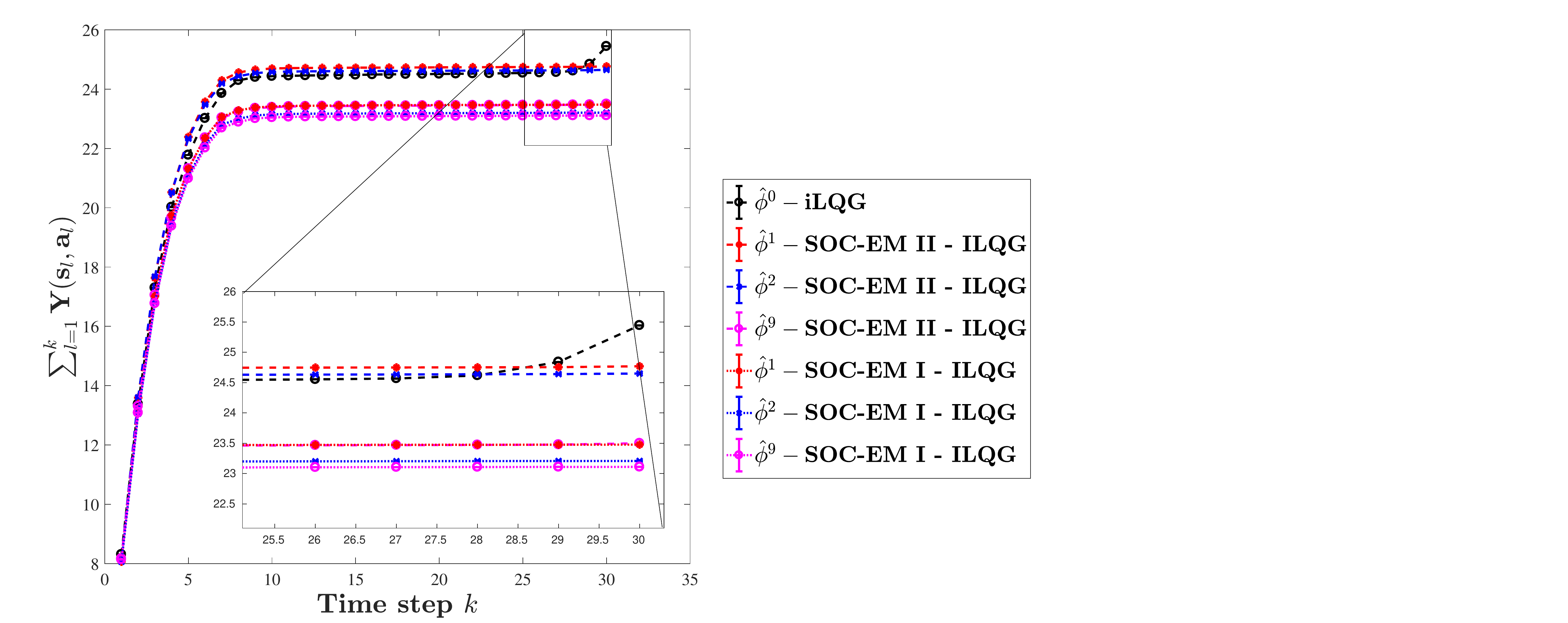}
	\caption{Comparison of SOC-EM 1 and SOC-EM II for three specific cases each corresponding to iLQG baseline.}
	\label{fig:soc-em1/2}
\end{figure}

{\it Cost:} The mean$\pm$std-dev of the cumulative sum of the real  costs
$\sum_{l=1}^k Y_l(\mathbf{s}_l,\mathbf{a}_l), \; k=1,\cdots, T=30$, is depicted in Fig.~\ref{fig:reward_3_cases}. The datasets take into account $20$ samples of cumulative sum of the costs against the time steps evaluated on three baselines.
The noise parameter was set as $\rho=0.3$. The plots in red, blue and magenta represent the performance with the
 controllers parameterized by $\hat{\phi}^0$, $\hat{\phi}^1$, and $\hat{\phi}^{9}$, respectively. 
The  solid, dotted, and dashed plots delineate three baselines iLQG,  MPC,  and BADMM, respectively. 
For the three baselines, it is observed that 
the mean$\pm$std-dev of the cumulative sum decreases over subsequent iterations, which verifies that 
 the {cost-to-go} is reduced through iterative EM procedures.    
 {Figure~\ref{fig:soc-em1/2} exhibits the cost-to-go comparison of the SOC-EM I and SOC-EM II algorithms
 with the iLQG baseline for three specific controller parameters.  It is observed that 
SOC-EM I works better than SOC-EM II because the former is a more accurate high dimensional optimization
as compared to the latter.  
}

\begin{figure}[t]
\centering
\includegraphics[width=.5\textwidth]{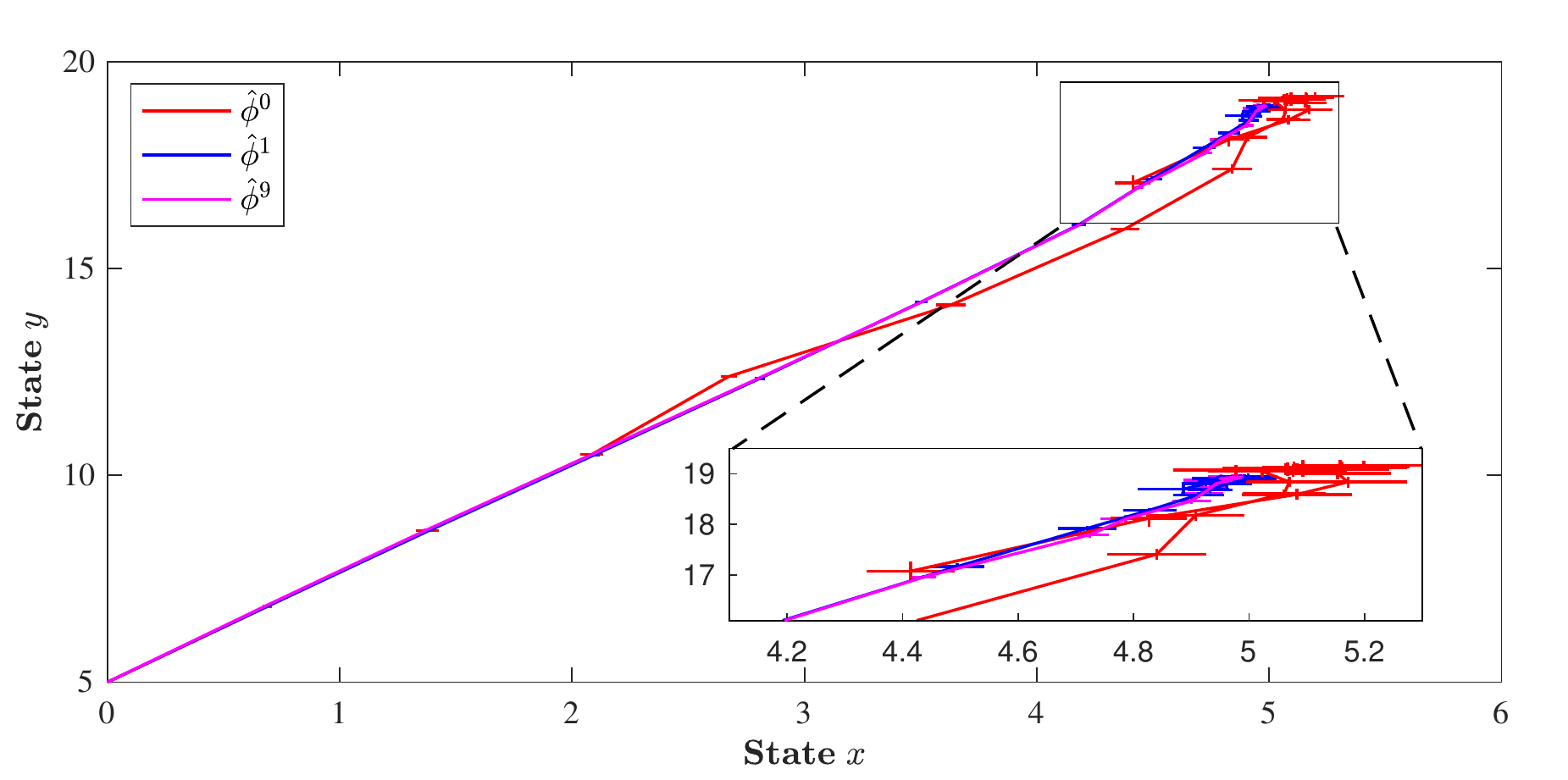}
\caption{Profile of the position trajectories  with different controller parameters.}
\label{fig:trajectories}
\medskip
%
\centering
\includegraphics[width=.5\textwidth]{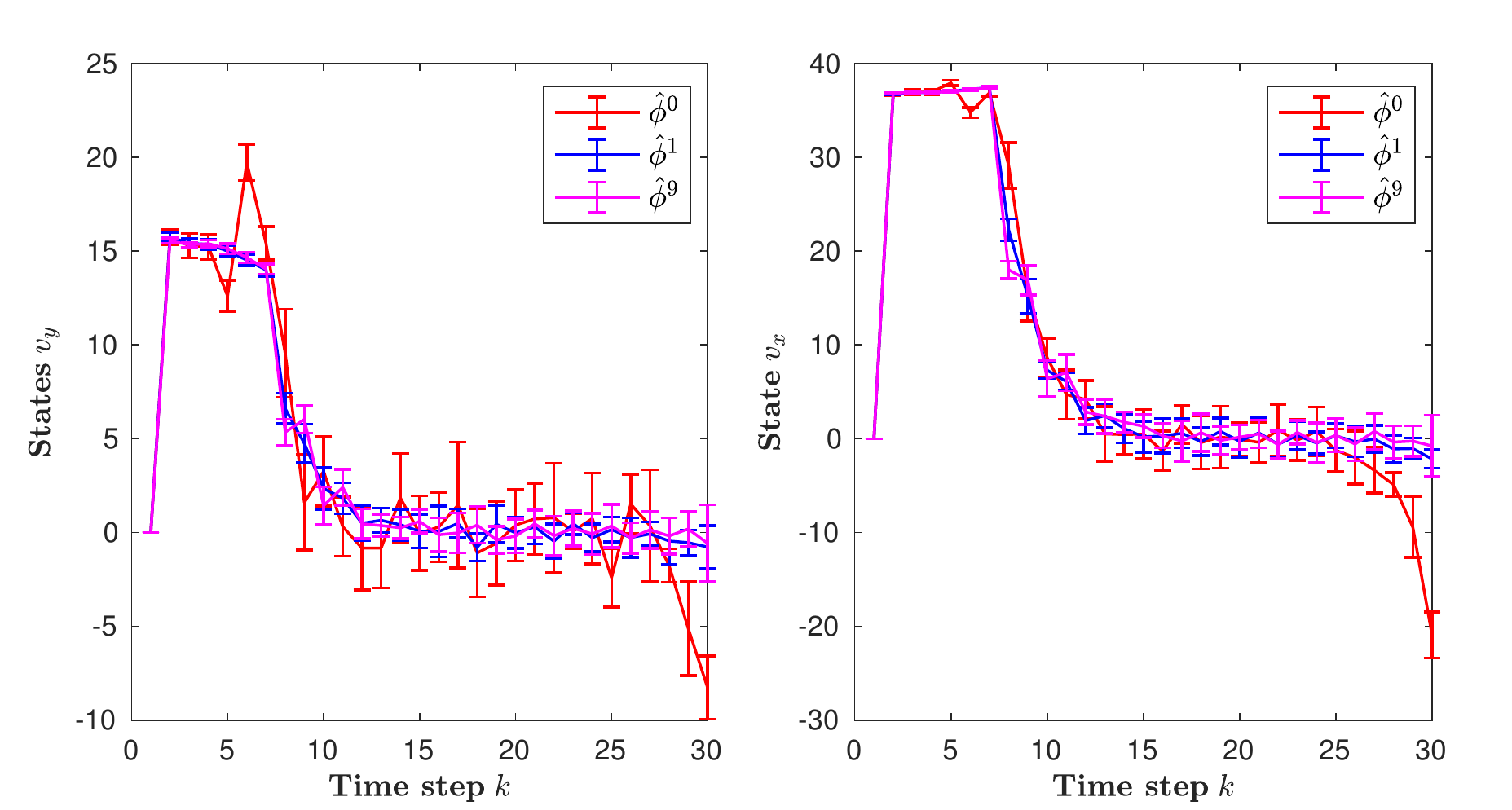}
\caption{Profile of the velocity trajectories  with different controller parameters.}
\label{fig:volocity}
\end{figure}

 {\it Trajectories:} We compare the true state trajectories, i.e., $ \{\textbf{x}_1,..,\textbf{x}_T\}$ produced on the real platform excited 
 by the control actions with the parameters obtained through repeated EM iterations. In this experiment, iLQG was used 
 as the baseline. We performed 30 experiments and each experiment ran for 10 subsequent iterations.
 The mean$\pm$std-dev of the trajectories $[x_k, y_k]$, $k=1,\cdots,30$, is illustrated in Fig.~\ref{fig:trajectories}
 for the control action with $\hat{\phi}^0$, $\hat{\phi}^1$, and $\hat{\phi}^{9}$. 
 It  can also be observed that all the trajectories move to the proximity of the target $[5, 20]$ quickly (in approximately 8 steps)
 and stay there in the remaining steps. 
The magenta trajectory for $\hat{\phi}^9$ of less jittery nature in contrast to the red one for $\hat{\phi}^0$
demonstrates better performance achieved by EM iterations.
In particular, there is a high variance in the red trajectory  near the final time step as the influence of noise is accumulated temporally. 
The corresponding velocity trajectories $v_x$ and $v_y$  versus time are illustrated in Fig.~\ref{fig:volocity}.
It is observed that the velocities increase to the maximum to drive the object to the target position quickly (again in approximately 8 steps) and then decrease to zero. The advantage gained by the EM iterations can be explained by the less deviation caused by noise  in the magenta trajectory.

  \begin{figure}[t]
\centering
\hspace{-3mm}
\includegraphics[width=.5\textwidth]{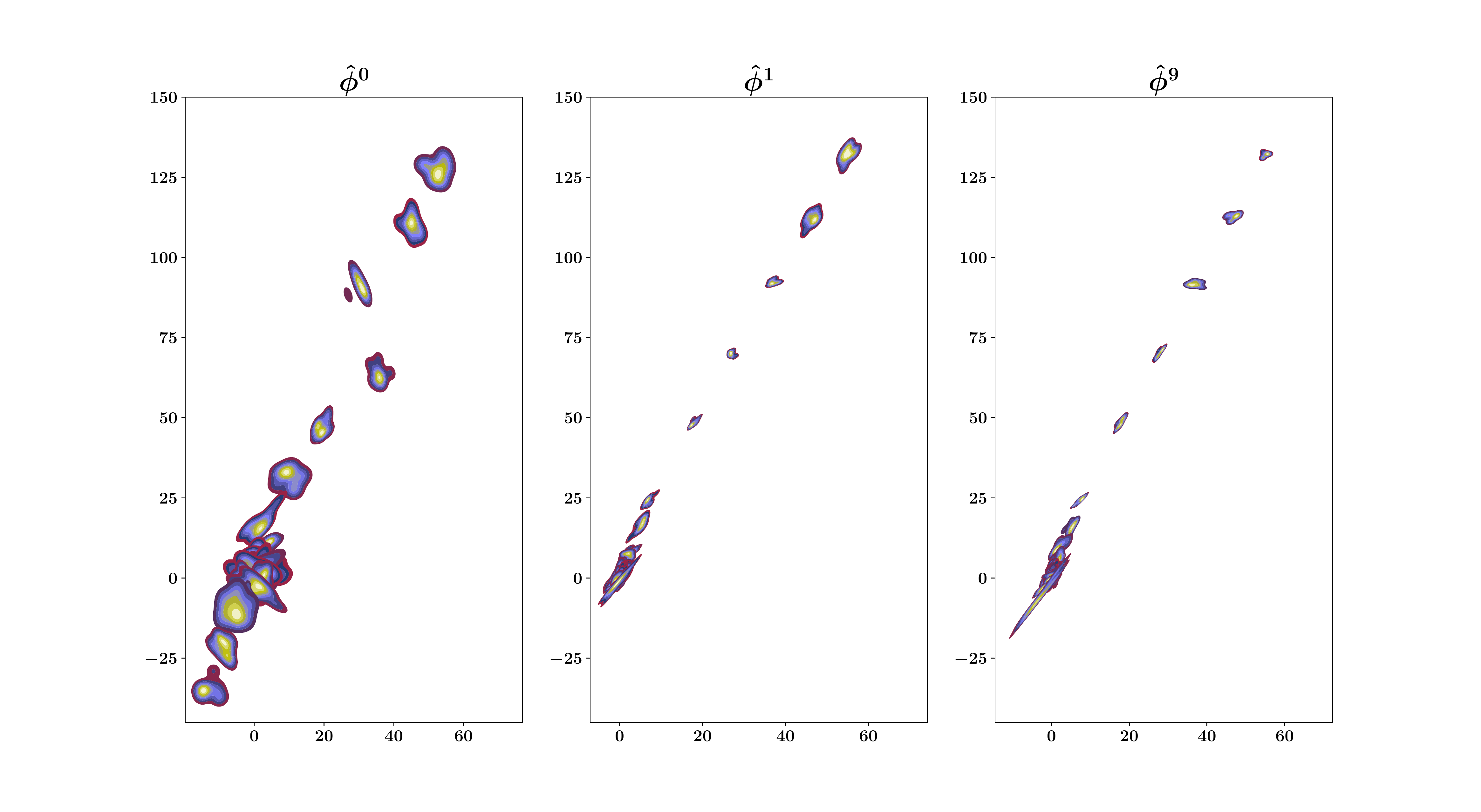}
\caption{Plot of 2D kernel density of control actions for 100 samples.}
\label{fig:KDE_Plot}
\medskip
\centering
\includegraphics[width=.52\textwidth]{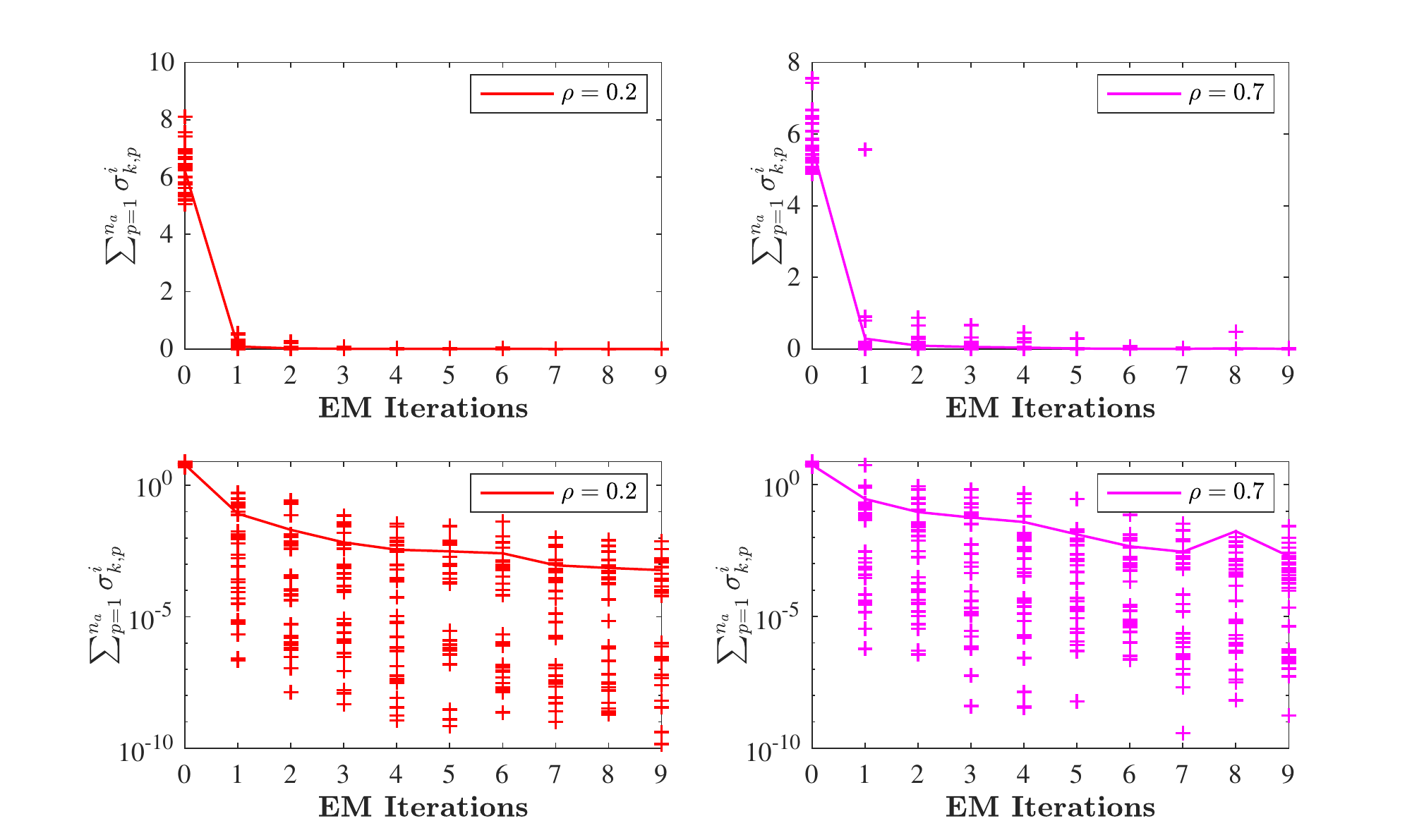}
\caption{Sum of singular values of the covariance matrix 
vs. EM iterations for $\rho=0.2$ and $0.7$ in linear and log scales.}
\label{fig:singular_Values_ltv}
\end{figure}

{\it Control actions: }  Figure~\ref{fig:KDE_Plot} shows the evolution of control actions in terms kernel density plots of samples collected
from 100 experiments. 
The control actions should be generated to maximum for a large velocity at the start and then reduced to zero in an ideal environment. 
It is evident that  the control actions with the parameters $\hat{\phi}^0$ 
do not well settle down to zero. 
On the contrary the control actions as a result of $\hat{\phi}^{1}$ and $\hat{\phi}^9$ are of
significant improvement. 


\begin{figure}[t]
	\centering
	\includegraphics[width=.52\textwidth]{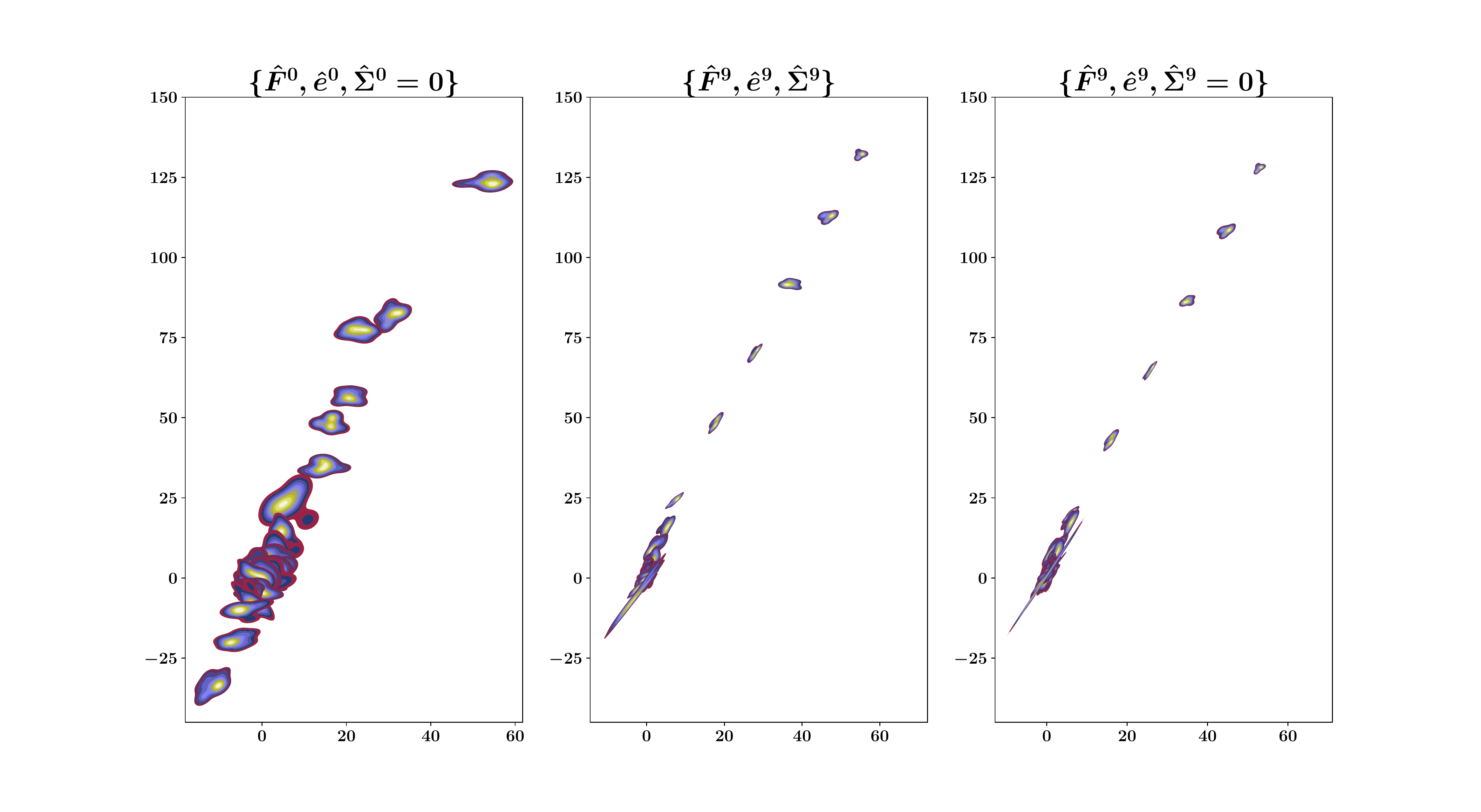}
	\caption{Plot of 2D kernel density of control actions for three policy parameters a) converged iLQG parameters with $\hat{\mbSigma}^0=\pmb{0}$, b) EM-iLQG parameters i.e., $\hat{\phi}^9$ and c) EM-iLQG parameters $\hat{\phi}^9$ with $\hat{\mbSigma}^9 = \pmb{0}$. }
	\label{fig:new_KDE_Plot_zero_noise}
	\medskip
\centering
\includegraphics[width=.52\textwidth]{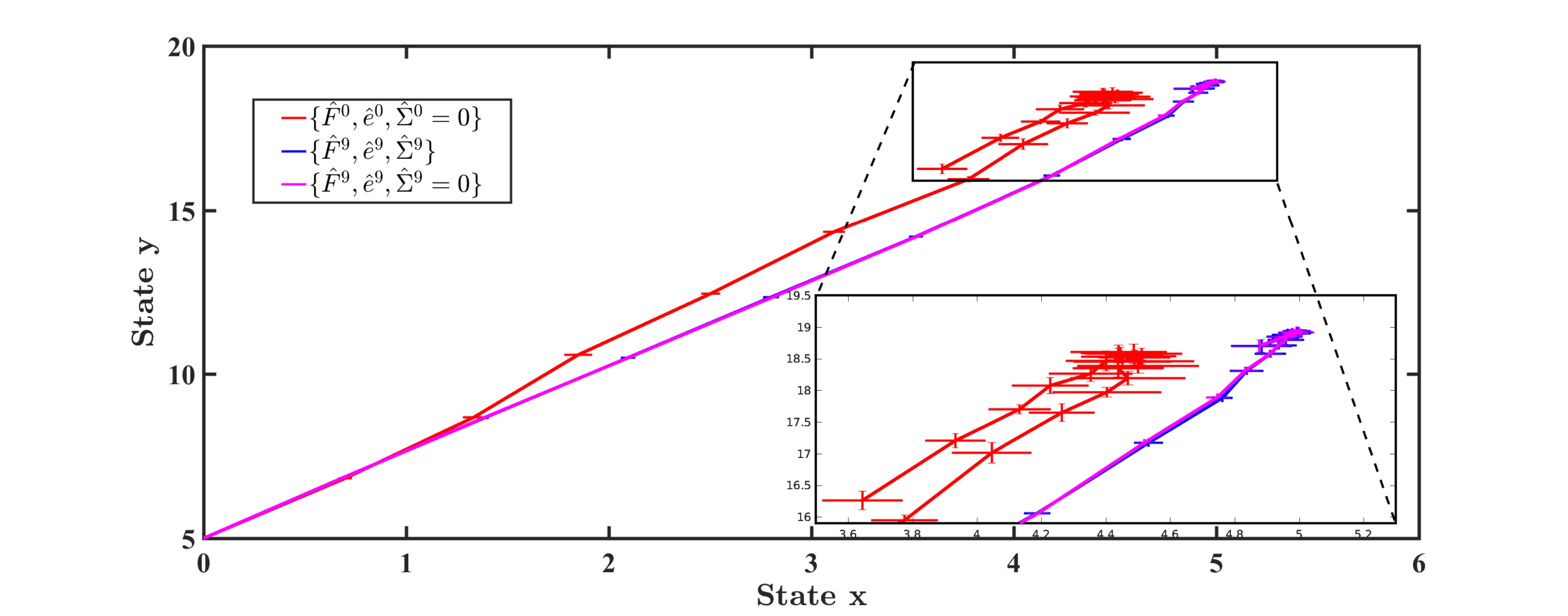}
\caption{Plot of true state trajectory for three policy parameters a) converged iLQG parameters with $\hat{\mbSigma}^0=\pmb{0}$, b) EM-iLQG parameters i.e., $\hat{\phi}^9$ and c) EM-iLQG parameters $\hat{\phi}^9$ with $\hat{\mbSigma}^9 = \pmb{0}$. }
\label{fig:new_traj_zero_noise_Plot}
\end{figure}
 
{{\it Exploitation efficiency:} The observed improvement in control actions 
can be well manifested by the exploitation mechanism in the EM approach
which significantly reduces  the stochasticity in the control policies.}
In particular,  the theoretical analysis in control covariance matrix in Section~\ref{section:noise}
can be verified by the plots of iterative decrease 
in the sum of the singular values of covariance matrices; see Fig.~\ref{fig:singular_Values_ltv}.
We simulated the entire procedure of EM with different noise factors and recorded  
$\sum_{p=1}^{n_a} \sigma_{k,p}^{i}$ which is marked as `+` for each time step $k=1,\cdots,30$ and
each EM iteration $i=0,\cdots, 9$. The average (equivalent to the sum divided by 30) is represented 
by the solid curve.  
The plots in log scale better shows the decrease pattern for  $\rho=0.2$ as expected by the theory. 
It can be noted that for $\rho=0.7$, the pattern is violated at $i=8$, which is due to the higher order smallness
in \eqref{eq:169}.

{\it Measurement noise:} 
{It has been exhibited that the exploitation functionality of the EM approach 
can effectively reduce the stochasticity in control policies. Next, we will further highlight 
this effectiveness in comparison with artificially setting the covariance matrices zero.
The comparison is made among the three cases, 
i) the iLQG parameter $\hat\phi^0$ with $\hat{\mbSigma}^0=\pmb{0}$, ii) 
the EM-iLQG parameter $\hat{\phi}^9$,  and iii) $\hat{\phi}^9$ with $\hat{\mbSigma}^9 = \pmb{0}$. 
Indeed, it is observed that artificially setting the covariance matrices zero
does not satisfactorily reduce the stochasticity in the control policies, 
as the stochasticity propagated from measurement noise is unavoidable. 
The kernel density plots Fig.~\ref{fig:new_KDE_Plot_zero_noise} shows that 
the EM approach performs better than the iLQG with zero covariance matrices.
It is not surprising to see that the difference between the cases (ii) and (iii) is minor as 
$\hat{\phi}^9$ is closely approaching 0 using SOC-EM.
The corresponding state trajectories plotted in Fig.~\ref{fig:new_traj_zero_noise_Plot} support the 
same conclusion.}

The simulations of Step 1, 2 and 4 were performed using the 64-bit \textsf{Ubuntu} 16.04 OS on  \textsf{Dell Alienware} 15 R2 
of  \textsf{Intel} Core i7-6700HQ CPU @ 2.60GHz.
The simulations of Step 3 were conducted using multiple 2.6 GHz \textsf{Intel Xeon Broadwell (E5-2697A v4)} processors on the high performance computing (HPC) grid located at The University of Newcastle. We switched processors in order to leverage parallel processing of the optimization routine of \eqref{optim2}.



\section{Conclusions} \label{sec:conclusion}

This paper has proposed a new EM based methodology for solving the SOC problem, resulting in 
an SOC-EM algorithm. The method effectively bridges the relationship between the optimal control problem and the EM algorithm that is originally used for maximizing the likelihood of observed data. Moreover, we have discussed 
a practical solution to SOC-EM and the uniqueness of controller parameter estimation.
The algorithm has been applied to the \BoxD\;framework and
the experiments support the superiority of the new technique, compared to some of widely known and extensively employed methodologies.
The paper has established a new research framework that has potential development in the future work. 
For example, nonlinear stochastic dynamics,  persistently exciting property
of a system as a result of parameters obtained through EM, input constraints,  fitting stable linear dynamic models, etc., 
are  interesting topics.

\appendix    

 \subsection{Derivation of the terms in Lemma~\ref{lemma:LTV}} \label{appednixB}
 The terms of $\Theta_1(\phi_k), \Theta_2(\phi_k)$, and $\Theta_3(\phi_k)$ after expansion are shown below.
 First, 
\begin{align*} 
    \Theta_1(\phi_k) &=    \mathbb{E}_{\hat{\phi}^i } ({\mbzeta_k \mbzeta_k ^\top | \mathbb{Y}_{T}) }= \begin{bmatrix}
        \mathbf{G}_{k+1} & \gamma_1 \\
      \gamma_1^\top & \gamma_2  \\
      \end{bmatrix}
\end{align*}
where 
\begin{align*}
    {\gamma_1^\top} =& \mathbf{A}^r_k \mathbb{E}_{\hat{\phi}^i} (\mathbf{s}_k \mathbf{s}_{k+1}^\top|\mathbb{Y}_{T}) + \mathbf{B}^r_k \mathbb{E}_{\hat{\phi}^i }  (\mathbf{a}_k \mathbf{s}_{k+1}^\top | \mathbb{Y}_{T}) \\
    \gamma_2=& \mathbf{A}^r_k \mathbf{G}_k  {\mathbf{A}^r_k}^\top + \mathbf{B}^r_k \mathbb{E}_{\hat{\phi}^i} ( {\mathbf{a}_k \mathbf{a}_k^\top} |\mathbb{Y}_T )  {\mathbf{B}^r_k}^\top \\
    &+ \mathbf{A}^r_k  \mathbb{E}_{  \hat{\phi}^i} (\mathbf{s}_k \mathbf{a}_k^\top |\mathbb{Y}_T) {\mathbf{B}^r_k}^\top + (\mathbf{A}^r_k \mathbb{E}_{\hat{\phi}^i} (\mathbf{s}_k \mathbf{a}_k^\top |\mathbb{Y}_T) {\mathbf{B}^r_k}^\top)^\top +   \mbSigma^r_k.
\end{align*}
Similarly, the matrix $\Theta_2 (\phi_k)$ can be expanded as
\begin{align*} 
    \Theta_2(\phi_k) &=    \mathbb{E}_{\hat\phi^i} ({\mbzeta_k \mbz_k ^\top | \mathbb{Y}_{T}) }=   \begin{bmatrix}
        \mathbf{M}_{k+1|T} & \gamma_3  \\
       \gamma_4 & \gamma_5  \\
      \end{bmatrix}
\end{align*}
where 
\begin{align*}
    \gamma_3 &= \mathbf{M}_{k+1|T} \mbF_k^\top + \hat{\mathbf{s}}_{k+1|T} \mbe_k^\top  \\
    \gamma_4 &= \mathbf{A}^r_k \mathbf{G}_k + \mathbf{B}^r_k \mathbb{E}_{\hat{\phi}^i} (\mathbf{a}_k \mathbf{s}_k^\top | \mathbb{Y}_{T})\\
    \gamma_5 &= \mathbf{A}^r_k \mathbf{G}_k \mbF_k^\top + \mathbf{A}^r_k \hat{\mathbf{s}}_{k|T} \mbe_k^\top + \mathbf{B}^r_k \mathbb{E}_{\hat{\phi}^i} (\mathbf{a}_k \mathbf{a}_k^\top | \mathbb{Y}_{T}).
\end{align*}

The matrix $\mathbf{A}^o_k \Theta_3(\phi_k)  {\mathbf{A}^o_k}^\top$ has the expression 
\begin{align*} 
 & \mathbf{A}^o_k \Theta_3(\phi_k)  {\mathbf{A}^o_k}^\top   = \begin{bmatrix}
\gamma_8 & \gamma_9  \\
\gamma_{10}  & \gamma_{11}\\
\end{bmatrix},
\end{align*}
where
\begin{align*}
\gamma_6 &=  \mathbf{G}_k \mbF_k^\top + \hat{\mathbf{s}}_{k |T}\mbe_k^\top  \\
 \gamma_7 &= \mathbb{E}_{\hat{\phi}^i} (\mathbf{a}_k \mathbf{a}_k^\top | \mathbb{Y}_{T} )  \\
 \gamma_8  &= {\mathbf{A}^d_k} (\mathbf{G}_k {\mathbf{A}^d_k}^\top + \gamma_6 {\mathbf{B}^d_k}^\top  ) + 
{\mathbf{B}^d_k} ( {\gamma_6}^\top {\mathbf{A}^d_k}^\top +  \gamma_7 {\mathbf{B}^d_k}^\top  ) \\
\gamma_9 & = {\mathbf{A}^d_k} (\mathbf{G}_k {\mathbf{A}^r_k}^\top + \gamma_6 {\mathbf{B}^r_k}^\top ) + \mathbf{B}^d_k  {\gamma_6}^\top {\mathbf{A}^r_k}^\top +  \gamma_7 {\mathbf{B}^r_k}^\top  \\
\gamma_{10}  &= {\mathbf{A}^r_k} (\mathbf{G}_k {\mathbf{A}^d_k}^\top + \gamma_6 {\mathbf{B}^d_k}^\top  )   +\mathbf{B}^r_k ({\gamma_6}^\top {\mathbf{A}^d_k}^\top +  \gamma_7 {\mathbf{B}^d_k}^\top )   \\
\gamma_{11} & = {\mathbf{A}^r_k} (\mathbf{G}_k {\mathbf{A}^r_k}^\top + \gamma_6 {\mathbf{B}^r_k}^\top ) + \mathbf{B}^r_k  {\gamma_6}^\top {\mathbf{A}^r_k}^\top +  {\mathbf{B}^r_k} \gamma_7 {\mathbf{B}^r_k}^\top.
\end{align*}

\subsection{Gradient of mixture likelihood} \label{Jacs}


 The gradient of the mixture log-likelihood is evaluated by utilizing properties of multivariable calculus.
  %
%
In particular, the gradients with respect to different parameters in $\phi_k$ are shown from the equations below,  
where $\gamma_7$ can be referred to in Appendix \ref{appednixB}.
The equations regarding $\Theta_1(\phi_k)$ are
\begin{align*}
    \nabla_{\mbf_k} \Tr \{{\mbSigma^o_k}^{-1} \Theta_1(\phi_k)\}
    =&  
  2\mathbf{B}^r_k \otimes {\mbSigma^r_k}^{-1}\mathbf{A}^r_k \mathbf{G}_k +  \nabla_{\mbf_k} {\mathbf{B}^r_k}  \gamma_7 {\mathbf{B}^r_k}^\top       \\
      \nabla_{\mbe_k} \Tr \{{\mbSigma^o_k}^{-1} \Theta_1(\phi_k)\}
          =& 
    2 \mathbf{B}^r_k \otimes {\mbSigma^r_k}^{-1} \mathbf{A}^r_k \hat{\mathbf{s}}_{k|T}  +  \nabla_{\mbe_k} \mathbf{B}^r_k \gamma_7 {\mathbf{B}^r_k}^\top  
   \\
      \nabla_{\mbsigma_k} \Tr \{{\mbSigma^o_k}^{-1} \Theta_1(\phi_k)\}
          =& 
     \nabla_{ \mbsigma_k}  \mathbf{B}^r_k \gamma_7 {\mathbf{B}^r_k}^\top  
   ,
\end{align*}
those for $\Theta_2(\phi_k)$
\begin{align*}
      \nabla_{\mbf_k} \Tr \{{\mbSigma^o_k}^{-1} \Theta_2(\phi_k) {\mathbf{A}^o_k}^\top\} = &   \rm{vec} ({\mathbf{M}}_{k+1|T}  {\mbSigma^d_k}^{-1}   {\mathbf{B}^d_k}^{\top} )      \\ &
      + 2 \mathbf{B}^r_k \otimes {\mbSigma^r_k}^{-1} \mathbf{A}^r_k {\mathbf{G}}_k 
       + \nabla_{\mbf_k} {\mathbf{B}^r_k}  \gamma_7 {\mathbf{B}^d_k}^\top \\
      \nabla_{\mbe_k} \Tr \{{\mbSigma^o_k}^{-1} \Theta_2(\phi_k) {\mathbf{A}^o_k}^\top\} = & \rm{vec} (\hat{\mathbf{s}}_{k+1|T}^\top  {\mbSigma^d_k}^{-1}   {\mathbf{B}^d_k}^{\top} )    \\
      & + 2 \mathbf{B}^r_k \otimes {\mbSigma^r_k}^{-1} \mathbf{A}^r_k \hat{\mathbf{s}}_{k|T} + \nabla_{\mbf_k} {\mathbf{B}^r_k}  \gamma_7 {\mathbf{B}^d_k}^\top \\
      \nabla_{\mbsigma_k} \Tr \{{\mbSigma^o_k}^{-1} \Theta_2(\phi_k) {\mathbf{A}^o_k}^\top \}
          =&  \nabla_{\mbsigma_k} \mathbf{B}^r_k \gamma_7 {\mathbf{B}^r_k}^\top  ,
\end{align*}
and those for $\Theta_3(\phi_k)$
\begin{align*}
 \nabla_{\mbf_k} & \Tr  \{ {\mbSigma^o_k}^{-1} {\mathbf{A}^o_k} \Theta_3(\phi_k) {\mathbf{A}^o_k}^\top \}=
  2 \rm{vec} (   \mathbf{G}_k {\mathbf{A}^d_k}^\top {\mbSigma^d_k}^{-1}   {\mathbf{B}^d_k}^{\top} ) \\
 &       + \nabla_{\mbf_k} \Tr \{ {\mbSigma^d_k}^{-1} \mathbf{B}^d_k  \gamma_7 {\mathbf{B}^d_k}^\top \}
      + \nabla_{\mbf_k} {\mbSigma^r_k}^{-1} {\mathbf{B}^r_k} \gamma_7 {\mathbf{B}^r_k}^\top   \\ & + 2 \mathbf{B}^r_k \otimes {\mbSigma^r_k}^{-1}\mathbf{A}^r_k {\mathbf{G}}_k \\
  \nabla_{\mbe_k}  & \Tr  \{ {\mbSigma^o_k}^{-1} {\mathbf{A}^o_k} \Theta_3(\phi_k) {\mathbf{A}^o_k}^\top   \}
 = 2 \rm{vec} ( \hat{\mathbf{s}}_{k|T}^\top {\mathbf{A}^d_k}^\top {\mbSigma^d_k}^{-1}   {\mathbf{B}^d_k}^{\top} ) \\
&  + \nabla_{\mbf_k} \Tr \{ {\mbSigma^d_k}^{-1} \mathbf{B}^d_k  \gamma_7 {\mathbf{B}^d_k}^\top\}  
 +  \nabla_{\mbe_k} {\mbSigma^r_k}^{-1} {\mathbf{B}^r_k} \gamma_7 {\mathbf{B}^r_k}^\top  \\ & + 2 \mathbf{B}^r_k \otimes   {\mbSigma^r_k}^{-1}  \mathbf{A}^r_k \hat{\mathbf{s}}_{k|T}\\      
  \nabla_{\mbsigma_k }  & \Tr \{ {\mbSigma^o_k}^{-1} {\mathbf{A}^o_k} \Theta_3(\phi_k) {\mathbf{A}^o_k}^\top \} 
      = 
   \nabla_{\mbsigma_k}\Tr \{ {\mbSigma^d_k}^{-1} {\mathbf{B}^d_k} \gamma_7 {\mathbf{B}^d_k}^\top \}  \\&
+  \nabla_{\mbsigma_k} {\mbSigma^r_k}^{-1} \mathbf{B}^r_k \gamma_7 {\mathbf{B}^r_k}^\top .
\end{align*}

\subsection{Hessian of mixture likelihoods}
The components of the Hessian of mixture log of mixture likelihood expression can be expanded and verified with equations shown below. The equations regarding $\Theta_1(\phi_k)$ are
\begin{align*}  
    \nabla^2_{\mbf_k} \Tr \{{\mbSigma^o_k}^{-1} \Theta_1(\phi_k)\} &=  \nabla^2_{\mbf_k}  {\mbSigma_k^{r}}^{-1} {\mathbf{B}^r_k}  \gamma_7 {\mathbf{B}^r_k}^\top   
    \\
      \nabla^2_{\mbe_k} \Tr \{ {\mbSigma^o_k}^{-1} \Theta_1(\phi_k)\} &=  \nabla^2_{\mbe_k}  {\mbSigma_k^{r}}^{-1} \mathbf{B}^r_k \gamma_7 {\mathbf{B}^r_k}^\top  
 \\
      \nabla^2_{\mbsigma_k} \Tr \{{\mbSigma^o_k}^{-1} \Theta_1(\phi_k)\} &=  \nabla^2_{\mbsigma_k}  {\mbSigma_k^{r}}^{-1} \mathbf{B}^r_k \gamma_7 {\mathbf{B}^r_k}^\top ,
\end{align*}
those for $\Theta_2(\phi_k)$
\begin{align*}
    \nabla^2_{\mbf_k} \Tr \{ {\mbSigma^o_k}^{-1} \Theta_2(\phi_k) {\mathbf{A}^o_k}^\top \} & =  \nabla^2_{\mbf_k} {\mbSigma^r_k}^{-1} {\mathbf{B}^r_k}  \gamma_7 {\mathbf{B}^r_k}^\top   
    \\
      \nabla^2_{\mbe_k} \Tr \{ {\mbSigma^o_k}^{-1} \Theta_2(\phi_k)  {\mathbf{A}^o_k}^\top  \}
      &=  \nabla^2_{\mbe_k} {\mbSigma^r_k}^{-1} \mathbf{B}^r_k \gamma_7 {\mathbf{B}^r_k}^\top  
   \\
      \nabla^2_{\mbsigma_k} \Tr \{ {\mbSigma^o_k}^{-1} \Theta_2(\phi_k) {\mathbf{A}^o_k}^\top\}  &=   \nabla^2_{\mbsigma_k} {\mbSigma^r_k}^{-1} \mathbf{B}^r_k \gamma_7 {\mathbf{B}^r_k}^\top  
   ,
\end{align*}
and those for $\Theta_3(\phi_k)$
\begin{align*}
      \nabla^2_{\mbe_k} \Tr \{ {\mbSigma^o_k}^{-1} {\mathbf{A}^o_k} \Theta_3(\phi_k) {\mathbf{A}^o_k}^\top \}
          =&   \nabla^2_{\mbe_k} {\mbSigma^r_k}^{-1}  {\mathbf{B}^r_k} \gamma_7 {\mathbf{B}^r_k}^\top  \\ & + \nabla^2_{\mbe_k} \Tr {\mbSigma^d_k}^{-1} {\mathbf{B}^d_k} \gamma_7 {\mathbf{B}^d_k}^\top\\
\nabla^2_{\mbsigma_k} \Tr \{ {\mbSigma^o_k}^{-1} {\mathbf{A}^o_k}  \Theta_3(\phi_k)  {\mathbf{A}^o_k}^\top \} 
          = &  \nabla^2_{\mbsigma_k} {\mbSigma^r_k}^{-1} \mathbf{B}^r_k \gamma_7 {\mathbf{B}^r_k}^\top  \\
& +  \nabla^2_{\mbsigma_k} \Tr {\mbSigma^d_k}^{-1} {\mathbf{B}^d_k} \gamma_7 {\mathbf{B}^d_k}^\top \\
      \nabla^2_{\mbf_k} \Tr \{ {\mbSigma^o_k}^{-1} {\mathbf{A}^o_k} \Theta_3(\phi_k) {\mathbf{A}^o_k}^\top  \}
          =& 
  \nabla^2_{\mbf_k} {\mbSigma^r_k}^{-1} {\mathbf{B}^r_k}  \gamma_7 {\mathbf{B}^r_k}^\top  \\ & +   \nabla^2_{\mbf_k} \Tr {\mbSigma^d_k}^{-1} {\mathbf{B}^d_k} \gamma_7 {\mathbf{B}^d_k}^\top.
\end{align*}

\subsection{Some EM lemmas} \label{EMlemmas}

Lemma~\ref{lemma:EM_proof_new} explains the lower bound maximization strategy in EM and also delineates the two main steps involved;
see, e.g., \cite{minka1998expectation}.

 \begin{lemma}  \label{lemma:EM_proof_new}
Consider $L_\phi ({\mathbb{Y}_T})$ and $\mathcal{L}(\phi,\hat{\phi}^{i})$ defined 
in \eqref{Lphi} and \eqref{Ltheta_kheta}, respectively, with 
 a  known parameter estimate $\hat{\phi^i}$.
Let \begin{align}
     l(\phi, \widetilde{p}(\mathbb{S}_{T+1}) ) =\mathbb{E}_{ \widetilde{p}(\mathbb{S}_{T+1})} 
    \log  \frac{  {p_\phi (\mathbb{S}_{T+1},\mathbb{Y}_T)} }{  \widetilde{p}(\mathbb{S}_{T+1})} \label{lphi}
\end{align}
for any distribution  $\widetilde{p}(\mathbb{S}_{T+1})$.
One has 
\begin{align*}
  L_\phi ({\mathbb{Y}_T}) \geq  l(\phi, \widetilde{p}(\mathbb{S}_{T+1}) ), 
\end{align*}
that is, $ l(\phi, \widetilde{p}(\mathbb{S}_{T+1}) ) $ is a lower bound of $ {  L_\phi ({\mathbb{Y}_T})} $.
Moreover,  let \begin{align} \label{eq:166}
\widetilde{p}(\mathbb{S}_{T+1}) = p_{ \hat{\phi}^i} (\mathbb{S}_{T+1}|\mathbb{Y}_T)  ,
\end{align}
and  denote
\begin{align}
 l(\phi,  \hat{\phi}^i) = l(\phi, p_{\hat{\phi}^i} (\mathbb{S}_{T+1}|\mathbb{Y}_T) ). \label{lphiphii}
 \end{align}
One has
\begin{align}
\hat{ \phi}^{i*}  =  \arg\max_{\phi}  l(\phi,  \hat{\phi}^i) = \arg\max_{\phi} \mathcal{L} (\phi, \hat\phi^i). \label{maxlmaxcalL}
\end{align}
\end{lemma}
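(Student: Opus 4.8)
The plan is to prove the lower-bound inequality first and then obtain the equality of the two maximizers as an almost immediate consequence. For the lower bound, I would begin from the trivial identity $L_\phi(\mathbb{Y}_T) = \log p_\phi(\mathbb{Y}_T) = \mathbb{E}_{\widetilde{p}(\mathbb{S}_{T+1})}[\log p_\phi(\mathbb{Y}_T)]$, which is valid because the integrand does not depend on $\mathbb{S}_{T+1}$ and $\widetilde{p}$ integrates to one. Then I would use $p_\phi(\mathbb{Y}_T) = p_\phi(\mathbb{S}_{T+1},\mathbb{Y}_T)/p_\phi(\mathbb{S}_{T+1}\mid\mathbb{Y}_T)$ and insert $\widetilde{p}(\mathbb{S}_{T+1})$ into both numerator and denominator, so that the logarithm of the resulting ratio splits into two expectations.

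This yields the standard decomposition
\[
L_\phi(\mathbb{Y}_T) = l(\phi,\widetilde{p}(\mathbb{S}_{T+1})) + \mathbb{E}_{\widetilde{p}}\log\frac{\widetilde{p}(\mathbb{S}_{T+1})}{p_\phi(\mathbb{S}_{T+1}\mid\mathbb{Y}_T)},
\]
where the first term is exactly $l(\phi,\widetilde{p})$ from \eqref{lphi} and the second term is the Kullback--Leibler divergence $\mathrm{KL}\big(\widetilde{p}\,\|\,p_\phi(\cdot\mid\mathbb{Y}_T)\big)$. Since this divergence is nonnegative (Gibbs' inequality, equivalently Jensen applied to the convex $-\log$), the bound $L_\phi(\mathbb{Y}_T)\ge l(\phi,\widetilde{p})$ follows at once, with equality precisely when $\widetilde{p}=p_\phi(\cdot\mid\mathbb{Y}_T)$.

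For the second claim I would substitute the choice \eqref{eq:166}, $\widetilde{p}=p_{\hat{\phi}^i}(\mathbb{S}_{T+1}\mid\mathbb{Y}_T)$, into \eqref{lphi} and split the logarithm of the ratio to obtain
\[
l(\phi,\hat{\phi}^i) = \mathbb{E}_{\hat{\phi}^i}\big(\log p_\phi(\mathbb{S}_{T+1},\mathbb{Y}_T)\mid\mathbb{Y}_T\big) - \mathbb{E}_{\hat{\phi}^i}\big(\log p_{\hat{\phi}^i}(\mathbb{S}_{T+1}\mid\mathbb{Y}_T)\mid\mathbb{Y}_T\big).
\]
The first term on the right is precisely $\mathcal{L}(\phi,\hat{\phi}^i)$ as defined in \eqref{Ltheta_kheta}, while the second term is the negative entropy of the smoothing distribution $p_{\hat{\phi}^i}(\cdot\mid\mathbb{Y}_T)$ and depends only on $\hat{\phi}^i$, not on the free variable $\phi$. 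Hence $l(\phi,\hat{\phi}^i)$ and $\mathcal{L}(\phi,\hat{\phi}^i)$ differ by an additive constant in $\phi$, so they attain their maxima at the same $\phi$, which is exactly \eqref{maxlmaxcalL}.

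I expect no substantive obstacle: this is the classical variational (ELBO) derivation that underpins EM, and each step is an elementary manipulation of expectations and logarithms. The only points that warrant care are justifying that the $\mathbb{S}_{T+1}$-independent integrand may be moved inside $\mathbb{E}_{\widetilde{p}}$ (using that $\widetilde{p}$ is a bona fide density), and invoking nonnegativity of the Kullback--Leibler divergence with the arguments in the correct order. The genuine crux of the equality of maximizers is the observation that the entropy term is constant in $\phi$; I would state this explicitly, since it is what makes maximizing the surrogate $\mathcal{L}$ equivalent to maximizing the lower bound $l$.
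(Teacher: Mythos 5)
Your proof is correct and is precisely the standard ELBO/KL decomposition that the paper itself relies on: it states this lemma without proof, deferring to the cited EM tutorial, and the argument given there is the same one you reconstruct (write $L_\phi(\mathbb{Y}_T)=l(\phi,\widetilde{p})+\mathrm{KL}(\widetilde{p}\,\|\,p_\phi(\cdot\mid\mathbb{Y}_T))$, use nonnegativity of the divergence, then observe that with $\widetilde{p}=p_{\hat{\phi}^i}(\cdot\mid\mathbb{Y}_T)$ the bound differs from $\mathcal{L}(\phi,\hat{\phi}^i)$ only by a $\phi$-independent entropy term). Your explicit flagging of the entropy term's independence of $\phi$ is exactly the crux, and no step is missing.
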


Furthermore, Lemma~\ref{lemma:EM_proof} shows that, in a recursive procedure, 
any new parameter $\phi=\hat{\phi}^{i+1}$ that increases $\mathcal{L}(\phi, \hat{\phi}^i)$
from $\phi=\hat{\phi}^{i}$ also increase $L_\phi(\mathbb{Y}_T)$.

\begin{lemma} 
  \label{lemma:EM_proof}
Suppose the {parameter vector} $\hat{\phi}^{i+1}$ is produced in an iteration, 
one that 
\begin{align}
L_{\hat{\phi}^{i+1}} (\mathbb{Y}_T)    -  L_{\hat{\phi}^i} (\mathbb{Y}_T) 
\geq \mathcal{L} (\hat{\phi}^{i+1},\hat{\phi}^i) -  \mathcal{L} (\hat{\phi}^i,\hat{\phi}^i) \label{LLcalLcalL}
\end{align}
where the equality holds iff 
$p_{\hat{\phi}^{i+1}} (\mathbb{S}_{T+1}|\mathbb{Y}_T) =  p_{\hat\phi^{i}} (\mathbb{S}_{T+1}|\mathbb{Y}_T)$ . 
\end{lemma}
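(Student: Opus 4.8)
The plan is to exploit the variational lower-bound structure already supplied by Lemma~\ref{lemma:EM_proof_new}, which is the cleanest route to both the inequality \eqref{LLcalLcalL} and its equality clause. Recall from that lemma that for the choice $\widetilde{p}(\mathbb{S}_{T+1}) = p_{\hat{\phi}^i}(\mathbb{S}_{T+1}|\mathbb{Y}_T)$ the quantity $l(\phi,\hat{\phi}^i)$ is a global lower bound, $L_\phi(\mathbb{Y}_T) \geq l(\phi,\hat{\phi}^i)$, and that it shares its maximizer with $\mathcal{L}(\phi,\hat{\phi}^i)$. First I would make the latter fact quantitative: expanding the definition \eqref{lphi} and separating the $\phi$-dependent from the $\phi$-independent pieces gives
\begin{align*}
l(\phi,\hat{\phi}^i) = \mathcal{L}(\phi,\hat{\phi}^i) - \mathbb{E}_{\hat{\phi}^i}( \log p_{\hat{\phi}^i}(\mathbb{S}_{T+1}|\mathbb{Y}_T) \mid \mathbb{Y}_T),
\end{align*}
so that $l(\phi,\hat{\phi}^i)$ and $\mathcal{L}(\phi,\hat{\phi}^i)$ differ only by an entropy term that does not depend on $\phi$. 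In particular $l(\hat{\phi}^{i+1},\hat{\phi}^i) - l(\hat{\phi}^i,\hat{\phi}^i) = \mathcal{L}(\hat{\phi}^{i+1},\hat{\phi}^i) - \mathcal{L}(\hat{\phi}^i,\hat{\phi}^i)$.

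The second ingredient is tightness of the bound at $\phi = \hat{\phi}^i$. Writing $\log p_\phi(\mathbb{Y}_T) = \log p_\phi(\mathbb{S}_{T+1},\mathbb{Y}_T) - \log p_\phi(\mathbb{S}_{T+1}|\mathbb{Y}_T)$ and taking the expectation under $p_{\hat{\phi}^i}(\cdot|\mathbb{Y}_T)$, the gap $L_\phi(\mathbb{Y}_T) - l(\phi,\hat{\phi}^i)$ is exactly the Kullback--Leibler divergence $\mathbb{E}_{\hat{\phi}^i}( \log [ p_{\hat{\phi}^i}(\mathbb{S}_{T+1}|\mathbb{Y}_T) / p_\phi(\mathbb{S}_{T+1}|\mathbb{Y}_T) ] \mid \mathbb{Y}_T)$. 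Evaluated at $\phi = \hat{\phi}^i$ this divergence vanishes, whence $L_{\hat{\phi}^i}(\mathbb{Y}_T) = l(\hat{\phi}^i,\hat{\phi}^i)$.

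Combining the two ingredients, I would chain
\begin{align*}
L_{\hat{\phi}^{i+1}}(\mathbb{Y}_T) & \geq l(\hat{\phi}^{i+1},\hat{\phi}^i) = l(\hat{\phi}^i,\hat{\phi}^i) + \mathcal{L}(\hat{\phi}^{i+1},\hat{\phi}^i) - \mathcal{L}(\hat{\phi}^i,\hat{\phi}^i) \\
& = L_{\hat{\phi}^i}(\mathbb{Y}_T) + \mathcal{L}(\hat{\phi}^{i+1},\hat{\phi}^i) - \mathcal{L}(\hat{\phi}^i,\hat{\phi}^i),
\end{align*}
and rearranging yields \eqref{LLcalLcalL} immediately.

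For the equality clause, the sole inequality used in the chain is the lower bound evaluated at $\phi = \hat{\phi}^{i+1}$, whose slack is precisely the divergence $\mathbb{E}_{\hat{\phi}^i}( \log [ p_{\hat{\phi}^i}(\mathbb{S}_{T+1}|\mathbb{Y}_T) / p_{\hat{\phi}^{i+1}}(\mathbb{S}_{T+1}|\mathbb{Y}_T) ] \mid \mathbb{Y}_T)$. By Gibbs' inequality this is nonnegative and vanishes if and only if the two conditional laws coincide almost everywhere, giving the stated iff condition. The only genuine subtlety — and the step I would treat most carefully — is the strictness of Gibbs' inequality: I must invoke that the KL divergence is zero exactly when $p_{\hat{\phi}^{i+1}}(\mathbb{S}_{T+1}|\mathbb{Y}_T) = p_{\hat{\phi}^i}(\mathbb{S}_{T+1}|\mathbb{Y}_T)$ as distributions, rather than merely when the scalar objectives agree, so that the equality characterization is truly an identification of the conditional densities.
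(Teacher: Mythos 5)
Your proof is correct. The paper itself states Lemma~\ref{lemma:EM_proof} without proof, deferring to the standard EM literature; your argument is exactly the classical Dempster--Laird--Rubin decomposition $L_\phi(\mathbb{Y}_T)=l(\phi,\hat{\phi}^i)+\mathrm{KL}\bigl(p_{\hat{\phi}^i}(\mathbb{S}_{T+1}|\mathbb{Y}_T)\,\|\,p_{\phi}(\mathbb{S}_{T+1}|\mathbb{Y}_T)\bigr)$, correctly built on the lower bound $l$ from Lemma~\ref{lemma:EM_proof_new}, with the tightness at $\phi=\hat{\phi}^i$ and the strict form of Gibbs' inequality handled properly, so it supplies both the inequality \eqref{LLcalLcalL} and the equality characterization (understood, as usual, up to almost-everywhere equality of the conditional laws).
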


  Lemma~\ref{lemma-LcalL-Jac} provides a relationship between the gradients of 
 $L_\phi ({\mathbb{Y}_T})$ and $\mathcal{ L} (\phi,\hat{\phi}^i)$ evaluated at $\phi=\hat{\phi}^i$,
 called Fisher's identity \cite{douc2014nonlinear}.

 \begin{lemma} \label{lemma-LcalL-Jac} Consider $L_\phi ({\mathbb{Y}_T})$ and $\mathcal{L}(\phi,\hat{\phi}^{i})$ defined 
in \eqref{Lphi} and \eqref{Ltheta_kheta}, respectively, with 
a  known parameter estimate $\hat{\phi^i}$.
Then, 
 \begin{align} \label{JacobianLcalL}
       \frac{\partial L_\phi ({\mathbb{Y}_T})} {\partial \phi} \Big|_{\phi= \hat{\phi}^i} =   \frac{\partial \mathcal{L} (\phi, \hat{\phi}^i )} {\partial \phi} \Big|_{\phi= \hat{\phi}^i}
 \end{align}
 \end{lemma}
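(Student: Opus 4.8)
The plan is to recognize (\ref{JacobianLcalL}) as the classical Fisher identity, whose essential content is that the score of the conditional density $p_\phi(\mathbb{S}_{T+1}\mid\mathbb{Y}_T)$ has zero conditional mean. First I would invoke Bayes' rule to write the elementary factorization
\[
\log p_\phi(\mathbb{Y}_T) = \log p_\phi(\mathbb{S}_{T+1},\mathbb{Y}_T) - \log p_\phi(\mathbb{S}_{T+1}\mid\mathbb{Y}_T).
\]
Since the left-hand side is $L_\phi(\mathbb{Y}_T)$ and does not depend on $\mathbb{S}_{T+1}$, taking the conditional expectation $\mathbb{E}_{\hat\phi^i}(\cdot\mid\mathbb{Y}_T)$ of both sides leaves it unchanged and produces
\[
L_\phi(\mathbb{Y}_T) = \mathcal{L}(\phi,\hat\phi^i) - \mathcal{H}(\phi,\hat\phi^i),
\]
where I have identified the first term with $\mathcal{L}(\phi,\hat\phi^i)$ through its definition (\ref{Ltheta_kheta}) and abbreviated $\mathcal{H}(\phi,\hat\phi^i) \triangleq \mathbb{E}_{\hat\phi^i}(\log p_\phi(\mathbb{S}_{T+1}\mid\mathbb{Y}_T)\mid\mathbb{Y}_T)$.

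Differentiating this identity with respect to $\phi$ reduces the claim to showing that $\nabla_\phi \mathcal{H}(\phi,\hat\phi^i)$ vanishes at $\phi=\hat\phi^i$. To see this, I would write $\mathcal{H}$ as the integral $\int p_{\hat\phi^i}(\mathbb{S}_{T+1}\mid\mathbb{Y}_T)\,\log p_\phi(\mathbb{S}_{T+1}\mid\mathbb{Y}_T)\,d\mathbb{S}_{T+1}$, in which the dependence on $\phi$ resides solely in the logarithm. Passing the gradient under the integral and using $\nabla_\phi \log p_\phi = (\nabla_\phi p_\phi)/p_\phi$, the weighting density $p_{\hat\phi^i}$ cancels exactly against the denominator $p_\phi$ at the point $\phi=\hat\phi^i$, so that
\[
\nabla_\phi \mathcal{H}(\phi,\hat\phi^i)\big|_{\phi=\hat\phi^i} = \int \nabla_\phi p_\phi(\mathbb{S}_{T+1}\mid\mathbb{Y}_T)\,d\mathbb{S}_{T+1}\,\Big|_{\phi=\hat\phi^i}.
\]
Interchanging the gradient and the integral once more, the right-hand side equals $\nabla_\phi\big(\int p_\phi(\mathbb{S}_{T+1}\mid\mathbb{Y}_T)\,d\mathbb{S}_{T+1}\big)|_{\phi=\hat\phi^i} = \nabla_\phi(1) = \mathbf{0}$, because a conditional probability density integrates to one for every admissible $\phi$.

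The only genuine obstacle is analytic rather than algebraic: the two interchanges of differentiation and integration need justification. I would appeal to the standing differentiability hypothesis on $L_\phi(\mathbb{Y}_T)$ and $\mathcal{L}(\phi,\hat\phi^i)$ imposed just after (\ref{Ltheta_kheta}), combined with dominated convergence, which applies here because every density involved is Gaussian or Gaussian-conditional and therefore smooth in $\phi$ with locally integrable derivatives over the compact set $\Phi$. Substituting the vanishing of $\nabla_\phi\mathcal{H}$ at $\phi=\hat\phi^i$ back into the differentiated factorization then yields $\nabla_\phi L_\phi(\mathbb{Y}_T)|_{\phi=\hat\phi^i} = \nabla_\phi \mathcal{L}(\phi,\hat\phi^i)|_{\phi=\hat\phi^i}$, which is exactly (\ref{JacobianLcalL}).
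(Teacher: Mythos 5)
Your proof is correct. The paper itself offers no proof of this lemma --- it simply states it as Fisher's identity and cites Douc et al.\ --- and your argument (decomposing $L_\phi(\mathbb{Y}_T)=\mathcal{L}(\phi,\hat\phi^i)-\mathcal{H}(\phi,\hat\phi^i)$ via Bayes' rule and then showing the score of the conditional density has zero conditional mean at $\phi=\hat\phi^i$, so $\nabla_\phi\mathcal{H}$ vanishes there) is precisely the standard derivation of that identity, including the appropriate care about interchanging differentiation and integration.
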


The property of monotonic convergence of EM undisputedly holds; see, e.g.,   \cite{gibson2005robust,wu1983convergence}. The result is summarized in the following lemma;
see, e.g., Theorem~2 of \cite{wu1983convergence}.

 {\begin{lemma}  \label{lemma-limit}
Let $\hat{\phi}^{i} \in \Phi, \; i\in 1,2, \cdots,$ be the policy parameter estimates 
 recursively generated by $\hat\phi^{i+1} = \hat{ \phi}^{i*}$ according to \eqref{maxlmaxcalL}. 
Then the limit point $\lim_{i \rightarrow \infty} \hat{\phi}^{i} = \hat{\phi}_{EM}$ exists and is a stationary point of $L_\phi(\mathbb{Y}_T)$. Also, $L_{\hat{\phi}^{i}}(\mathbb{Y}_T)$ converges monotonically to $L_{\hat{\phi}_{EM}} (\mathbb{Y}_T)$
as $i$ goes to $\infty$. 
 \end{lemma}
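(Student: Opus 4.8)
The plan is to follow the classical monotone-convergence argument for EM (as in Wu's Theorem~2), adapted to the quadratic surrogate structure that the earlier results furnish. First I would establish that the observation log-likelihood is monotone along the iterates: since $\hat\phi^{i+1}=\hat\phi^{i*}$ maximizes $\mathcal{L}(\cdot,\hat\phi^i)$ by \eqref{maxlmaxcalL}, we have $\mathcal{L}(\hat\phi^{i+1},\hat\phi^i)\ge\mathcal{L}(\hat\phi^i,\hat\phi^i)$, so Lemma~\ref{lemma:EM_proof} together with \eqref{LLcalLcalL} gives $L_{\hat\phi^{i+1}}(\mathbb{Y}_T)\ge L_{\hat\phi^i}(\mathbb{Y}_T)$. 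Because $\Phi$ is compact and $L_\phi(\mathbb{Y}_T)$ is continuous (indeed differentiable) in $\phi$, the sequence $\{L_{\hat\phi^i}(\mathbb{Y}_T)\}$ is bounded above, and a monotone bounded sequence converges to some finite $L^\star$.

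Next, I would identify the limit points as stationary points. The M-step defines a map $M:\Phi\to\Phi$, $M(\hat\phi^i)=\arg\max_\phi\mathcal{L}(\phi,\hat\phi^i)$, which is single-valued and continuous because Theorem~\ref{theorem:global_maximizer} guarantees $-\nabla^2_\phi\mathcal{L}(\phi,\hat\phi^i)>0$, so $\mathcal{L}(\cdot,\hat\phi^i)$ is strictly concave with a unique maximizer depending continuously on $\hat\phi^i$. By Bolzano--Weierstrass on the compact set $\Phi$, $\{\hat\phi^i\}$ has a convergent subsequence $\hat\phi^{i_j}\to\phi^\infty$. Passing to the limit in $\hat\phi^{i_j+1}=M(\hat\phi^{i_j})$ and using continuity of both $M$ and $L_\phi$, together with $L_{\hat\phi^{i_j}}(\mathbb{Y}_T)\to L^\star$ and $L_{\hat\phi^{i_j+1}}(\mathbb{Y}_T)\to L^\star$, forces $L_{M(\phi^\infty)}(\mathbb{Y}_T)=L_{\phi^\infty}(\mathbb{Y}_T)$; the equality clause of Lemma~\ref{lemma:EM_proof} then yields $p_{M(\phi^\infty)}(\mathbb{S}_{T+1}|\mathbb{Y}_T)=p_{\phi^\infty}(\mathbb{S}_{T+1}|\mathbb{Y}_T)$, so $\phi^\infty$ is a fixed point, $M(\phi^\infty)=\phi^\infty$. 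A fixed point satisfies $\nabla_\phi\mathcal{L}(\phi,\phi^\infty)|_{\phi=\phi^\infty}=0$, and Fisher's identity \eqref{JacobianLcalL} of Lemma~\ref{lemma-LcalL-Jac} converts this into $\nabla_\phi L_\phi(\mathbb{Y}_T)|_{\phi=\phi^\infty}=0$; thus every subsequential limit is a stationary point of $L_\phi(\mathbb{Y}_T)$ at the common level $L^\star$.

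Finally, to upgrade subsequential convergence to convergence of the whole sequence to a single $\hat\phi_{EM}$, I would first show $\|\hat\phi^{i+1}-\hat\phi^i\|\to0$, which follows because the convergent monotone likelihood cannot increase indefinitely while the strict concavity of $\mathcal{L}$ bounds the per-step likelihood gain below by a positive-definite quadratic form in $\hat\phi^{i+1}-\hat\phi^i$. Standard topology then implies the set of limit points is a compact connected subset of the stationary set. The nondegeneracy from Theorem~\ref{theorem:global_maximizer} makes the fixed points of the continuous map $M$ locally isolated, so a connected set of them collapses to a single point $\hat\phi_{EM}$, giving $\lim_{i\to\infty}\hat\phi^i=\hat\phi_{EM}$ and, by continuity, monotone convergence $L_{\hat\phi^i}(\mathbb{Y}_T)\to L_{\hat\phi_{EM}}(\mathbb{Y}_T)$.

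The main obstacle is exactly this last step: monotonicity and compactness alone deliver only subsequential limit points that are stationary, and promoting this to convergence of the sequence itself genuinely requires the isolation-plus-connectedness argument (or, alternatively, a direct appeal to Wu's Theorem~2 under the present regularity). The quadratic, strictly concave structure of $\mathcal{L}$ supplied by Theorems~\ref{uniqueness1} and~\ref{theorem:global_maximizer} is precisely what makes the fixed-point set well-behaved enough to close this gap; without it, EM could in principle drift among a continuum of equal-likelihood stationary points.
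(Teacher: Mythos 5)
Your proposal should first be measured against what the paper actually does: the paper gives no proof of Lemma~\ref{lemma-limit} at all, but defers it entirely to the literature (``see, e.g., Theorem~2 of \cite{wu1983convergence}''). Your first two steps faithfully reconstruct what that citation delivers: monotonicity of $L_{\hat\phi^i}(\mathbb{Y}_T)$ via Lemma~\ref{lemma:EM_proof} and \eqref{LLcalLcalL}, boundedness on the compact $\Phi$, and stationarity of subsequential limit points via continuity of the M-step map and Fisher's identity \eqref{JacobianLcalL}. These parts are sound, modulo one caveat you share with the paper: since $\Phi$ is a compact set, the maximizer in \eqref{maxlmaxcalL} may lie on the boundary, in which case $\nabla_\phi\mathcal{L}(\phi,\phi^\infty)|_{\phi=\phi^\infty}=0$ fails and Fisher's identity yields only a first-order (KKT-type) condition rather than a zero gradient; interiority of the iterates is tacitly assumed throughout.

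The genuine gap is in your final step, and it is exactly where you locate it — but your proposed fix does not work. You claim that the nondegeneracy in Theorem~\ref{theorem:global_maximizer} makes the fixed points of $M$ locally isolated. It does not: Theorem~\ref{theorem:global_maximizer} asserts strict concavity of the surrogate $\mathcal{L}(\cdot,\hat\phi^i)$ in its \emph{first} argument for each fixed second argument, which guarantees the M-step output is unique, but says nothing about the geometry of the fixed-point set $\{\phi : M(\phi)=\phi\}$. That set coincides with the stationary set of $L_\phi(\mathbb{Y}_T)$, and it can be a continuum even when every surrogate is strictly concave, because the surrogate moves with its second argument; isolation would require nonsingularity of $\nabla^2_\phi L_\phi(\mathbb{Y}_T)$ at stationary points, which is neither assumed in the paper nor derivable from Theorems~\ref{uniqueness1} and \ref{theorem:global_maximizer}. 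Wu's Theorem~2 — the result the paper actually invokes — gives only monotone convergence of the likelihood values and stationarity of limit points; convergence of the iterate sequence itself is Wu's Theorems~5--6 and requires $\|\hat\phi^{i+1}-\hat\phi^i\|\to 0$ \emph{plus} discreteness (or a singleton property) of the stationary set at the limiting likelihood level, an extra hypothesis the paper's lemma statement silently absorbs. Your quadratic-gain argument for $\|\hat\phi^{i+1}-\hat\phi^i\|\to 0$ is correct and even goes beyond the paper (the surrogate is exactly quadratic in $\phi$, and compactness of $\Phi$ plausibly yields a uniform eigenvalue bound on $-\nabla^2_\phi\mathcal{L}$), so you legitimately obtain a compact connected limit set inside the stationary set; but collapsing that connected set to a single $\hat\phi_{EM}$ still needs the discreteness hypothesis, and your appeal to Theorem~\ref{theorem:global_maximizer} cannot supply it. As written, your proof establishes strictly less than the lemma claims — the same shortfall, it should be said, that afflicts the paper's own citation of Wu's Theorem~2.
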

 
 }

\bibliographystyle{IEEEtran}
\bibliography{IEEEabrv,autosam}

\begin{IEEEbiography}[{\includegraphics[width=1in,height=1.25in,clip,keepaspectratio]{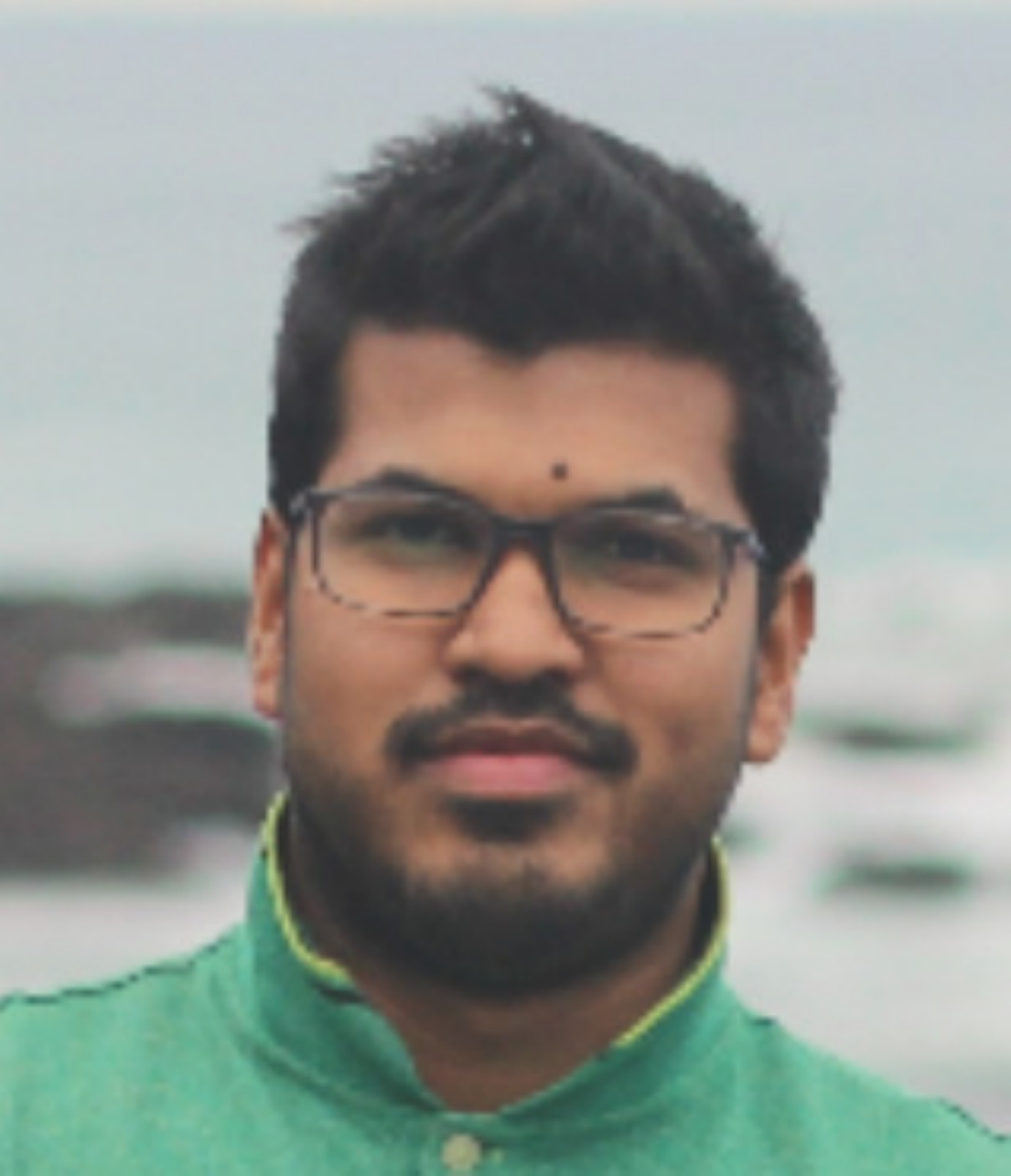}}]
{Prakash Mallick} received the B.Tech degree from National Institute of Technology, Rourkela, India and the M.Eng degree from the University of Melbourne, Australia in 2012 and 2017, respectively. He is currently a third year
Ph.D student at the University of Newcastle, Australia. His research interests include model based reinforcement learning, probabilistic inference in systems and control, and non-linear control.   \end{IEEEbiography}

\begin{IEEEbiography}[{\includegraphics[width=1in,height=1.25in,clip,keepaspectratio]{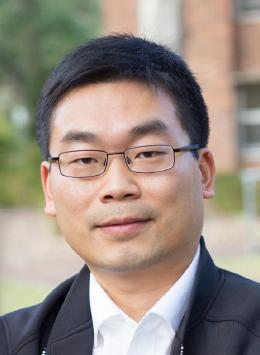}}]
{Zhiyong Chen} received the B.E. degree from the University of Science and Technology of China, and the M.Phil. and Ph.D. degrees from the Chinese University of Hong Kong, in 2000, 2002 and 2005, respectively. He worked as a Research Associate at the University of Virginia during 2005-2006. He joined the University of Newcastle, Australia, in 2006, where he is currently a Professor. He was also a Changjiang Chair Professor with Central South University, Changsha, China. His research interests include non-linear systems and control, biological systems, and multi-agent systems. He is/was an associate editor of Automatica, IEEE Transactions on Automatic Control and IEEE Transactions on Cybernetics.  \end{IEEEbiography}

\vfill



%





\end{document}